\documentclass{article}
\usepackage[left=1in, right=1in, top=1in, bottom=1in]{geometry}
\usepackage{enumitem}
\usepackage{authblk}
\usepackage{graphicx}	
\usepackage{amsmath, amsthm, amssymb}
\usepackage{xspace}
\usepackage{comment}
\usepackage{hyperref}
\usepackage[dvipsnames]{xcolor}
\usepackage{algorithm}
\usepackage{algpseudocode}
\usepackage{todonotes}
\usepackage{mathtools}
\usepackage[capitalize]{cleveref}
\usepackage{ifthen}
\usepackage[mathlines]{lineno}

\newcommand{\algprobm}[1]{\textsc{#1}\xspace}

\newcommand{\betacc}[1]{\ifthenelse{\equal{#1}{1}}{\exists^{\log n}}{\exists^{\log^{#1}n}}} % {\beta_{#1}}
\newcommand{\alphacc}[1]{\ifthenelse{\equal{#1}{1}}{\forall^{\log n}}{\forall^{\log^{#1}n}}} % {\alpha_{#1}}

\newcommand{\Soc}{\text{Soc}}

\renewcommand{\setminus}{\mysetminus}
\newcommand{\mysetminusD}{\raisebox{.8pt}{\hbox{\tikz{\draw[line width=0.6pt,line cap=round] (3.5pt,0pt) -- (0,5.2pt);}}}}
\newcommand{\mysetminusT}{\mysetminusD}
\newcommand{\mysetminusS}{\raisebox{.5pt}{\hbox{\tikz{\draw[line width=0.45pt,line cap=round] (2.2pt,0) -- (0,3.8pt);}}}}
\newcommand{\mysetminusSS}{\raisebox{.35pt}{\hbox{\tikz{\draw[line width=0.4pt,line cap=round] (1.5pt,0) -- (0,2.8pt);}}}}

\newcommand{\mysetminus}{\mathbin{\mathchoice{\mysetminusD}{\mysetminusT}{\mysetminusS}{\mysetminusSS}}}

\theoremstyle{plain}
\newtheorem{theorem}{Theorem}[section]
\newtheorem{proposition}[theorem]{Proposition}
\newtheorem{corollary}[theorem]{Corollary}
\newtheorem{lemma}[theorem]{Lemma}

\newtheorem{observation}[theorem]{Observation}

\theoremstyle{definition}
\newtheorem{definition}[theorem]{Definition}
\newtheorem{remark}[theorem]{Remark}
\newtheorem{question}[theorem]{Question}

\DeclareMathOperator{\Aut}{Aut}

\DeclareMathOperator{\rad}{Rad}

\DeclareMathOperator{\poly}{poly}

\newcommand*{\ComplexityClass}[1]{\ensuremath{\mathsf{#1}}\xspace}

\newcommand{\FOPLL}{\ensuremath{\textsf{FOLL}^{O(1)}}}

\newcommand*{\LogSpace}{\ComplexityClass{L}}

\newcommand{\AC}{\ComplexityClass{AC}}
\newcommand{\SAC}{\ComplexityClass{SAC}}

\newcommand{\ACz}{\ComplexityClass{AC^0}}

\newcommand{\qACz}{\ComplexityClass{quasiAC^0}}
\newcommand{\FOLL}{\ComplexityClass{FOLL}}

\newcommand{\DLOGTIME}{\ComplexityClass{DLOGTIME}} %
\title{Parallel Algorithms for Group Isomorphism via Code Equivalence}

\author[1]{Michael Levet}
\affil[1]{Department of Computer Science, College of Charleston}

\begin{document}
\maketitle
%\linenumbers
\begin{abstract}
In this paper, we exhibit $\textsf{AC}^{3}$ isomorphism tests for coprime extensions $H \ltimes N$ where $H$ is elementary Abelian and $N$ is Abelian; and groups where $\text{Rad}(G) = Z(G)$ is elementary Abelian and $G = \text{Soc}^{*}(G)$. The fact that isomorphism testing for these families is in $\textsf{P}$ was established respectively by Qiao, Sarma, and Tang (STACS 2011), and Grochow and Qiao (CCC 2014, \textit{SIAM J. Comput.} 2017). 

The polynomial-time isomorphism tests for both of these families crucially leveraged \emph{small} (size $O(\log |G|)$) instances of \algprobm{Linear Code Equivalence} (Babai, SODA 2011). Here, we combine Luks' group-theoretic method for \algprobm{Graph Isomorphism} (FOCS 1980, \textit{J. Comput. Syst. Sci.} 1982) with the fact that $G$ is given by its multiplication table, to implement the corresponding instances of \algprobm{Linear Code Equivalence} in $\textsf{AC}^{3}$.

As a byproduct of our work, we show that isomorphism testing of arbitrary central-radical groups is decidable using $\textsf{AC}$ circuits of depth $O(\log^3 n)$ and size $n^{O(\log \log n)}$. This improves upon the previous bound of $n^{O(\log \log n)}$-time due to Grochow and Qiao (\emph{ibid}.).
\end{abstract}

\thispagestyle{empty}

\newpage

\setcounter{page}{1}

\section{Introduction}
The \algprobm{Group Isomorphism} problem (\algprobm{GpI}) takes as input two finite groups $G$ and $H$, and asks if there exists an isomorphism $\varphi : G \to H$. When the groups are given by their multiplication (Cayley) tables, it is known that $\algprobm{GpI}$ belongs to $\textsf{NP} \cap \textsf{coAM}$. The generator-enumeration strategy has time complexity $n^{\log_{p}(n) + O(1)}$, where $n$ is the order of the group and $p$ is the smallest prime dividing $n$. This strategy was independently discovered by Felsch and Neub\"user \cite{FN} and Tarjan (see \cite{MillerTarjan}). The parallel complexity of the generator-enumeration strategy has been gradually improved \cite{LiptonSnyderZalcstein, Wolf, WagnerThesis, ChattopadhyayToranWagner, TangThesis}. Collins, Grochow, Levet, and Weiß recently improved the bound for the generator enumeration strategy to
\[
\exists^{\log^{2} n} \forall^{\log n} \exists^{\log n}\textsf{DTISP}(\text{polylog}(n), \log(n)),
\]
which can be simulated by depth-$4$ $\textsf{quasiAC}^{0}$ circuits of size $n^{O(\log n)}$ \cite{CGLWISSAC}. Algorithmically, the best known bound for \algprobm{GpI} is $n^{(1/4)\log(n)+O(1)}$, due to Rosenbaum \cite{Rosenbaum2013BidirectionalCD} and Luks \cite{LuksCompositionSeriesIso} (see \cite[Sec. 2.2]{GR16}). Even the impressive body of work on practical algorithms for this problem in more succinct input models, led by Eick, Holt, Leedham-Green and O'Brien (e.\,g., \cite{BEO02, ELGO02, BE99, CH03}) still results in an $n^{\Theta(\log n)}$-time algorithm in the general case (see \cite[Page 2]{WilsonSubgroupProfiles}).

In practice, such as working with computer algebra systems, the Cayley model is highly unrealistic. Instead, the groups are given by generators as permutations or matrices, or as black-boxes. In the setting of permutation groups, \algprobm{GpI} belongs to $\textsf{NP}$ \cite{LuksReduction}. When the groups are given by generating sets of matrices, or as black-boxes, $\algprobm{GpI}$ belongs to $\textsf{Promise}\Sigma_{2}^{p}$ \cite{BabaiSzemeredi}; it remains open as to whether $\algprobm{GpI}$ belongs to $\textsf{NP}$ or $\textsf{coNP}$ in such succinct models.  In the past several years, there have been significant advances on algorithms with worst-case guarantees on the serial runtime for special cases of this problem; we refer to \cite{GQCoho, DietrichWilson, GrochowLevetWL} for a survey.

Key motivation for \algprobm{GpI} comes from its relationship to the \algprobm{Graph Isomorphism} problem (\algprobm{GI}). When the groups are given verbosely by their multiplication tables, \algprobm{GpI} is $\textsf{AC}^{0}$-reducible to $\algprobm{GI}$ \cite{ZKT}. On the other hand, there is no reduction from \algprobm{GI} to \algprobm{GpI} computable by $\textsf{AC}$ circuits of depth $o(\log n / \log \log n)$ and size $n^{\text{polylog}(n)}$ \cite{ChattopadhyayToranWagner}. In light of Babai's breakthrough result that $\algprobm{GI}$ is quasipolynomial-time solvable \cite{BabaiGraphIso}, $\algprobm{GpI}$ in the Cayley model is a key barrier to improving the complexity of $\algprobm{GI}$. There is considerable evidence suggesting that $\algprobm{GI}$ is not $\textsf{NP}$-complete \cite{Schoning, BuhrmanHomer, ETH, BabaiGraphIso, GILowPP, ArvindKurur}. As verbose $\algprobm{GpI}$ reduces to $\algprobm{GI}$, this evidence also suggests that $\algprobm{GpI}$ is not $\textsf{NP}$-complete. 

In this paper, we will investigate the parallel complexity of the \algprobm{Group Isomorphism} problem. There are two key motivations for this. The first comes from \algprobm{GI}, whose best known lower bound is $\textsf{DET}$ \cite{Toran}, which is a subclass of $\textsf{TC}^{1}$. There is a large body of work spanning almost $40$ years on $\textsf{NC}$ algorithms for $\algprobm{GI}$-- see \cite{LevetRombachSieger} for a survey. In contrast, the work on $\textsf{NC}$ algorithms for \algprobm{GpI} is very new, and \algprobm{GpI} is strictly easier than \algprobm{GI} under $\ACz$-reductions \cite{ChattopadhyayToranWagner}. So it is surprising that we know more about parallelizing subclasses of \algprobm{GI} than subclasses of \algprobm{GpI}.

The second key motivation comes from the fact that complexity-theoretic lower bounds for \algprobm{GpI} remain open, even against depth-$2$ $\textsf{AC}$ circuits. In contrast, \algprobm{GpI} is solvable using depth-$4$ $\qACz$ circuits \cite{CGLWISSAC}. Isomorphism testing is in $\textsf{P}$ for a number of families of groups, and closing the gap between $\textsf{P}$ and $\ACz$ for special cases is a key stepping stone in trying to characterize the complexity of the general \algprobm{GpI} problem.

We now turn to stating our main results. %We will first need the following definition.

\subsection{Isomorphism Testing of Coprime Extensions}

Before stating our first main theorem, we will first recall the following notation from Qiao, Sarma, and Tang \cite{QST11}. Let $\mathcal{H}(\mathcal{A}, \mathcal{E})$ be the class of coprime extensions $H \ltimes N$, where $N$ is Abelian and $H$ is elementary Abelian.

\begin{theorem}[cf. {\cite{QST11}}] \label{thm:MainCoprime}
There exists a uniform $\textsf{AC}^{3}$ algorithm that, given groups $G_1, G_2$ by their multiplication tables, decides if $G_1, G_2 \in \mathcal{H}(\mathcal{A}, \mathcal{E})$; and if so, decides if $G_{1} \cong G_{2}$.
\end{theorem}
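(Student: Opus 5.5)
We follow the structure of the polynomial-time algorithm of Qiao, Sarma, and Tang \cite{QST11} and parallelize each stage, working from the multiplication tables. There are three stages: recognition and decomposition, reduction to a family of small code-equivalence instances, and solving those instances.

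\emph{Recognition and decomposition.} First we decide membership in $\mathcal{H}(\mathcal{A},\mathcal{E})$. For coprime $H \ltimes N$ with $N$ Abelian, $N$ is a normal Hall subgroup, so it is determined by $G$; concretely, $N$ is the set of elements whose order divides $|N|$. From the table we compute element orders in $\textsf{AC}^1$ by repeated squaring, and the prime factorization of $n=|G|$ in $\ACz$. For each divisor $m$ of $n$ with $\gcd(m,n/m)=1$, we let $N_m=\{g : g^m=1\}$ and check four properties in $\textsf{AC}^1$ or below: $N_m$ is a subgroup of order $m$, $N_m$ is normal, $N_m$ is Abelian, and a complement exists. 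For the last property, Schur--Zassenhaus guarantees a complement, and it is a Hall subgroup. It is elementary Abelian exactly when all Sylow subgroups of $G/N_m$ are, that is, when $G/N_m$ is elementary Abelian. This can be checked on the quotient table. A complement $H$ itself can then be found by taking a small generating set (size $O(\log n)$) of $G/N_m$ and lifting it: each lift $g$ is replaced by the unique element of $gN_m$ whose order is coprime to $m$, namely an appropriate power of $g$. We then verify that these lifts commute and have prime order. Since there are only polynomially many candidate $m$, all of this parallelizes.

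\emph{Reduction to code equivalence.} Next we decompose the action. Write $N=\prod_p N_p$ as the product of its Sylow subgroups. Since $H$ is Abelian and acts coprimely on each $N_p$, each $N_p$ decomposes (Maschke/Fitting) into $\Omega$-layers and homogeneous components indexed by characters of $H$ over the appropriate field. Following \cite{QST11}, $G_1\cong G_2$ if and only if, for each prime $p$ and each isomorphism type of $N_p$-layer, the multisets of characters (viewed as vectors in the dual $\hat H\cong \mathbb{F}_q^d$, $d=O(\log n)$) are equivalent under $\mathrm{GL}(H)$ simultaneously across all components. This is a \algprobm{Linear Code Equivalence} instance of length and dimension $O(\log n)$, with permutations restricted to preserve labels given by the $N$-side isomorphism types. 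Building this instance requires computing the simultaneous eigenspace decomposition of $N_p$ under $H$. Because $|N|\le n$, we can do this by brute force over the Cayley table, for example by enumerating cyclic $H$-submodules and intersecting kernels. This fits in $\textsf{AC}^2$, or in $\textsf{AC}^3$ with linear algebra over small fields.

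\emph{Solving the instances (main obstacle).} The final stage is solving these code-equivalence instances in $\textsf{AC}^3$. Babai's algorithm is $2^{O(k)}$ time for codes of length $k$, and since $k=O(\log n)$ this is $\poly(n)$; however, it is inherently sequential. Following the abstract, the plan is to use Luks' method. We view the code as a bipartite structure on $O(\log n)$ coordinates and reduce equivalence to computing a coset of $\mathrm{Sym}(k)$ (with labels) that maps one code to the other. Since $k=O(\log n)$, we can enumerate all $2^{O(k)}=\poly(n)$ subsets and vectors of the codes, so the codes themselves are explicit objects of polynomial size. We then process the coordinates by Luks-style individualization/orbit-splitting: the action of $\mathrm{GL}_d(q)$ on the codeword set is a group of polynomial degree ($q^d\le n$). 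At each step, we either fix a coordinate or refine via set stabilizers of bounded-size pieces, which gives $O(\log n)$ rounds. Each round should need point-stabilizer and set-stabilizer computations in permutation groups of polynomial degree, and these lie in $\textsf{NC}$ for groups with bounded non-Abelian composition factors, or directly in $\textsf{AC}^2$ here thanks to the Cayley access. Together the $O(\log n)$ rounds give depth $O(\log^3 n)$. The step I expect to be hardest is keeping each round within $O(\log^2 n)$ depth while maintaining the coset representations uniformly. A fallback is to guess the images of a basis of $\hat H$ directly: there are $d=O(\log n)$ basis vectors, but $n^{O(\log n)}$ choices are too many. So I would rely on the Luks approach and, if needed, exploit the small field and dimension to replace group-theoretic subroutines with tabulated linear algebra.
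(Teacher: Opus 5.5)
\paragraph{Stages 1--2: a different but workable route.} The paper takes $N$ to be the product of the Abelian normal Sylow subgroups. It then passes from Abelian $N_p$ to the elementary Abelian case via Le Gall's map $\Psi_p$, decomposes the representation in $\textsf{AC}^1$, and factors $\Phi_q$ over $\mathbb{F}_p$ in $\textsf{NC}^3$ to fix a companion matrix $M$. Finally it branches over the at most $2^k$ indexing tuples. Your homocyclic-layer decomposition plays the role of $\Psi_p$.

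You still need the analogue of fixing $M$. An irreducible $\mathbb{F}_p$-constituent corresponds to a vector of $\hat H$ only up to its Frobenius orbit $\langle p\rangle\le\mathbb{Z}_q^\times$ and a choice of primitive $q$th root of unity. That choice must be made consistently for $G_1$ and $G_2$; over $\mathbb{F}_p$ there are in general no eigenvectors, only isotypic components.

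Your insistence on a single $\varphi\in\mathrm{GL}(H)$ shared by all primes and all layers, with label-preserving permutations, is the right formulation, and you should keep it. Consider $\mathbb{Z}_2\times D_{15}$ and $S_3\times D_5$. Their $H\ltimes N_3$ are isomorphic, and so are their $H\ltimes N_5$, yet the groups are not isomorphic (their centers have orders $2$ and $1$). So independent per-prime tests do not suffice.

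\paragraph{The gap: Stage 3.} Stage 3 is where the difficulty of the theorem lives, and what you describe there is not an algorithm.

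\begin{itemize}
\item Individualizing coordinates over $\Theta(\log n)$ rounds means branching over up to $k$ images per round. That gives $k^{\Theta(\log n)}=n^{\Theta(\log\log n)}$ branches unless some group-theoretic mechanism collapses them, and you supply none.
\item The subroutines you invoke are not available. You work with groups of polynomial degree, such as $\mathrm{GL}_\ell(q)$ acting on $\hat H$. For such groups, set stabilizer, set transporter and coset intersection are \algprobm{GI}-hard in general, and no $\textsf{NC}$ algorithm is known.
\item These groups do not have bounded non-Abelian composition factors either (they involve $\mathrm{PSL}_\ell(q)$, or $A_k$ with $k=\Theta(\log n)$). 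So Luks-type bounds do not apply.
\item ``Cayley access'' does not help. The table is for $G$, not for $\mathrm{GL}_\ell(q)$ or $\mathrm{Sym}(k)$, whose orders $q^{\Theta(\ell^2)}$ and $k!$ can be superpolynomial in $n$.
\end{itemize}

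You yourself leave the per-round depth bound open, so the $O(\log^3 n)$ accounting has nothing underneath it.

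\paragraph{The missing idea.} Keep every permutation-group computation on a domain of size $O(\log n)$. On such a domain the generic $\textsf{NC}$ algorithms of depth $\mathrm{polylog}(m)$ run in depth $\poly(\log\log n)$. These cover order, membership, pointwise stabilizers, orbits, minimal block systems and kernels. The Grochow--Johnson--Levet coset intersection runs in the same depth, which is what lets the paper replace Luks's sequential chain rule by parallel intersections.

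The paper reaches such a domain through Babai's reduction:
\begin{enumerate}
\item Row-reduce the generator matrix to $[I_d\mid A_1]$.
\item Branch in parallel over the $\binom{m}{d}\le 2^m=\poly(n)$ possible images of the information set.
\item Reduce each branch to isomorphism of $\mathbb{F}$-edge-colored $d\times(m-d)$ bipartite graphs on $m=O(\log n)$ vertices.
\end{enumerate}

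These small instances are solved by Luks's layered method on the restriction maps $\Aut_e(X_{r+1})\to\Aut_e(X_r)$. The kernel is a product of symmetric groups. The image comes from the color-automorphism recursion, using orbits, minimal block systems and Kantor--Luks--Mark transversals. This gives depth $O(\log^2 n\cdot\poly(\log\log n))$. The label constraint then costs one more small coset intersection (\algprobm{Generalized Code Equivalence}).
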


There has been considerable work on polynomial-time isomorphism tests for coprime and tame extensions. Le Gall \cite{Gal09} exhibited a polynomial-time isomorphism test for groups where the normal Hall subgroup $N$ was Abelian, and the complement $H$ was a cyclic group of order coprime to $|N|$. Subsequently, Qiao, Sarma, and Tang \cite{QST11} extended Le Gall's result, exhibiting polynomial-time isomorphism tests for the cases when (i) $H$ was $O(1)$-generated, and (ii) when $H$ was an arbitrary elementary Abelian group. Babai and Qiao \cite{BQ} subsequently gave a polynomial-time isomorphism test for the more general family of groups with Abelian Sylow towers. Grochow and Qiao \cite{GQ15} later gave a polynomial-time isomorphism test for tame extensions.

The parallel complexity for isomorphism testing of coprime extensions was first investigated by Grochow and Levet \cite{GrochowLevetWL}, who showed that some constant-dimensional Weisfeiler--Leman algorithm identifies, in $O(1)$-rounds, all coprime extensions $H \ltimes N$, where $N$ is Abelian and $H$ is $O(1)$-generated. Consequently, they obtained that the isomorphism problem for this family belongs to $\textsf{L}$. In his thesis, Brachter \cite{BrachterThesis} showed that the class of groups with Abelian Sylow towers has Weisfeiler--Leman dimension $O(\log \log n)$, but did not control for rounds. Consequently, Brachter's work yields $n^{O(\log \log n)}$-time bounds for isomorphism testing of this class. It remains open whether even the class $\mathcal{H}(\mathcal{A}, \mathcal{E})$ has bounded Weisfeiler--Leman dimension. Prior to this paper, it was open whether isomorphism testing of groups in $\mathcal{H}(\mathcal{A}, \mathcal{E})$ was in $\textsf{NC}$. 

\paragraph{Methods.} The polynomial-time isomorphism test for $\mathcal{H}(\mathcal{A}, \mathcal{E})$ \cite{QST11}  crucially leveraged \emph{small} (size $O(\log |G|)$) instances of \algprobm{Linear Code Equivalence}. Thus, in order to obtain our parallel bounds, we will establish the following.

\begin{theorem} \label{thm:MainCodeEq}
Let $m \in O(\log n)$, and let $C_{1}, C_{2}$ be linear codes of length $m$ given by their generator matrices. We can decide equivalence between $C_{1}$ and $C_{2}$, as well as compute the coset of equivalences, using an $\textsf{AC}$ circuit of depth $O((\log^{2} n) \cdot \poly(\log \log n))$ and size $\poly(n)$.
\end{theorem}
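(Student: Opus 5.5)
We reduce code equivalence to hypergraph isomorphism and run Luks' method for bounded-rank hypergraphs in parallel. The small length $m \in O(\log n)$ is what lets us touch every codeword and every permutation group element explicitly. Let $C_1, C_2 \le \mathbb{F}_q^m$ have dimension $k$. The order of the ambient group is at least $q^k$ (the codes come from subgroups of $G$ of that size), so each code has at most $q^k \le \poly(n)$ codewords, and we can list them all in $\AC^0$ from the generator matrices. For permutational equivalence, we view each codeword $c$ as a coloured subset of the coordinates $[m]$: position $i$ gets colour $c_i$. This turns $C_j$ into a coloured hypergraph $X_j$ on vertex set $[m]$ with $\poly(n)$ hyperedges. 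Equivalences $C_1 \to C_2$ are exactly the isomorphisms $X_1 \to X_2$, and the set of equivalences is a coset $\Aut(X_1)\sigma$ in $\mathrm{Sym}(m)$. For monomial equivalence, we replace $[m]$ by $[m]\times \mathbb{F}_q^{*}$, which is still of size $O(\log n \cdot q)$. A cleaner route is the standard reduction from monomial to permutational equivalence via the code $\{(\lambda c)\}$, which again keeps the length polylogarithmic.

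To test hypergraph isomorphism we use Luks' divide-and-conquer over string isomorphism. The string is the indicator of the edge set, indexed by all $q$-coloured subsets of $[m]$, and the acting group is $\mathrm{Sym}(m)$. Since $m = O(\log n)$, any subgroup has order at most $m! = n^{O(\log\log n)}$. The key point is that we should not enumerate the group. Instead we follow Luks' recursion: we compute an orbit decomposition (intransitive case) and a minimal block system (imprimitive case). At a primitive group we use the bound $|G| \le m^{O(\log m)}$ for primitive groups not containing $A_m$. The giant case $A_m, S_m$ is handled directly, since it only requires comparing the multisets of coloured edge-types, which is a counting/sorting task in $\AC^0$. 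Each level of recursion replaces a string problem by at most $\poly(m)$, or $m^{O(\log m)} = 2^{O((\log\log n)^2)}$, subproblems. These are solved in parallel and combined by coset intersection / joining of cosets.

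The depth accounting is as follows. The recursion depth is $O(\log m)$ for the halving in the intransitive split. A primitive step reduces to the point stabiliser chain, with total depth $O(\log m)$ levels. Each level needs the permutation-group primitives on degree $m$: membership, orbits, blocks, generating sets of cosets, and pointwise stabilisers. On degree $m = O(\log n)$ these are computable in $\AC$ depth $O(\log^2 m)$ or via $\NC$ algorithms in $\poly(\log\log n)$ depth. The bookkeeping over $\poly(n)$ hyperedges, namely checking that a group element maps the edge multiset correctly and sorting/hashing edge sets, costs $O(\log n)$ depth. Multiplying gives $O(\log m)\cdot O(\log n)\cdot \poly(\log\log n)$ or, conservatively, $O(\log^2 n)\poly(\log\log n)$ depth.

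For size, the branching factor per level, multiplied over the $O(\log m)$ levels, must stay polynomial in $n$. The intransitive case is fine because it works on sub-strings. The risky part is the primitive case, where we branch into $m^{O(\log m)}$ cosets at each of up to $O(\log m)$ nested levels. This is the main obstacle I expect. A naive bound gives $m^{O(\log^2 m)}$, which is only $n^{o(1)}$ because $m=O(\log n)$, so it is acceptable. The real difficulty is making sure the total branching over the whole tree is $\poly(n)$ while Luks' reduction of the string domain is preserved. I would first try bounding the total number of leaves by $|{\rm Sym}(m)|$-type quantities restricted to the primitive components, i.e.\ $\prod m_i^{O(\log m_i)} \le 2^{O(\log^2 m)} = n^{o(1)}$. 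Finally, the coset of equivalences is output as generators of $\Aut(X_1)$ together with a representative $\sigma$, assembled along the recursion.
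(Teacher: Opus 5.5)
\textbf{Main gap: the giant case.} Your treatment of the giant case is wrong. In Luks' string-isomorphism recursion, a giant is harmless only when it acts as $A_\Omega$ or $S_\Omega$ on the index set $\Omega$ of the string itself. In that situation the colour-preserving elements form a Young subgroup, and the test reduces to a letter count.

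In your setup, $\Omega$ is the set of $q$-coloured subsets of $[m]$ (that is, $\mathbb{F}_q^m$). Your condition ``$G \geq A_m$'' refers to the action on $[m]$, and $\mathrm{Sym}(m)$ is nowhere near a giant on $\Omega$. Your root call already has $G = \mathrm{Sym}(m)$. So your rule would decide code equivalence by comparing the multisets of coloured edge-types, i.e.\ the complete weight enumerators. These are only an invariant: for $2$-uniform hypergraphs the same rule would identify $C_6$ with two disjoint triangles.

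\textbf{Domain confusion in the rest of the recursion.} You take orbits, blocks and the bound $m^{O(\log m)}$ for non-giant primitive groups on $[m]$. But hyperedges do not decompose along orbits or blocks of $[m]$, so no sub-string divide-and-conquer is available there. On $\Omega$, where Luks' splitting is valid, the primitive constituents can have order as large as $m! = n^{\Theta(\log\log n)}$; for instance, $\mathrm{Sym}(m)$ acts primitively on the orbit of $j$-subsets. So your $n^{o(1)}$ leaf count is unjustified, and a naive count gives only quasipolynomial size. Controlling these large primitive quotients is the real content of the theorem, and the proposal gives no argument for it.

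\textbf{Smaller issues.} The bound $q^k \le \poly(n)$ on the number of codewords (and $q^m \le \poly(n)$ for your string domain) is imported from the group-theoretic applications. It is not part of the statement, which allows any field with unit-cost operations. The monomial-equivalence detour is unnecessary, since equivalence here means $B = TAP$ with $P$ a permutation matrix.

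\textbf{The paper's route.} The paper never lists codewords. It first brings the generator matrices to the form $[I_d \mid A_1]$ by Gaussian elimination. It then guesses in parallel which $d$ coordinates map to the identity block ($\binom{m}{d} \le 2^m$ guesses). Via Babai's reduction, each guess becomes an isomorphism problem for $d \times (m-d)$ bipartite graphs with $\mathbb{F}$-coloured edges. These graphs have $O(\log n)$ vertices and are handled by Luks' layer-by-layer computation of $\Aut_e(X_r)$ for $r = 1, \ldots, m-1$. Each layer is a Color Automorphism recursion of depth $O((\log n)\poly(\log\log n))$, with transversals from Kantor--Luks--Mark, giving $O(\log^2 n)\cdot\poly(\log\log n)$ overall.

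The transitive case there branches over the cosets of the kernel $K$ of the action on a block system. The size bound rests on $[G:K]$ being small, which the paper attributes to Babai--Cameron--P\'alfy. That index bound is exactly the ingredient your sketch would need to supply in some form.
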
 

In establishing the bound of $(2+o(1))^m$-time for \algprobm{Linear Code Equivalence}, Babai reduces to isomorphism testing of graphs with $O(m)$ vertices \cite{BCGQ}. When $m \in O(\log n)$, applying Babai's quasipolynomial \algprobm{GI} procedure \cite{BabaiGraphIso} would only yield bounds\footnote{Babai \cite{BabaiGraphIso} established quasipolynomial-time bounds, and the exponent of the exponent was analyzed and improved by Helfgott \cite{HelgottGIAnalysis}.} of $m^{O(\log^2 m)} = (\log n)^{O((\log \log n)^2)}$ runtime, which is not sufficient to obtain bounds of $\textsf{NC}$. 

Instead, we utilize Luks' group-theoretic method \cite{LuksBoundedValence} to handle these small graphs. While Luks' claims of polynomial-time isomorphism testing are for graphs of bounded valence, the graphs we will consider need not satisfy this condition. However, our graphs are \emph{small}, having only $O(\log n)$ many vertices. Thus, we are able to bring to bear a standard suite of $\textsf{NC}$ algorithms for permutation groups \cite{BabaiLuksSeress, McKenzieThesis}. As the group $G$ is specified by its multiplication table and our graphs have $O(\log n) = O(\log |G|)$ many vertices, we are able to implement solutions for each subroutine of \cite{BabaiLuksSeress, McKenzieThesis} using an $\textsf{AC}$ circuit of depth $\poly(\log \log n)$ and size $\poly(n)$. Grochow, Johnson, and Levet \cite{GrochowJohnsonLevet} introduced the notation $\FOPLL$ to refer to the class of languages decidable by uniform $\textsf{AC}$ circuits of depth $\poly(\log \log n)$ and size $\poly(n)$. We combine this suite of \FOPLL permutation group algorithms, as well as the \FOPLL procedure for \textit{small} instances of \algprobm{Coset Intersection} due to Grochow, Johnson, and Levet \cite{GrochowJohnsonLevet}, to bear. 

Our strategy of leveraging a \textit{small} (size $O(\log |G|)$) permutation domain in order to get efficient instances of permutation group algorithms has been a common theme in \algprobm{Group Isomorphism} to obtain polynomial-time isomorphism tests. Crucially, the corresponding \textit{small} instances of \textsf{GI}-hard problems such as \algprobm{Linear Code Equivalence}, \algprobm{Twisted Code Equivalence}, \algprobm{Coset Intersection}, and \algprobm{Set Transporter} can be implemented in time $\poly(|G|)$---see, e.\,g., \cite{BMWGenus2, LewisWilson, IvanyosQ19, QST11, BQ, GQ15, BCGQ, BCQ, GQCoho}. By taking advantage of both a small permutation degree and existing parallel permutation algorithms \cite{BabaiLuksSeress, McKenzieThesis}, we are able to improve the complexity-theoretic upper-bounds for isomorphism testing of the groups in Theorem~\ref{thm:MainCoprime} and Theorem~\ref{thm:MainCoho} (the latter of which will be stated shortly) from $\textsf{P}$ \cite{QST11, GQCoho} to $\textsf{AC}^{3}$.

\subsection{Isomorphism Testing of Central-Radical Groups}
We now turn to our second main result. We first recall that the \emph{solvable radical} is the unique maximal solvable normal subgroup of $G$. We will denote the solvable radical of $G$ as $\text{Rad}(G)$. Here, we will consider \emph{central-radical groups}, which are precisely those groups where the solvable radical $\rad(G) = Z(G)$. In particular, we will restrict attention to the case when $\rad(G) = Z(G)$ is elementary Abelian.

\begin{theorem}[cf. {\cite[Theorem~C]{GQCoho}}] \label{thm:MainCoho}
Let $G_{1}, G_{2}$ be groups given by their multiplication tables. Suppose that $G_1, G_2$ have central, elementary Abelian radicals. Then isomorphism can be decided, and the coset of isomorphisms found, in $\textsf{AC}^{3}$, if either:
\begin{enumerate}[label=(\alph*)]
\item $G_{1}/\rad(G_1)$ is a direct product of non-Abelian simple groups; or
\item $G_{1}/\rad(G_1)$ is a direct product of perfect groups, each of order $O(1)$.
\end{enumerate}
\end{theorem}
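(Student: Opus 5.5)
We follow Grochow and Qiao's proof of \cite[Theorem~C]{GQCoho} and parallelize each stage; the only \textsf{GI}-hard subroutine it needs is a small instance of code equivalence, which Theorem~\ref{thm:MainCodeEq} handles.

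\textbf{Step 1: structural preprocessing.} We first compute $Z(G_i)$ and $\rad(G_i)$, and check that $\rad(G_i)=Z(G_i)$ is elementary Abelian. From the multiplication table, $Z(G_i)$ is first-order definable. We obtain $\rad(G_i)$ with the known $\textsf{NC}$ procedures for solvable radicals of Cayley groups, or verify it directly: $Z(G_i)$ is the radical iff $G_i/Z(G_i)$ has no nontrivial Abelian minimal normal subgroup. We then form $Q_i=G_i/Z(G_i)$ and check hypothesis (a) or (b). In case (a) we decompose $Q_i$ into its simple direct factors, which are its minimal normal subgroups. In case (b) we take the finest direct decomposition into $O(1)$-size perfect factors. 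All of this should be in $\textsf{AC}^{1}$ or better. Since $G_i$ is a central extension of $Z=\mathbb{F}_p^d$ by $Q_i$, it is determined by a cohomology class in $H^2(Q_i,Z)=\mathrm{Hom}(H_2(Q_i),\mathbb{F}_p)\otimes Z$, using $H_1(Q_i)=0$ because $Q_i$ is perfect. The Schur multiplier splits over the direct factors, $H_2(Q_i)=\bigoplus_j H_2(S_j)$, and each $H_2(S_j)$ has bounded (indeed tiny) $p$-rank.

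\textbf{Step 2: reduction to code equivalence.} Grochow--Qiao show that $G_1\cong G_2$ iff there is an isomorphism $\alpha:Q_1\to Q_2$ together with $\beta\in\mathrm{GL}(Z)$ carrying one class to the other. Writing $Q\cong\prod_j S_j$, $\mathrm{Aut}(Q)$ acts through $\prod_j\mathrm{Out}(S_j)$ and permutations of isomorphic factors. The class is then a $d\times k$ matrix over $\mathbb{F}_p$, where $k=\sum_j\mathrm{rank}_p H_2(S_j)$, with columns grouped by factor. Because each factor contributes $O(1)$ columns and $\mathrm{Out}(S_j)$ acts on them by a bounded group, isomorphism becomes equivalence of the row space under $\mathrm{GL}_d$ on the left and block-monomial transformations on the right. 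This is (twisted) linear code equivalence of length $m=k=O(\log|G|)$, since the number of factors is at most $\log|G|$. In case (b), the bounded size of each factor lets us enumerate isomorphism types and their $\mathrm{Aut}$-actions on $H_2$ by table lookup. We would reduce the bounded block twists to plain code equivalence by the standard gadget of expanding each block into a bounded number of coordinates, keeping the length $O(\log n)$.

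\textbf{Step 3: assembling the answer.} We compute the multiplier data and a normalized cocycle for each factor in parallel. We invoke Theorem~\ref{thm:MainCodeEq} to decide equivalence and obtain the coset of equivalences, using depth $O(\log^2 n\cdot\poly(\log\log n))$. We then lift the coset back to a coset of isomorphisms $G_1\to G_2$: we combine the found $\alpha$ with a section and correct by the coboundary, which amounts to solving a linear system over $\mathbb{F}_p$ in $\textsf{NC}^2$. The total depth is $O(\log^3 n)$ with polynomial size.

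\textbf{Main obstacle.} The delicate part is Step 2. We must compute the cohomology class and the action of $\mathrm{Out}(S_j)$ on $H_2(S_j)$ in parallel, and carry out the twisted-to-plain code equivalence reduction without blowing up the length beyond $O(\log n)$. First we would try to avoid computing $H_2(S_j)$ abstractly. Instead we work concretely with the preimage $\tilde S_j\le G_i$ of each factor, whose derived subgroup $[\tilde S_j,\tilde S_j]\cap Z$ gives the column space contributed by $S_j$ directly from the multiplication table. The matrix is then read off from the subspaces $[\tilde S_j,\tilde S_j]\cap Z$, at least for the relevant isomorphism invariants.
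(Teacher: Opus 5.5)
Your high-level plan matches the paper's: reduce cohomologically to a small code-equivalence instance, then apply Theorem~\ref{thm:MainCodeEq}. The gap is in Step~2, which you yourself flag as the main obstacle. It is never carried out, and the workaround you propose fails.

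To set up the code, you need the class of each restriction $f|_{S_j}$ in explicit coordinates. In those coordinates $\Aut(Z)$ must act on the left and automorphisms of the factors blockwise on the right, and the coordinates must be identified consistently across isomorphic factors and across $G_1,G_2$. The subgroup $[\tilde S_j,\tilde S_j]\cap Z$ does not give this. It is only the image of the class $\phi_j\in\mathrm{Hom}(M(S_j),Z)$, i.e.\ the column span of the block, and it forgets $\phi_j$ itself.

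A counterexample: let $T=\mathrm{PSL}_5(11)$. Then $M(T)\cong\mathbb{Z}_5$ with universal cover $\mathrm{SL}_5(11)$, and $\Aut(T)$ acts on $H^2(T,\mathbb{F}_5)\cong\mathbb{F}_5$ by $\pm1$: diagonal automorphisms act trivially, the graph automorphism by inversion, and there are no field automorphisms. Take $Q=T\times T$ and $Z=\mathbb{F}_5$. The classes $(1,1)$ and $(1,2)$ give two groups satisfying (a). They are not isomorphic by Lemma~\ref{lem:GQMainLemma}, since the $\Aut(Z)\times\Aut(Q)$-orbit of $(1,1)$ consists only of the vectors $(t,\pm t)$ with $t\neq 0$. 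Yet in both groups $[\tilde S_j,\tilde S_j]\cap Z=Z$ for $j=1,2$. So ``reading the matrix off these subspaces'' discards exactly the information the code must carry. Computing $H_2(S_j)$ and its $\mathrm{Out}(S_j)$-action in parallel for a large simple factor is left open.

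The paper avoids $H_2$ entirely:
\begin{enumerate}
\item It works in the explicit cochain spaces $C^2(T_i,A)$, which have polynomial dimension.
\item It obtains a coboundary basis in $\mathsf{AC}^0$ (Proposition~\ref{prop:GQ6.8}), and an $\Aut(A)$-equivariant complement $W_i$ with projection $\pi_i$ by $\mathsf{NC}^2$ linear algebra (Proposition~\ref{prop:6.10}).
\item Following Grochow--Qiao, it encodes each block as a $k\times 17$ (resp.\ $k\times D^2$) matrix.
\item It handles automorphisms of the factors by brute force over the polynomially many diagonals of $\Aut(Q)$, listed in $\mathsf{FL}$ by generator enumeration (Lemma~\ref{lem:EnumerateDiagonals}).
\item It forces full rank by splitting off a central direct factor (Lemma~\ref{lem:7.5}).
\item It finishes with code equivalence of length $O(\ell)$ plus a coset intersection with the block-permutation group.
\end{enumerate}

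Two further points in your Step~2 also need repair.
\begin{enumerate}
\item The twisting groups are not bounded in case (a). For $T=\mathrm{PSL}_p(2^{p-1})$, the field automorphisms act on $H^2(T,\mathbb{F}_p)\cong\mathbb{F}_p$ through $\langle 2\rangle\le\mathbb{F}_p^\times$, which has order at least $\log_2 p$. Length $O(\log n)$ can still be kept, but via $|\mathrm{Out}(T)|=O(\log|T|)$, not boundedness.
\item Expanding blocks into orbits does not reduce to \emph{plain} code equivalence. An arbitrary coordinate permutation may split blocks or match non-isomorphic factors. For example, with $Q=\mathrm{PSL}_2(7)^2\times\mathrm{PSL}_2(11)$ and $Z=\mathbb{F}_2$, the classes $(1,0,0)$ and $(0,0,1)$ give non-isomorphic groups but plain-equivalent codes. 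You must therefore intersect with the group of admissible block moves, as in Corollary~\ref{cor:GeneralizedCodeEquivalence}.
\end{enumerate}
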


\noindent This improves upon the previous polynomial-time bounds due to Grochow and Qiao \cite{GQCoho}.	

Key motivation for isomorphism testing of central-radical groups comes from the following series of characteristic subgroups, known as the \emph{Babai--Beals filtration} \cite{Babai1999GroupsSA} or the \textit{solvable-radical model} (see e.g., \cite{CH03}): 
\[
1 \leq \rad(G) \leq \Soc^{*}(G) \leq \text{PKer}(G) \leq G.
\]
We will now explain the terms on this chain. Let $\pi : G \to G/\rad(G)$ be the natural projection map. $\Soc^{*}(G)$ is the preimage of the socle $\Soc(G/\rad(G))$ (the \emph{socle} of a group $H$ is the direct product of the minimal normal subgroups of $H$), under $\pi$. The group $\Soc^{*}(G)/\rad(G) = \Soc(G/\rad(G))$ is the direct product of non-Abelian simple groups $T_1, \ldots, T_{k}$. The conjugation action of $G$ on $\Soc(G/\rad(G))$ induces a permutation action $\varphi$ on $T_1, \ldots, T_k$. Now $\text{PKer}(G) := \text{Ker}(\varphi)$. Note that Theorem~\ref{thm:MainCoho}(a) considers precisely those groups where $\rad(G) = Z(G)$ is elementary Abelian, and $G = \Soc^{*}(G)$. While these assumptions might feel restrictive, Grochow and Qiao essentially leveraged the full weight of \cite{GQCoho} in order to obtain a polynomial-time isomorphism test, combining deep mathematical insights (group cohomology) with demanding algorithmic techniques (see under Methods, below).

The class of groups without Abelian normal subgroups (a.k.a. \emph{Fitting-free} groups) consists precisely of those groups $G$ where $\rad(G)$ is trivial. There has been considerable work on isomorphism testing of Fitting-free groups. In the multiplication table model, it took a series of two papers \cite{BCGQ, BCQ} to obtain a polynomial-time isomorphism test. Grochow, Johnson, and Levet \cite{GrochowJohnsonLevet} recently improved this bound from $\textsf{P}$ to $\textsf{AC}^{3}$. Fitting-free groups have also received significant attention, from the perspective of the Weisfeiler--Leman algorithm \cite{BrachterSchweitzerWLLibrary, BrachterThesis, GrochowLevetWL, GLDescriptiveComplexity, GrochowJohnsonLevet}.

In the setting of permutation groups and their quotients, Cannon and Holt \cite{CH03} have exhibited an efficient practical algorithm for isomorphism testing of Fitting-free groups. Much of their framework is polynomial-time computable \cite{KantorLuksQuotients, DasThakkarMPD, LevetSrivastavaThakkar}. On the other hand, Grochow, Johnson, and Levet \cite{GrochowJohnsonLevet} exhibited a reduction from \algprobm{Linear Code Equivalence} (and hence, \algprobm{Graph Isomorphism}) to isomorphism testing of Fitting-free groups where $G = \text{PKer}(G)$, given by generating sequences of permutations. The best known bound for \algprobm{Linear Code Equivalence} is $(2+o(1))^{n}$, due to Babai \cite{BCGQ}. Consequently, it seems unlikely that the framework of Cannon and Holt \cite{CH03} achieves a worst-case polynomial-time bound.

We now return to the multiplication table model. In light of the efficient isomorphism tests for Fitting-free groups \cite{BCGQ, BCQ, GrochowJohnsonLevet}, the next step is to consider groups where $\rad(G)$ is non-trivial. In \cite{BCGQ}, the authors posed the problem of handling isomorphism for central-radical groups. Grochow and Qiao \cite{GQCoho} exhibited an $n^{O(\log \log n)}$-time algorithm to decide isomorphism of arbitrary central-radical groups, as well as groups where $\rad(G)$ was elementary Abelian (not necessarily central). Consequently, Grochow and Qiao obtained polynomial-time bounds for a number of important such subclasses. In a different work, Grochow and Qiao exhibited a polynomial-time isomorphism test for groups where $\rad(G)$ had elementary Abelian Sylow towers, and $G/\rad(G)$ was \emph{tame} \cite[Corollary~2]{GQ15}.  Brooksbank, Grochow, Li, Qiao, and Wilson \cite{BGLQW} also exhibited an $n^{O(\log \log n)}$-time algorithm for classes of groups where $\rad(G)$ was a class-$2$ $p$-group of exponent $p \neq 2$, with bounded genus such that $G$ acts on $\rad(G)$ by inner automorphisms. Le Gall and Rosenbaum \cite{GR16} considered the case where $\rad(G)$ was arbitrary, and established that in a precise sense, the projective special linear group constitutes a key obstacle for \algprobm{GpI}. In his dissertation, Brachter \cite{BrachterThesis} showed that when $\rad(G)$ was $O(1)$-generated and $Q$ is an arbitrary Fitting-free group, the groups of the form $Q \ltimes \rad(G)$ Weisfeiler--Leman dimension $O(\log \log n)$. Brachter also established that the class of central-radical groups where $\rad(G)$ is $O(1)$-generated and $G/\rad(G)$ is a direct product of $O(1)$-generated perfect groups, has bounded Weisfeiler--Leman dimension. To the best of our knowledge, there has been no other complexity-theoretic work on isomorphism testing for groups with non-trivial solvable radical.

\paragraph{Methods.} We will now outline the key ideas and techniques involved in proving Theorem~\ref{thm:MainCoho}. As a conceptual starting point, let us first consider the setting of direct products. The Remak--Krull--Schmidt theorem provides that two direct products $G = G_1 \times \cdots \times G_k$ and $H = H_1 \times \cdots \times H_{\ell}$ are isomorpic if and only if $k = \ell$, and for some permutation $\sigma \in \text{Sym}(k)$, $G_{i} \cong H_{\sigma(i)}$ for all $i \in [k]$. Grochow and Qiao established that for the groups in Theorem~\ref{thm:MainCoho}, an analogous result holds \cite[Lemma~7.4]{GQCoho} (recalled as Lemma~\ref{lem:7.4}). 

Let $G_{1}, G_{2}$ be two groups under consideration in Theorem~\ref{thm:MainCoho}. For $i = 1, 2$, write $G_{i}/\rad(G_i) = \prod_{j=1}^{k} T_{i,j}$. Let $\pi_{i} : G_{i} \to G_{i}/\rad(G_{i})$ be the natural projection map. Rather than considering the isomorphism type of the $T_{i,j}$, we will need to consider the isomorphism type of each $\pi_{i}^{-1}(T_{i,j})$. It is then necessary to determine whether there exists a permutation $\sigma \in \text{Sym}(k)$ such that $\pi_{1}^{-1}(T_{1,j})$ and $\pi_{2}^{-1}(T_{2,\sigma(j)})$ are isomorphic. However, there is an additional complication, in that we require a single isomorphism $\alpha : Z(G_{1}) \cong Z(G_{2})$, such that for all $j \in [k]$, $\alpha$ can be extended to an isomorphism of $\pi_{1}^{-1}(T_{1,j})$ and $\pi_{2}^{-1}(T_{2,\sigma(j)})$. Grochow and Qiao handled this step using \emph{small} (size $O(\log |G|)$) instances of \algprobm{Linear Code Equivalence} and \algprobm{Coset Intersection}. In these places, we will use our parallel implementation for \algprobm{Linear Code Equivalence}  (Theorem~\ref{thm:MainCodeEq}), as well as the \FOPLL procedure for \textit{small} instances of \algprobm{Coset Intersection} due to Grochow, Johnson, and Levet \cite{GrochowJohnsonLevet}. 

Still, we will need an additional ingredient. In order to determine the isomorphism types of the $\pi_{i}^{-1}(T_{i,j})$ ($i = 1,2$; $j \in [k]$), Grochow and Qiao utilized \cite[Theorem~6.1]{GQCoho}. Note that \cite[Theorem~6.1]{GQCoho} is quite powerful. Grochow and Qiao obtained one of their main results \cite[Theorem~A/Corollary~6.2]{GQCoho} as a corollary: a bound of $n^{O(\log \log n)}$-time for isomorphism testing of central-radical groups, as well as the same bound for computing the coset of isomorphisms when $\rad(G_i) = Z(G_i)$ was elementary Abelian. We will parallelize \cite[Theorem~6.1]{GQCoho}; for readability, we refer to Theorem~\ref{thm:GQCohoA}, rather than stating the exact theorem here. As a byproduct, we obtain the following complexity-theoretic improvement in isomorphism testing of arbitrary central-radical groups.

\begin{theorem}[cf. {\cite[Theorem~A/Corollary~6.2]{GQCoho}}] 
Isomorphism of central-radical groups of order $n$, given by their multiplication tables, can be decided by $\textsf{AC}$ circuits of depth $O(\log^3 n)$ and size $n^{O(\log \log n)}$. If furthermore the center is elementary Abelian, then we can decide isomorphism and compute the coset of isomorphisms using $\textsf{AC}$ circuits of depth $O(\log^2 n)$ and size $n^{O(\log \log n)}$.	
\end{theorem}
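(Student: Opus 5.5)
We would follow the structure of Grochow and Qiao's proof of \cite[Theorem~A/Corollary~6.2]{GQCoho}, which goes through their Theorem~6.1 (our Theorem~\ref{thm:GQCohoA}), and parallelize each step.

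\textbf{Reduction to extension data.} Let $G$ be central-radical and write $Z = Z(G) = \rad(G)$ and $Q = G/Z$. Then $Q$ is Fitting-free, and $G$ is a central extension of $Q$ by $Z$, determined up to equivalence by a class $[f] \in H^{2}(Q, Z)$. Two such groups $G_1, G_2$ are isomorphic if and only if there exist isomorphisms $\alpha : Z(G_1) \to Z(G_2)$ and $\beta : Q_1 \to Q_2$ that carry $[f_1]$ to $[f_2]$.

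\textbf{Step 1: computing the data.} We first compute $Z(G_i)$, $Q_i$, and a $2$-cocycle $f_i$ by picking a transversal. From the multiplication table these are $\ACz$-computable, and checking that $\rad(G_i) = Z(G_i)$ fits within the budget.

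\textbf{Step 2: listing $\operatorname{Iso}(Q_1,Q_2)$.} We use the parallel Fitting-free machinery of Grochow, Johnson, and Levet \cite{GrochowJohnsonLevet}. A Fitting-free group of order $n$ is generated by $O(\log n)$ elements, and such a generating sequence is found in parallel. The isomorphisms are then determined by images of generating sequences, so the enumeration has size $n^{O(\log n)}$. That is too large, so instead we follow \cite{GQCoho}: $\Aut(Q)$ has order at most $n^{O(\log\log n)}$ modulo pieces handled by coset methods. More precisely, $|\operatorname{Out}(Q)|$ together with the socle structure allows listing $\operatorname{Iso}(Q_1,Q_2)$ as $n^{O(\log\log n)}$ elements, which we do with that many parallel copies.

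\textbf{Step 3: the linear-algebra test for each $\beta$.} For each candidate $\beta$ we decide whether some $\alpha$ satisfies $\alpha \circ f_1 - f_2 \circ (\beta \times \beta) \in B^{2}$. Since $Z$ is Abelian and $Q$ is perfect-ish, \cite{GQCoho} reduces this via the universal coefficient decomposition $H^2(Q,Z) \cong \operatorname{Hom}(H_2(Q),Z) \oplus \operatorname{Ext}(Q^{ab},Z)$. This turns into a question about homomorphisms between finite Abelian groups of order at most $n$, given explicitly.

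\emph{Elementary Abelian center.} When $Z$ is elementary Abelian, the question is linear algebra over $\mathbb{F}_p$ of dimension $O(\log n)$. Solving the system and describing the solution coset (an affine subspace, then intersected with $\operatorname{GL}$) is in $\textsf{NC}^2$, i.e.\ depth $O(\log^2 n)$. Taking the union over all $\beta$ yields the coset of isomorphisms, and this gives the second claim.

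\emph{General Abelian center.} For general Abelian $Z$ we must decide existence of an isomorphism $\alpha$, not merely a homomorphism. We would handle this by decomposing $Z$ into Sylow parts and homocyclic layers, and testing each layer by linear algebra over $\mathbb{Z}/p^e$. The cost is an extra $\log$ factor in depth from iterating over layers, or from Smith-normal-form-type computations over $\mathbb{Z}/p^e$ that are in $\textsf{NC}^3$. This yields depth $O(\log^3 n)$.

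\textbf{Main obstacle.} The main obstacle is Step 2: implementing, in depth $\poly\log$, the $n^{O(\log\log n)}$ enumeration of isomorphisms of the Fitting-free quotients, which in \cite{GQCoho} uses the bound on automorphisms of direct products of simple groups. Our plan is to take the \cite{GrochowJohnsonLevet} description of Fitting-free groups via socle data $\Soc(Q) = T_1 \times \cdots \times T_k$. We then enumerate, in parallel, the isomorphisms of each $T_j$; each is determined by a $2$-generating pair, giving $n^{O(1)}$ choices per factor. We also enumerate a permutation $\sigma$ of the factors, restricted so that the total count stays $n^{O(\log\log n)}$. A Babai--Beals style argument is needed to bound the permutation part, since $k \le \log n$ and the factors group into isomorphism classes. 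We then extend each candidate to all of $Q$ via the faithful conjugation action on the socle, and verify it in $\ACz$. Everything else is a constant number of rounds of $\textsf{NC}^2$ or $\textsf{NC}^3$ linear algebra.
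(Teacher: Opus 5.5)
There is a genuine gap in Step~3 for a general Abelian center. Testing each homocyclic layer of $Z$ separately by linear algebra over $\mathbb{Z}/p^e$ is unsound, because $\Aut(Z)$ does not preserve a homocyclic decomposition: it contains maps that mix layers.

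\emph{Counterexample.} Take $Q=A_5$, $Z=\mathbb{Z}/2\oplus\mathbb{Z}/4$, $\iota\colon\mathbb{Z}/2\hookrightarrow\mathbb{Z}/4$, and $0\neq x\in H^2(A_5,\mathbb{Z}/2)\cong\mathbb{Z}/2$. Since $A_5$ is perfect, $\iota_*x\neq 0$ in $H^2(A_5,\mathbb{Z}/4)\cong\mathbb{Z}/2$. The automorphism $(a,b)\mapsto(a,b+\iota(a))$ of $Z$ carries the class $(x,0)$ to $(x,\iota_*x)$. So both extensions are isomorphic to $\mathrm{SL}_2(5)\times\mathbb{Z}/4$, which is central-radical. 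Yet on the $\mathbb{Z}/4$-layer you would compare $0$ with $\iota_*x\neq 0$. This fails for every $\beta$, since each $\beta^*$ is trivial on these groups of order $2$.

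\emph{Missing ingredient.} You need the Grochow--Qiao criterion (Proposition~\ref{prop:GQ6.9}). For each cyclic factor $\mathbb{Z}/p_i^{\mu_i}$, compare $\langle R^{(p_i,\mu_i)},B^2(Q,\mathbb{Z}/p_i^{\mu_i})\rangle$ for $f_1$ and $f_2^{\beta}$. Here rows of other exponents are reduced modulo $p_i^{\mu_i}$ or multiplied by $p_i^{\mu_i-\mu_j}$, which is exactly how the cross-layer homomorphisms are accounted for. Each such span equality is a linear system over a finite Abelian group, solvable in $\mathsf{NC}^3$ (McKenzie--Cook). That, not ``iterating over layers,'' is the source of the $\log^3 n$ depth. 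The universal-coefficient detour does not help: it requires $H_2(Q)$ and a splitting, which you never compute, and it leaves the same $\Aut(Z)$-orbit problem. Also, \cite{GQCoho} does not argue this way.

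Even in the elementary Abelian case, ``an affine subspace, then intersected with $\mathrm{GL}$'' is not a linear-algebra primitive; deciding whether an affine space of matrices meets $\mathrm{GL}_k$ is hard in general. What makes it tractable here is that an invertible $\alpha$ with $\alpha X=Y$ exists iff $X$ and $Y$ have the same row space. Modulo coboundaries this becomes $\mathrm{rowspan}(M_{f_1})+B^2=\mathrm{rowspan}(M_{f_2^{\beta}})+B^2$. Equivalently, as in the paper, compare row spaces after the $\Aut(A)$-equivariant projection of Proposition~\ref{prop:6.10}.

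The coset of isomorphisms also cannot be produced as a ``union over $\beta$'' of explicit solution sets. Already $|\mathrm{GL}_k(\mathbb{F}_2)|$ can be $n^{\Theta(\log n)}$, e.g.\ for $G=\mathbb{F}_2^k\times A_5$. You must output one isomorphism together with generators of $\Aut(G_1)$, assembled via Observation~\ref{obs:5.2}. This requires three things, all absent from your proposal:
\begin{enumerate}
\item generators of the stabilizer of $M_{\pi(f_1)}$ in $\mathrm{GL}_k(\mathbb{F}_p)$, which are block upper-triangular in a basis adapted to the column span;
\item lifts of the compatible pairs $(\alpha,\beta)$ to automorphisms of $G_1$;
\item the kernel $\mathrm{Hom}(Q,Z)$ of $\Aut(G_1)\to\Aut(Z)\times\Aut(Q)$, obtained from a homogeneous $\mathbb{F}_p$-linear system.
\end{enumerate}

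Conversely, your Step~2 is not where the difficulty lies. The paper simply quotes Grochow--Johnson--Levet, who list $\Aut(G/\rad(G))$ with $\mathsf{AC}$ circuits of depth $O(\log\log n)$ and size $n^{O(\log\log n)}$, and feeds this into Theorem~\ref{thm:GQCohoA}. Your socle reconstruction is plausible, with one correction: the count per factor must be $|T_j|^{O(1)}$, so that the product over factors is $n^{O(1)}$. ``$n^{O(1)}$ per factor'' would give $n^{O(\log n)}$.
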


We refer to Section~\ref{sec:Section6} for the proof.

\begin{remark}
Grochow and Qiao discuss key difficulties in extending \cite[Theorem~C]{GQCoho} to a broader class of groups. While  Theorem~\ref{thm:MainCoho} provides a complexity-theoretic improvement over the previous work of Grochow and Qiao, it runs into the same obstacles. We will briefly outline these obstacles here.

As \cite[Theorem~C]{GQCoho} and Theorem~\ref{thm:MainCoho} consider central, elementary Abelian radical groups, it is natural to ask about extending these results to the case when $\rad(G) = Z(G)$ (no assumption of elementary Abelian). Grochow and Qiao showed that \cite[Theorem~C.2]{GQCoho} can be extended to the case when $\rad(G) = Z(G)$, provided that the coset of group code equivalences of length $m$ over cyclic $p$-groups can be found in time $2^{O(m)}$ (see \cite[Observation~7.8]{GQCoho}). It is a longstanding open problem whether a $2^{O(m)}$-time algorithm exists, even to decide whether to such codes are equivalence \cite{BCGQ}.

It is also natural to ask whether we can consider central, elementary Abelian radical groups where $G/\rad(G)$ is a direct product of perfect groups that are each $O(1)$-generated. As non-Abelian finite simple groups are $2$-generated, this class of groups would generalize both of those considered in Theorem~\ref{thm:MainCoho}. However, this would yield codes of length $m \in O(n^{2} \log n)$ in the worst case. A $(2+o(1))^m$ algorithm for \algprobm{Linear Code Equivalence} would then yield a runtime of $(2+o(1))^{O(n^2 \log n)}$, which is worse than the $n^{O(\log n)}$-time bound obtained by generator-enumeration.
\end{remark}

\subsection{Further Related Work}
The work on $\textsf{NC}$ algorithms for \algprobm{GpI} is comparatively nascent compared to that of \algprobm{GI}. Indeed, much of the work involves parallelizing the generator-enumeration strategy (\emph{ibid}), which has yielded bounds of $\textsf{L}$ \cite{TangThesis} for $O(1)$-generated groups. Other families of groups known to admit $\textsf{NC}$ isomorphism tests include Abelian groups \cite{ChattopadhyayToranWagner, GrochowLevetWL, CGLWISSAC}, graphical groups arising from the CFI graphs \cite{WLGroups, CollinsLevetWL, CollinsUndergradThesis}, coprime extensions $H \ltimes N$ where $H$ is $O(1)$-generated and $N$ is Abelian \cite{GrochowLevetWL} (parallelizing a result from \cite{QST11}), groups of almost all orders \cite{CGLWISSAC} (parallelizing \cite{DietrichWilson}), and Fitting-free groups where the number of non-Abelian simple factors of the socle is $O(\log n/\log \log n)$ \cite{GrochowLevetWL} (parallelizing a result from \cite{BCGQ}). Grochow, Johnson, and Levet recently exhibited an $\textsf{AC}^{3}$ isomorphism test for the class of all Fitting-free groups \cite{GrochowJohnsonLevet}. Johnson, Levet, Vojtěchovský, and Widholm \cite{JLVWQuasigroups} recently showed that in the multiplication table model, a fully-refined direct product decomposition of a group can be computed in $\textsf{AC}^{3}$. They also exhibited an $\textsf{NC}$ isomorphism test for central quasigroups. To the best of our knowledge, no other special family of quasigroups that are not groups (beyond $O(1)$-generated quasigroups) are known to admit even a polynomial-time isomorphism test.

\section{Preliminaries}

\subsection{Groups and Codes}
\noindent \textbf{Groups.} All groups will be assumed to be finite. A \emph{Hall} subgroup of a group $G$ is a subgroup $N$ such that $|N|$ and $|G/N|$ are coprime. When a Hall subgroup is normal, we refer to $G$ as a coprime extension. We will consider the following classes of finite groups, using the notation of Qiao, Sarma, and Tang \cite{QST11}. Let $\mathcal{E}$ be the class of  elementary Abelian groups, $\prod \mathcal{E}$ be the class of direct products of elementary Abelian groups, and $\mathcal{A}$ the class of Abelian groups. For classes of finite groups $\mathcal{X}, \mathcal{Y}$, let $\mathcal{H}(\mathcal{X}, \mathcal{Y})$ denote the class of coprime extensions $H \ltimes N$, where $N \in \mathcal{X}$ and $H \in \mathcal{Y}$.

Let $G$ be a group. Given $g, h \in G$, the \textit{commutator} $[g, h] := ghg^{-1}h^{-1}$. For sets $X, Y \subseteq G$, the \textit{commutator subgroup} $[X,Y] := \langle \{ [g, h] : g \in X, h \in Y \} \rangle$. We say that $G$ is \textit{perfect} if $G = [G, G]$, and that $G$ is \textit{centerless} if $Z(G) = 1$. Let $d(G)$ be the minimum size taken over all generating sets of $G$. A \textit{basis} of an Abelian group $A$ is a set of generators $\{a_1, \ldots, a_k\}$ such that $A = \langle a_1 \rangle \times \cdots \times \langle a_k \rangle.$ 

%A \emph{subcoset} of a group $G$ is either the empty set or a coset of a subgroup of $G$.

\noindent \\ \textbf{Codes.} Let $\mathbb{F}$ be a field. Let $\text{Mat}_{n \times m}(\mathbb{F})$ denote the set of $n \times m$ matrices over the field $\mathbb{F}$. $\text{GL}_{m}(\mathbb{F})$ denotes the set of $m \times m$ invertible matrices over $\mathbb{F}$. A \textit{linear code} of length $m$ is a subspace $U \leq \mathbb{F}^{m}$. A $d \times m$ matrix $A$ over $\mathbb{F}$ \textit{generates} the code $U$ if the rows of $A$ span $U$. Let $U,W$ be $d$--dimensional codes of length $m$ over $\mathbb{F}$, generated by $d \times m$ matrices $A, B$, respectively. Then $U$ and $V$ are \emph{equivalent} if and only there exists a permutation matrix $P \in \text{GL}_{m}(\mathbb{F})$ and a matrix $T \in \text{GL}_{d}(\mathbb{F})$ such that $B = TAP$. If $A_1, A_2$ are generator matrices for two codes $U_1, U_2$ respectively, we write $\text{CodeEq}(A_1, A_2)$ for the coset of code equivalences taking $U_1$ to $U_2$.

%%%
\subsection{Group Extensions and Cohomology} \label{sec:GroupExtensions}

In this section, we recall preliminaries concerning group extensions and cohomology from \cite{GQCoho}. We will restrict attention to Abelian cohomology. Given a finite group $G$, we will consider an Abelian normal subgroup $A \trianglelefteq G$ when considering $G$ as an extension of $A$ by $Q := G/A$. Here, we denote this as $A \xhookrightarrow{\iota} G \overset{\pi}{\twoheadrightarrow} Q$, where $\iota : A \to G$ is an injection and $\text{Im}(\iota) = \text{ker}(\pi)$. As $A$ is Abelian, we write the group operation of $A$ additively, even though we denote the group operation of $G$ multiplicatively-- this is standard in this area. Despite that $A \leq G$, we will use these notations in different contexts, and so it should not cause confusion. We refer to $G$ as the \emph{total group}.

Let $\pi : G \to G/A \cong Q$ be the natural projection map. Any function $s : Q \to G$ such that $\pi(s(q)) = q$ for all $q \in Q$ is called a \emph{section} of $\pi$. Any such section gives rise to a function $f_{s} : Q \times Q \to A$ defined by $f_{s}(p,q) = s(p)s(q) \cdot s(pq)^{-1}$. We are free to choose $s(1) = \text{id}_{G}$, and then $f(1,q) = f(q,1) = 0$ for all $q \in Q$. Such sections are called \emph{normalized}. We will assume that all sections are normalized, unless otherwise stated.

For $g \in G$, the conjugation action $\theta_{g} : G \to G$ is defined by $\theta_{g}(x) = gxg^{-1}$. As the operation of $G$ is associative, we have that for all $p, q, r \in Q$:
\[
f_{s}(p,q) + f_{s}(pq,r) = \theta_{p}(f_{s}(q,r)) + f_{s}(p, qr).
\]
This is the \emph{$2$-cocycle identity}.

Any function $f : Q \times Q \to A$ is called a \emph{$2$-cochain}. If $f$ satisfies the $2$-cocycle identity with respect to $\theta$, then $f$ is called a \emph{$2$-cocycle} with respect to $\theta$. Given any homomorphism $\theta : Q \to \Aut(A)$, every $2$-cocycle with respect to $\theta$ arises as $f_{s}$ for some section $s$ of some extension $A \xhookrightarrow{} G \twoheadrightarrow Q$ with respect to $\theta$.

Given any function $u : Q \to A$, the \emph{$2$-coboundary} associated to $u$ is the function $b_{u} : Q \times Q \to A$ defined by $b_{u}(p,q) = u(p) + \theta_{p}(u(q)) - u(pq)$. Any two $2$-cocycles associated to the same extension differ by a coboundary.

The $2$-cochains form an Abelian group $C^{2}(Q,A)$ defined by pointwise addition: $(f+g)(p,q) = f(p,q) + g(p,q)$. Observe that the $2$-cocycle identity is $\mathbb{Z}$-linear. Thus, the $2$-cocycles form a subgroup of $C^{2}(Q,A)$, denoted by $Z^{2}(Q, A, \theta)$. Similarly, the $2$-coboundaries form a subgroup of $Z^{2}(Q, A, \theta)$, denoted by $B^{2}(Q,A,\theta)$.

A \emph{$2$-cohomology class} is a coset of $H^{2}(Q, A, \theta) := Z^{2}(Q,A,\theta)/B^{2}(Q,A,\theta)$. If $f \in Z^{2}(Q,A,\theta)$, we denote the corresponding cohomology class by $[f]$. It follows from the above discussion that each extension $A \xhookrightarrow{} G \twoheadrightarrow Q$ determines a single cohomology class $[f] \in H^{2}(Q,A,\theta)$. We now recall the following definition from Grochow and Qiao:

\begin{definition}[{\cite[Definition~2.1]{GQCoho}}]
Let $A$ be an Abelian group, and $Q$ an arbitrary group. Let $\theta : Q \to \Aut(A)$ be an action, and $f : Q \times Q \to A$ a $2$-cocycle ($f \in Z^{2}(Q,A,\theta)$). The pair $(\theta, f)$ is called \emph{extension data}. Two extension data for the pair $(Q, A)$ are equivalent if they have the exact same action and if the two $2$-cocycles are cohomologous (differ by a coboundary).
\end{definition}

Given an extension $A \xhookrightarrow{} G \twoheadrightarrow Q$, the associated extension data are the action $\theta$ as defined above, and any $2$-cocycle $f_{s}$ for any section $s : Q \to G$. The extension data are in general not unique-- we may choose any representative of the corresponding $2$-cohomology class. Furthermore, if the action is trivial, then this extension is called \emph{central}. If the $2$-cohomology class is trivial, then this extension is called \emph{split}, and $G = P \ltimes A$ for some subgroup $P \leq G$ isomorphic to $Q$.

We now turn to discussing when two different extension data $(\theta_1, f_1), (\theta_2, f_2)$ yield isomorphic total groups. We begin by recalling some additional notation and terminology. Recall that a subgroup $N \leq G$ is \emph{characteristic} if for all $\varphi \in \Aut(G)$, $\varphi(N) = N$. The analogous notion for isomorphisms (rather than automorphisms) is a function $\mathcal{S}$ that assigns to each group $G$ a subgroup $\mathcal{S}(G)$ such that any isomorphism $\varphi : G_1 \to G_2$ restricts to an isomorphism $\varphi|_{\mathcal{S}(G_1)} : \mathcal{S}(G_1) \to \mathcal{S}(G_2)$. We call such a function $\mathcal{S}$ a \emph{characteristic subgroup function}. Most natural characteristic subgroups encountered are characteristic subgroup functions-- for instance, the center $Z(G)$, the commutator $[G,G]$, and the solvable radical $\text{Rad}(G)$.

\begin{definition}[{\cite[Definition~2.2]{GQCoho}}]
Let $A$ be an Abelian group, and let $Q$ be an arbitrary group. Let $(\theta_1, f_1), (\theta_2, f_2)$ be extension data for $A$-by-$Q$. Then the extension data are \emph{pseudo-congruent} if there exists $(\alpha, \beta) \in \Aut(A) \times \Aut(Q)$ such that:
\begin{align}
\theta_{1}(q)(a) = \alpha^{-1}\biggr(\theta_{2}(\beta(q))(\alpha(a)) \biggr) =: \theta_{2}^{\alpha, \beta}(q)(a).
\end{align}
for all $q \in Q, a \in A$, and

\begin{align*}
f_{1}(p,q) = \alpha^{-1}(f_{2}(\beta(p), \beta(q))) + b_{u}(p,q),
\end{align*}
for all $p,q \in Q$ and for some coboundary $b_{u}$. In this case, we write $(\theta_{1}, f_{1}) \cong (\theta_{2}, f_{2})$.
\end{definition}

\begin{lemma}[see e.g., {\cite[Lemma~2.3]{GQCoho}} and {\cite[Section~2.7.4]{Holt2005HandbookOC}}] \label{lem:GQMainLemma}
Let $\mathcal{S}$ be a characteristic subgroup function. Let $G_{1}, G_{2}$ be groups, such that $\mathcal{S}(G_1)$ and $\mathcal{S}(G_2)$ are both Abelian. Then $G_1 \cong G_2$ if and only if both of the following conditions hold:
\begin{enumerate}
\item $\mathcal{S}(G_1) \cong \mathcal{S}(G_2)$ and $G_{1}/\mathcal{S}(G_1) \cong G_{2}/\mathcal{S}(G_2)$.
\item $(\theta_1, f_1) \cong (\theta_2, f_2)$, where $(\theta_i, f_i)$ is the extension data of $A \xhookrightarrow{} G_i \twoheadrightarrow Q$ ($i = 1, 2$).
\end{enumerate}
\end{lemma}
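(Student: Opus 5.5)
We prove the two directions separately. Throughout, write $A_i := \mathcal{S}(G_i)$ and $Q_i := G_i/A_i$. Condition 2 implicitly identifies $A_1$ with $A_2$ and $Q_1$ with $Q_2$ through fixed isomorphisms from condition 1; call the common groups $A$ and $Q$.

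\textbf{Forward direction.} Let $\varphi : G_1 \to G_2$ be an isomorphism. Because $\mathcal{S}$ is a characteristic subgroup function, $\varphi$ restricts to an isomorphism $\alpha_0 := \varphi|_{A_1} : A_1 \to A_2$. It also induces an isomorphism $\bar\varphi : Q_1 \to Q_2$, given by $gA_1 \mapsto \varphi(g)A_2$, which is well defined since $\varphi(A_1) = A_2$. This gives condition 1.

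For condition 2, transport everything to $A$ and $Q$. This turns $(\alpha_0, \bar\varphi)$ into a pair $(\alpha, \beta) \in \Aut(A) \times \Aut(Q)$ once the identifications are composed in. Now fix a normalized section $s_2$ of $\pi_2$. Define a section of $\pi_1$ by $s_1 := \varphi^{-1} \circ s_2 \circ \beta$, suitably transported. The inverse appearing in $\alpha^{-1}$ forces the direction: $s_2 \circ \beta = \varphi \circ s_1$, which is consistent.

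Two checks follow.
\begin{itemize}
\item Conjugation by $s_1(q)$ on $A$ corresponds under $\varphi$ to conjugation by $s_2(\beta(q))$. This gives $\theta_1(q) = \alpha^{-1}\theta_2(\beta(q))\alpha$.
\item Applying $\varphi$ to $s_1(p)s_1(q)s_1(pq)^{-1}$ gives $f_1 = \alpha^{-1}\circ f_2 \circ (\beta\times\beta)$ exactly.
\end{itemize}
The cocycle $f_1$ attached to the given extension data is only defined up to a coboundary, since a different section changes it by some $b_u$. Hence the two extension data are pseudo-congruent.

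\textbf{Reverse direction.} Suppose $(\alpha, \beta)$ witness pseudo-congruence, with coboundary $b_u$. Use the standard model of an extension: $G_i$ is isomorphic to the set $A \times Q$ with multiplication
\[
(a,p)(b,q) = \bigl(a + \theta_i(p)(b) + f_i(p,q),\, pq\bigr).
\]
This model is realized via $g \mapsto (g\, s_i(\pi_i g)^{-1},\, \pi_i g)$; one checks this is a bijective homomorphism using the definitions of $\theta_i$ and $f_i$.

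Next, define $\Phi : A \times Q \to A \times Q$ by $\Phi(a,q) := (\alpha(a + \epsilon u(q)),\, \beta(q))$, where the sign $\epsilon = \pm 1$ is chosen to match the convention $b_u(p,q) = u(p) + \theta_p(u(q)) - u(pq)$. $\Phi$ is clearly bijective. To see it is a homomorphism, expand both $\Phi((a,p)(b,q))$ and $\Phi(a,p)\Phi(b,q)$.
\begin{itemize}
\item The action terms match by equation (1).
\item The cocycle terms differ by exactly $\alpha(b_u(p,q))$, which is absorbed by the $u$-correction.
\end{itemize}

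\textbf{Main obstacle.} This last computation is routine but sign-sensitive. The only real care needed is fixing the sign and placement of $u$ so that the coboundary cancels. I would try $\Phi(a,q) = (\alpha(a - u(q)), \beta(q))$ first, and flip the sign if the expansion leaves $+2b_u$ instead of $0$.
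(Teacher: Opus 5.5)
Your proof is correct, and it is the standard argument; the paper gives no proof of its own but cites Grochow--Qiao's Lemma~2.3 and the Handbook of Computational Group Theory, which rest on the same two steps: transporting a section $s_1 := \varphi^{-1}\circ s_2\circ\beta$ for the forward direction, and for the converse an explicit isomorphism between the models $A\times Q$ twisted by $u$. To settle your sign question: with the paper's convention $b_u(p,q)=u(p)+\theta_1(p)(u(q))-u(pq)$, the map $\Phi(a,q)=(\alpha(a+u(q)),\beta(q))$ is the homomorphism, whereas $\alpha(a-u(q))$ leaves a residual $-2\alpha(b_u(p,q))$ in the first coordinate.
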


Note that if the $2$-cohomolohy class is trivial (in which case, the extension splits), Lemma~\ref{lem:GQMainLemma} yields Taunt's Lemma as a corollary: 

\begin{lemma}[{Taunt \cite{Taunt1955}}]  \label{lem:Semidirect}
Let $G = H \ltimes_{\theta} N$ and $\widehat{G} = \hat{H} \ltimes_{\hat{\theta}} \widehat{N}$. If $\alpha : H \to \widehat{H}$ and $\beta : N \to \widehat{N}$ are isomorphisms such that for all $h \in H$ and all $n \in N$,
\[
\widehat{\theta}_{\alpha(h)}(n) = (\beta \circ \theta_{h} \circ \beta^{-1})(n),
\]
then the map $(h, n) \mapsto (\alpha(h), \beta(n))$ is an isomorphism of $G \cong \widehat{G}$. Conversely, if $G$ and $\widehat{G}$ are isomorphic and $|H|$ and $|N|$ are coprime, then there exists an isomorphism of this form.
\end{lemma}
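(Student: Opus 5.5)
The statement to prove is Taunt's Lemma (Lemma~\ref{lem:Semidirect}). It has two directions.

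\textbf{Forward direction.} This is a direct computation. Write elements of $G$ as pairs $(h,n)$ with multiplication
\[
(h_1,n_1)(h_2,n_2) = (h_1h_2,\ \theta_{h_2}^{-1}(n_1)\,n_2),
\]
or whichever convention matches $H \ltimes_\theta N$. I will fix the convention $(h_1,n_1)(h_2,n_2)=(h_1h_2,\, n_1\theta_{h_1}(n_2))$ once and use it throughout. Define $\Phi(h,n) = (\alpha(h),\beta(n))$. Then
\[
\Phi\bigl((h_1,n_1)(h_2,n_2)\bigr) = \bigl(\alpha(h_1h_2),\ \beta(n_1)\,\beta(\theta_{h_1}(n_2))\bigr).
\]
The hypothesis $\widehat\theta_{\alpha(h)} = \beta\theta_h\beta^{-1}$ gives $\beta(\theta_{h_1}(n_2)) = \widehat\theta_{\alpha(h_1)}(\beta(n_2))$. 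This is exactly the product $\Phi(h_1,n_1)\Phi(h_2,n_2)$ in $\widehat G$. Since $\alpha$ and $\beta$ are bijections, so is $\Phi$, hence $\Phi$ is an isomorphism.

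\textbf{Converse.} Let $\varphi : G \to \widehat G$ be an isomorphism with $\gcd(|H|,|N|)=1$. Since $|G| = |\widehat G|$ and $N$ is a normal Hall subgroup, $N$ is the unique subgroup of $G$ of order $|N|$: it is exactly the set of elements whose order divides $|N|$. To see this, note that the image of any such element in $G/N \cong H$ has order dividing both $|N|$ and $|H|$, so the image is trivial.

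The same characterization holds in $\widehat G$, and $|\widehat N| = |N|$ because $|\widehat H| = |H|$ is forced. Indeed, $\widehat N$ is the normal Hall subgroup; I will read the statement as saying both decompositions have the coprime orders matched. Hence $\varphi(N)=\widehat N$, and I set $\beta=\varphi|_N$.

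For $H$: the subgroup $\varphi(H)$ is a complement to $\widehat N$ in $\widehat G$. By the Schur--Zassenhaus theorem, all complements of a normal Hall subgroup are conjugate. Conjugacy is immediate here when $N$ is Abelian, via vanishing of $H^1$ by coprimality; in general it uses Feit--Thompson, or the fact that one of $N$ or $H$ has odd order. So there is $g\in\widehat G$ with $g\varphi(H)g^{-1}=\widehat H$. Replace $\varphi$ by $\iota_g\circ\varphi$. This still maps $N$ onto $\widehat N$, since $\widehat N$ is normal, and now maps $H$ onto $\widehat H$. Set $\alpha=\varphi|_H$ and $\beta=\varphi|_N$ for this modified $\varphi$.

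The compatibility condition then follows from applying $\varphi$ to the relation $hnh^{-1}=\theta_h(n)$:
\[
\alpha(h)\,\beta(n)\,\alpha(h)^{-1} = \beta(\theta_h(n)),
\]
which is $\widehat\theta_{\alpha(h)} = \beta\theta_h\beta^{-1}$. Finally, $\varphi(h,n)=\varphi(h)\varphi(n)=(\alpha(h),\beta(n))$, so $\varphi$ has the required form.

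\textbf{Main obstacle.} The only nontrivial input is the conjugacy of complements, which I cite from Schur--Zassenhaus. In the paper's use case $N$ is Abelian, so only the elementary cohomological version is needed.
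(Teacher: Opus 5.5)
Your proof is correct under the reading you adopt, but it takes a different route from the paper. The paper gives no standalone argument. It cites Taunt and remarks that the statement is the split case of the extension-data criterion, Lemma~\ref{lem:GQMainLemma}: both cocycles are trivial there, so pseudo-congruence collapses to the twisted-action condition $\theta_1=\theta_2^{\alpha,\beta}$. That route needs $N$ Abelian, so that it can play the role of $\mathcal{S}(G)$; it is a characteristic subgroup function by exactly your ``elements of order dividing $|N|$'' observation. In that setting no conjugacy of complements is needed. Since $\widehat N$ is Abelian, $\widehat\theta_x$ depends only on the coset $x\widehat N$, which is the observation the paper makes in Lemma~\ref{lem:DecomposeCoprime}. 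So you may simply take $\beta=\varphi|_N$ and let $\alpha$ be the isomorphism $H\cong G/N\to\widehat G/\widehat N\cong\widehat H$ induced by $\varphi$. Compatibility then follows from $\varphi(hnh^{-1})=\varphi(h)\beta(n)\varphi(h)^{-1}$ together with $\alpha(h)\in\varphi(h)\widehat N$. Your closing remark therefore overstates what the Abelian case requires: even $H^1(H,N)=0$ is unnecessary.

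Your Schur--Zassenhaus argument buys more: it proves the statement for non-Abelian $N$, as written. There the quotient-induced $\alpha$ matches $\beta\theta_h\beta^{-1}$ only up to an inner automorphism of $\widehat N$, so your conjugation step is genuinely needed. The price is conjugacy of complements, which in general rests on Feit--Thompson.

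One correction: $|\widehat H|=|H|$ is \emph{not} forced by $G\cong\widehat G$. With trivial actions, $\mathbb{Z}_2\ltimes\mathbb{Z}_3$ and $\mathbb{Z}_3\ltimes\mathbb{Z}_2$ are both $\mathbb{Z}_6$, yet there is no isomorphism $H\to\widehat H$. So the converse as literally stated is false. Matching orders (equivalently $H\cong\widehat H$ and $N\cong\widehat N$, which the paper checks before every application) must be an added hypothesis. Your explicit reading is the right fix, but present it as an assumption, not a consequence.

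A minor slip: with your multiplication convention the pair $(h,n)$ is the element $nh$, so $\varphi(h,n)=\varphi(n)\varphi(h)$. This is harmless, and your forward direction gives the last step anyway once compatibility is established.
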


%%%
\subsection{Computational Complexity} \label{sec:Complexity}
We assume that the reader is familiar with standard complexity classes such as $\textsf{P}, \textsf{NP}, \textsf{L}$, and $\textsf{NL}$. Denote by $\mathsf{FL}$ the class of logspace computable functions. For a standard reference on circuit complexity, see \cite{VollmerText}. We consider Boolean circuits using the gates \textsf{AND}, \textsf{OR}, \textsf{NOT}, and \textsf{Majority}, where $\textsf{Majority}(x_{1}, \ldots, x_{n}) = 1$ if and only if $\geq n/2$ of the inputs are $1$. Otherwise, $\textsf{Majority}(x_{1}, \ldots, x_{n}) = 0$. 

In this paper, we will consider $\textsf{DTIME}(\log^c n)$-uniform circuit families $(C_{n})_{n \in \mathbb{N}}$, for some fixed $c \geq 1$. For this,
one encodes the gates of each circuit $C_n$ by bit strings of length $O(\log^c n)$. Then the circuit family $(C_n)_{n \geq 0}$
is called \emph{\textsf{DTIME}$(\log^c n)$-uniform}  if (i) there exists a deterministic Turing machine that computes for a given gate $u \in \{0,1\}^*$
of $C_n$ ($|u| \in O(\log^c n)$) in time $O(\log^c n)$ the type of gate $u$, where the types are $x_1, \ldots, x_n$, \textsf{NOT}, \textsf{AND}, \textsf{OR}, or \textsf{Majority} gates,
and (ii) there exists a deterministic Turing machine that decides for two given gates $u,v \in \{0,1\}^*$ of $C_n$ ($|u|, |v| \in O(\log^c n)$) and a binary encoded integer $i$ with $O(\log^c n)$ many bits in time $O(\log^c n)$ whether $u$ is the $i$-th input gate for $v$. When $c = 1$, this notion of uniformity is referred to as $\textsf{DLOGTIME}$-uniformity. For circuit families of size $\poly(n)$, we will use $\textsf{DLOGTIME}$-uniformity.

\begin{definition}
Fix $k \geq 0$. We say that a language $L$ belongs to (uniform) $\textsf{NC}^{k}$ if there exist a (uniform) family of circuits $(C_{n})_{n \in \mathbb{N}}$ over the $\textsf{AND}, \textsf{OR}, \textsf{NOT}$ gates such that the following hold:
\begin{itemize}
\item The $\textsf{AND}$ and $\textsf{OR}$ gates take exactly $2$ inputs. That is, they have fan-in $2$.
\item $C_{n}$ has depth $O(\log^{k} n)$ and uses (has size) $n^{O(1)}$ gates. Here, the implicit constants in the circuit depth and size depend only on $L$.

\item $x \in L$ if and only if $C_{|x|}(x) = 1$. 
\end{itemize}
\end{definition}

\noindent The complexity class $\AC^{k}$ is defined analogously as $\textsf{NC}^{k}$, except that the $\textsf{AND}, \textsf{OR}$ gates are permitted to have unbounded fan-in.
That is, a single $\textsf{AND}$ gate can compute an arbitrary conjunction, and a single $\textsf{OR}$ gate can compute an arbitrary disjunction. The class $\SAC^k$ is defined analogously, in which the $\textsf{OR}$ gates have unbounded fan-in but the $\textsf{AND}$ gates must have fan-in $2$.
The complexity class $\textsf{TC}^{k}$ is defined analogously as $\AC^{k}$, except that our circuits are now permitted $\textsf{Majority}$ gates of unbounded fan-in.
We also allow circuits to compute functions by using multiple output gates. 

For every $k$, the following containments are well-known:
\[
\textsf{NC}^{k} \subseteq \SAC^k \subseteq  \AC^{k} \subseteq \textsf{TC}^{k} \subseteq \textsf{NC}^{k+1}.
\]

\noindent In the case of $k = 0$, we have that:
\[
\textsf{NC}^{0} \subsetneq \AC^{0} \subsetneq \textsf{TC}^{0} \subseteq \textsf{NC}^{1} \subseteq \LogSpace \subseteq \textsf{NL} \subseteq \textsf{SAC}^{1} \subseteq \AC^{1}.
\]

\noindent We note that functions that are $\textsf{NC}^{0}$-computable can only depend on a bounded number of input bits. Thus, $\textsf{NC}^{0}$ is unable to compute the $\textsf{AND}$ function. It is a classical result that $\AC^{0}$ is unable to compute \algprobm{Parity} \cite{FSS}. The containment $\textsf{TC}^{0} \subseteq \textsf{NC}^{1}$ (and hence, $\textsf{TC}^{k} \subseteq \textsf{NC}^{k+1}$) follows from the fact that $\textsf{NC}^{1}$ can simulate the unbounded fan-in \textsf{Majority} gate.

The complexity class $\textsf{FOLL}$ is the set of languages decidable by uniform $\textsf{AC}$ circuits of depth $O(\log \log n)$ and polynomial-size \cite{BKLM}. We will use a slight generalization of this: we use $\FOPLL$ to denote the class of languages $L$ decidable by uniform $\textsf{AC}$ circuits of depth $O((\log \log n)^c)$ for some $c$ that depends only on $L$ \cite{GrochowJohnsonLevet}. It is known that $\textsf{AC}^{0} \subsetneq \FOPLL \subsetneq \textsf{AC}^{1}$, the former by a simple diagonalization argument on top of Sipser's result \cite{SipserBorel}, and the latter because the \textsf{Parity} function is in $\mathsf{AC}^1$ but not $\FOPLL$ (nor any depth $o(\log n / \log \log n)$). \FOPLL\ cannot contain any complexity class that can compute \algprobm{Parity}, such as $\mathsf{TC}^0, \mathsf{NC}^1, \mathsf{L}, \mathsf{NL}$, or $\mathsf{SAC}^1$, and it remains open whether any of these classes contain \FOPLL.

We will also be interested in $\mathsf{NC}$ circuits of quasipolynomial size (i.\,e., $2^{O(\log^k n)}$ for some constant $k$). For a circuit class $\mathcal{C} \subseteq \mathsf{NC}$, the analogous class permitting a quasipolynomial number of gates is denoted $\mathsf{quasi}\mathcal{C}$. Note that \DLOGTIME uniformity does not make sense for $\textsf{quasiNC}$, as we cannot encode gate indices using $O(\log n)$ bits. Instead, we will use $\textsf{DPOLYLOGTIME}$-uniformity for $\mathsf{quasiNC}$ \cite{BarringtonQuasipolynomial,FerrarottiGonzalezScheweTurull}.

\subsection{Permutation Group Algorithms} \label{sec:PermutationGroups}

We will consider the permutation group model, in which groups are specified succinctly by a sequence of permutations from $\text{Sym}(m)$. The computational complexity for the permutation group model will be measured in terms of $m$. We will first recall some key concepts concerning permutation groups. We refer to the textbook by Dixon \& Mortimer \cite{DixonMortimer} for many of the concepts around permutation groups that we briefly review here. A permutation group $G \leq \text{Sym}(m)$ is \emph{transitive} if the permutation domain $[m]$ is a single $G$-orbit. An equivalence relation $\sim$ on $[m]$ is $G$-invariant if $x \sim y \Leftrightarrow x^g \sim y^g$ for all $x,y \in [m], g \in G$. The discrete equivalence---in which all elements are pairwise inequivalent---and indiscrete equivalence---in which all elements are equivalent---are considered trivial. The equivalence classes of any $G$-invariant equivalence relation are called \emph{blocks} of $G$. The singleton sets and the whole set $[m]$ are considered trivial blocks; any other blocks, if they exist, are non-trivial. $G$ is \emph{primitive} if it has no non-trivial blocks, or equivalently, has no non-trivial $G$-invariant equivalence relations. 

Equivalently, a block for $G$ is a subset $\Delta \subseteq [m]$ such that for every $g \in G$, $\Delta^g$ is either disjoint from $\Delta$ or equal to $\Delta$. A \emph{system of imprimitivity} for $G$ is a collection of blocks that partition $[m]$ (equivalently, the equivalence classes of a $G$-invariant equivalence relation). If $G$ is transitive, every system of imprimitivity arises as $\{\Delta^g : g \in G\}$ for some block $\Delta$. A block is \emph{minimal} if it is non-trivial and contains no proper subset that is also a non-trivial block. Two distinct minimal blocks can intersect in at most one point.

We will now recall a standard suite of problems with known $\textsf{NC}$ solutions in the setting of permutation groups. In our setting, $m$ will be \textit{small} ($m \in O(\text{polylog}(n))$) relative to the overall input size $n$, which will yield $\FOPLL$ bounds for these problems.

\begin{lemma} \label{PermutationGroupsNC}
Let $c \in \mathbb{Z}^{+}$, and let $m \in O(\log^c n)$. Let $G \leq \text{Sym}(m)$ be given by a sequence $S$ of generators. The following problems are in $\FOPLL$ relative to $n$, that is, they have uniform $\mathsf{AC}$ circuits of depth $\poly(\log \log n)$ and $\poly(n)$ size:
\begin{enumerate}[label=(\alph*)]
\item Compute the order of $G$.
\item Decide whether a given permutation $\sigma$ is in $G$; and if so, exhibit a word $\omega$ such that $\sigma = \omega(S)$. 

\item Find the kernel of any action of $G$.
\item Find the pointwise stabilizer of $B \subseteq [m]$. 
%\item Find the normal closure of any subset of $G$. \msay{Delete?}
%\item Compute $\FOPLL$-efficient Schreier generators. In particular, we may take the corresponding chain of subgroups $G = G_0 \geq G_1 \geq \cdots \geq G_r = 1$ such that $r \in O(\log n)$ and $G_{i}$ is the point-stabilizer of the first $i$ points.  \msay{Delete?}

\item Given a list of generators $S$ for $G$, compute a minimal (non-redundant) of generators $S'$ for $G$ of size $\poly(m)$. 

%\item Let $p$ be a prime divisor of  $|G|$. Compute generators for a Sylow $p$--subgroup of $G$. \msay{Delete?}

\item Computing a non-trivial, minimal block system for $G$, if one exists; otherwise, a report that $G$ is primitive.
\item Computing the orbits of $G$.

\item \algprobm{Coset Intersection}: Given $G, H \leq \text{Sym}(m)$ and $x, y \in \text{Sym}(m)$, compute $Gx \cap Hy$. \label{CosetInt}
\end{enumerate}
\end{lemma}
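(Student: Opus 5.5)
The plan is to derive every item from an existing parallel algorithm whose cost is measured in the degree $m$, and then substitute $m \in O(\log^{c} n)$. The key observation is that \emph{any} circuit of depth $\poly(\log m)$ and size $2^{\poly(\log m)}$, measured in $m$, becomes a circuit of depth $\poly(\log\log n)$ and size $2^{\poly(\log\log n)} = n^{o(1)} \leq \poly(n)$ once $m = \log^{c} n$. A generating sequence $S$ for $G \leq \text{Sym}(m)$ given inside an input of length $n$ may contain up to $\poly(n)$ permutations. So the first step is to prune $S$ to $\poly(m)$ generators. I would do this with the standard parallel sifting/filtering: run the membership routine (b) in parallel for each prefix-closed sub-sequence, and keep a generator only if it lies outside the group generated by the earlier kept ones. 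This is done over a logarithmic-depth doubling scheme in the usual way, which also yields item (e). After pruning, every later procedure only ever sees an input of size $\poly(m) = \poly(\log n)$.

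For items (a)--(g), I would invoke the $\textsf{NC}$ permutation-group algorithms of Babai, Luks, and Seress \cite{BabaiLuksSeress} and McKenzie \cite{McKenzieThesis}. Given a generating set of $\text{Sym}(m)$-permutations, these compute the order, membership with a straight-line word, kernels of actions, pointwise stabilizers, a minimal block system, and orbits. Each runs in depth $\poly(\log m)$ and size $\poly(m)$, measured in the input length $\poly(m)$. The key steps are:
\begin{itemize}
\item orbits (g) via transitive closure of the generator graph on $[m]$, in depth $O(\log m)$;
\item a minimal block system (f) by computing, in parallel for each pair $\{1,x\}$, the finest $G$-invariant equivalence joining them, and choosing a minimal one;
\item order, membership and kernels (a)--(c) via the Babai--Luks--Seress parallel composition-series / strong-generating-set machinery;
\item pointwise stabilizers (d) as kernels of the action on $B$ composed with the standard stabilizer-chain routine.
\end{itemize}
Uniformity is inherited, since the circuits are built from the $\textsf{DLOGTIME}$-uniform families for these algorithms with $m$ plugged in. Finally, I would compose the circuit for the pruning step with these, which adds only $\poly(\log\log n)$ depth.

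Item (h), \algprobm{Coset Intersection}, is the main obstacle. It is not known to be in $\textsf{NC}$ as a function of $m$, so I cannot simply substitute. My plan is to reduce it to computing $\text{Aut}$ of a string under a group (string isomorphism), via the standard reduction of $Gx \cap Hy$ to the action of $G \times H$ on $[m] \times [m]$ with the diagonal coloured. That reduced instance has degree $O(m^{2})$, still $\poly(\log n)$. I would then run a Luks-style divide-and-conquer in parallel: descend along a minimal block system from (f), and at each level branch over the cosets of the block kernel. The recursion depth is $O(\log m)$ levels, each costing $\poly(\log m)$ depth by (a)--(g). The branching is bounded using Babai's bounds on primitive groups (orders $m^{O(\log m)}$ unless large alternating/classical sections appear, which are handled by Babai's splitting technique), so the total number of branches is $m^{\poly(\log m)}$.

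The main thing to check is that this tree really has size $m^{\poly(\log m)}$ rather than $2^{\widetilde{O}(\sqrt m)}$, because the latter would exceed $\poly(n)$ for $c>2$. I would first try to extract exactly this parameterisation from the argument behind the Grochow--Johnson--Levet procedure \cite{GrochowJohnsonLevet}. That argument is already stated in terms of depth $\poly(\log m)$ and size $m^{O(\log m)}$-type quantities, and I would check that it never used $m = O(\log n)$ beyond converting $2^{\poly(\log m)}$ into $\poly(n)$. If so, substituting $m = \log^{c} n$ gives depth $\poly(\log\log n)$ and size $2^{\poly(\log\log n)} \leq \poly(n)$, completing (h).
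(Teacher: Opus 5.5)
Your overall route is the paper's. The paper cites Babai--Luks--Seress for (a)--(e) and McKenzie for (f)--(g) as depth-$\poly(\log m)$, size-$\poly(m)$ algorithms and substitutes $m \in O(\log^{c} n)$. For (h) it cites Grochow--Johnson--Levet, with a footnote asserting, without argument, that their $m \in O(\log n)$ analysis carries over to $m \in O(\log^{c} n)$; that is exactly the check you propose to make. The problems are in the two places where you go beyond this.

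The pruning step does not give $\FOPLL$. Testing $s_i \notin \langle s_1,\dots,s_{i-1}\rangle$ prefix by prefix calls (b) on generating sequences of length up to $|S| = \poly(n)$. That is the very situation pruning was meant to avoid, so the step is circular. A doubling (merge-tree) scheme removes the circularity, since each merge sees only $\poly(m)$ generators. But the tree has $\log_2 |S| = \Theta(\log n)$ levels, each costing $\poly(\log m)$ depth, so the total depth is $\Theta(\log n)\cdot\poly(\log\log n)$. Larger fan-in $b$ does not help: there are then $\log|S|/\log b$ levels, but each merge of $b$ blocks by the cited bounded-fan-in algorithms costs depth $\Omega(\log b)$, so the total is still $\Omega(\log n)$. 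Hence your argument does not prove the lemma for $\poly(n)$-size generating sequences.

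The paper does not attempt this. It implicitly takes the input length to be $\poly(m)$, or more generally $2^{\poly(\log m)}$, which is what occurs in its applications. For example, Case 3 of the \algprobm{Color Automorphism} recursion recombines $[G:K]\cdot\poly(m) = 2^{O(\log^{2} m)}$ generators. In that regime direct substitution suffices, so you should drop the pruning step and state the restriction instead.

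Your self-contained sketch for (h) should not be offered as support either. Run in parallel, the intransitive case of Luks' recursion has two options. It can handle the orbits one after another, which means up to $m$ sequential stages rather than $O(\log m)$ levels. Or it can intersect the per-orbit cosets, as the paper's Case 2 does, but that is \algprobm{Coset Intersection} again, so the reduction is circular. Moreover, for arbitrary $G, H$ the primitive quotients can be giants with $\ell!$ cosets. Babai's machinery for handling them (the quasipolynomial \algprobm{String Isomorphism} algorithm) has no known $\poly(\log m)$-depth parallelization.

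So (h) rests, in your proposal as in the paper, only on the Grochow--Johnson--Levet bound plus an unverified extension to $m \in O(\log^{c} n)$. Your own remark that a procedure of size $2^{O(m)}$ or $2^{\tilde{O}(\sqrt{m})}$ would not survive that substitution for larger $c$ is exactly the point the paper's footnote asserts rather than proves.
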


\begin{proof}
Each of these problems in (a)--(g) is known to be solvable using a circuit of depth $O(\text{polylog}(m))$ and size $O(\poly(m))$: see \cite{BabaiLuksSeress} for (a)--(e), \cite{McKenzieThesis} for (f)--(g). The $\FOPLL$ bound follows from the fact that $m \in O(\log^c n)$. 

For (h), the \FOPLL bound was established in \cite{GrochowJohnsonLevet}\footnote{Grochow, Johnson, and Levet only stated this bound for the case when $m \in O(\log n)$. However, a careful analysis of their proof shows that the \FOPLL bound holds when $m \in O(\log^c n)$.}. 
\end{proof}

Finally, we consider the \algprobm{Transversal} problem, which takes as input $H < G \leq \text{Sym}(m)$ such that $[G : H] \in m^{O(1)}$ and asks for a left transversal\footnote{Kantor, Luks, and Mark ask for a right transversal. The proof of \cite[Proposition~3.13]{KantorLuksMark} is easily modified to yield a left transversal without affecting the complexity.} for $H$ in $G$. Kantor, Luks, and Mark \cite[Proposition~3.13]{KantorLuksMark} exhibited an $\textsf{NC}$ algorithm for \algprobm{Transversal} when $|G| \in m^{O(\log m)}$. We will instead consider $\algprobm{Transversal}$ when $m \in O(\log n)$ and $[G : H] \in (\log n)^{O(\log \log n)}$. A careful analysis of \cite[Proposition~3.13]{KantorLuksMark} yields the following for our setting:

\begin{lemma}[cf. {\cite[Proposition~3.13]{KantorLuksMark}}] \label{lem:Transversals}
Let $m \in O(\log n)$, and let $H < G \leq \text{Sym}(m)$. Suppose that $[G : H] \in (\log n)^{O(\log \log n)}$. We can solve $\algprobm{Transversal}$ using an $\textsf{AC}$ circuit of depth $O((\log n) \cdot \poly(\log \log n))$.
\end{lemma}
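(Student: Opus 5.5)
We follow the recursive structure of Kantor--Luks--Mark \cite[Proposition~3.13]{KantorLuksMark} and re-analyse its depth when the index is quasi-polylogarithmic. Fix a point-stabilizer chain $G = G_0 \geq G_1 \geq \cdots \geq G_m = 1$, where $G_i$ is the pointwise stabilizer of $\{1,\ldots,i\}$. By Lemma~\ref{PermutationGroupsNC}(d), generators for every $G_i$ and every $H_i := H \cap G_i$ are computed in $\FOPLL$, and all $2m$ computations run in parallel. We have
\[
[G:H] = \prod_{i} \frac{[G_i : G_{i+1}]}{[H_i : H_{i+1}]} \cdot \text{(correction factors)}.
\]
In practice we work with the relative chain: let $K_i := H G_i$, regarded as a set. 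A transversal for $H$ in $G$ is then built by refining left cosets of $H_i$ inside $G_i$ level by level. Concretely, $T_i$ will be a left transversal of $H_i$ in $G_i$, and we obtain $T_i$ from $T_{i+1}$.

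For a single level, the orbit of the point $i+1$ under $G_i$ has size at most $m$. By Lemma~\ref{PermutationGroupsNC}(g), compute this orbit, the orbit of $i+1$ under $H_i$, and coset representatives $r_1,\ldots,r_k$ with $k \leq m$. Each $r_j$ maps $i+1$ to a representative of one $H_i$-orbit inside the $G_i$-orbit. Take $r_j$ to be a product of generators; each such product is found as a word of length $\poly(m)$ via (b), which costs $\FOPLL$ depth.

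A standard double-coset argument then gives
\[
G_i = \bigsqcup_j H_i\, r_j\, G_{i+1}.
\]
Hence a left transversal of $H_i$ in $G_i$ is obtained from the elements $r_j t$, where $t$ ranges over a transversal of $(r_j^{-1} H_i r_j)\cap G_{i+1}$ in $G_{i+1}$, suitably inverted to make them left cosets. This is exactly the recursion of Kantor--Luks--Mark. The subgroups $(r_j^{-1}H_i r_j)\cap G_{i+1}$ are again computable by (d), since they are point stabilizers of a conjugate of $H_i$.

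For the depth and size accounting, each level performs $\FOPLL$ work plus one round of multiplying the new representatives against the previously found ones. Products of $O(\log n)$-degree permutations take $\ACz$ depth after table lookup, or $O(\log\log n)$ depth in general. The total recursion depth is $m \in O(\log n)$ levels, each of depth $\poly(\log\log n)$, which gives depth $O(\log n \cdot \poly(\log\log n))$ as claimed. Size is governed by the number of transversal elements maintained simultaneously. At every stage this is at most $[G:H] \in (\log n)^{O(\log\log n)} \subseteq \poly(n)$ times $\poly(m)$ branches, so the circuit has polynomial size. The $\poly(m)$ factor holds because each node splits into at most $m$ children, and the product of the branching factors along any root-to-leaf path equals the index contributed.

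The main obstacle is the size bound. It requires that the recursion tree never grows wider than the index. We must therefore discard, in parallel and at each level, branches whose cosets coincide, via membership testing (b) of $t^{-1}t'$ in the appropriate conjugate of $H$. We must also confirm that the intermediate subgroups never force an exponential enumeration. We would first try to show that the tree's leaves biject with cosets of $H$, so that at most $[G:H]$ nodes exist per level. This would let the membership deduplication run at $\FOPLL$ cost per level without blowing up the size.
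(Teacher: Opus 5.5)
Your proposal is correct and takes essentially the paper's route: run the Kantor--Luks--Mark recursion along a point-stabilizer chain, bound the number of levels by $O(\log n)$ (you get $m$ levels, the paper quotes $\log|G|\le m\log m$), and note that each level uses only \FOPLL{} subroutines from Lemma~\ref{PermutationGroupsNC}, for total depth $O((\log n)\cdot\poly(\log\log n))$. The size issue you leave open is already settled by your own disjoint decomposition $G_i=\bigsqcup_j H_i r_j G_{i+1}$: distinct root-to-leaf paths give distinct cosets, so each level has at most $[G:H]$ nodes, and you need only compaction of the at most $m[G:H]$ children (depth $O((\log\log n)^2)$) rather than deduplication, provided you read this as a top-down tree whose subproblems are the conjugated stabilizers $(r_j^{-1}H_ir_j)\cap G_{i+1}$ and not as computing $T_i$ from $T_{i+1}$---though that chain version also works, since $[G_i:H_i]\le[G:H]$ and the products $ut$ can be deduplicated by membership in $H_i$.
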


\begin{proof}
As $G \leq \text{Sym}(m)$, $|G| \leq m!$. From the proof of \cite[Proposition~3.13]{KantorLuksMark}, the recursion depth is $\log |G| \sim m \log m \in O((\log n) \cdot (\log \log n))$. At each recursive call, we use subroutines for deciding subgroup membership and computing the pointwise stabilizer. By Lemma~\ref{PermutationGroupsNC}(b) and (d), respectively, each of these subroutines is \FOPLL-computable. The result now follows.
\end{proof}

%%%%%
\subsection{Representation Theory of Finite Groups}

We recall key preliminaries concerning the representation theory of finite groups. For a general reference, see \cite{serre1977}.

Let $G$ be a finite group, and let $V$ be a vector space. A \textit{representation} of $G$ over $V$ is a group homomorphism $\varphi : G \to \text{GL}(V)$. The \textit{trivial representation} maps every group element to the identity matrix in $\text{GL}(V)$. If $V \cong \mathbb{F}^{d}$, for some field $\mathbb{F}$ and some $d \in \mathbb{N}$, a homomorphism $\varphi : G \to \text{GL}_{d}(\mathbb{F})$ is called a \textit{representation} of $G$ over $\mathbb{F}$ of dimension $d$. Let $\varphi : G \to \text{GL}(V)$ be a representation. A subspace $L \leq V$ is \textit{invariant} with respect to $\varphi$ if for all $g \in G$, $\varphi_{g}(L) = L$. Note that $0, V$ are the \textit{trivial invariant subspaces}. A representation without any non-trivial invariant subspaces is called an \textit{irreducible representation}. We may test whether a representation is irreducible using the following.

\begin{proposition} \label{prop:Irreducibility}
Let $\varphi : G \to \text{GL}(V)$ be a representation. We have that $\varphi$ is irreducible if and only if for all non-zero $v \in V$, $\langle v^{g} | g \in G \rangle = V$.
\end{proposition}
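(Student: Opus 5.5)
We prove the two directions separately. Both rest on the fact that for $v \in V$ the set $W_v := \langle v^{g} \mid g \in G \rangle$ is the span of the $G$-orbit of $v$ under $\varphi$, and that $W_v$ is always an invariant subspace containing $v$.

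\emph{Invariance of $W_v$.} For $h \in G$, the linearity of $\varphi_h$ gives
\[
\varphi_h\Bigl(\sum_g c_g v^g\Bigr) = \sum_g c_g v^{hg},
\]
which again lies in $W_v$. Hence $\varphi_h(W_v) \subseteq W_v$. Since $\varphi_h$ is invertible and $W_v$ is finite-dimensional (or simply because $\varphi_{h^{-1}}(W_v) \subseteq W_v$ as well), we get $\varphi_h(W_v) = W_v$. So $W_v$ is invariant. It contains $v = v^{1}$, so $W_v \neq 0$ whenever $v \neq 0$.

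\emph{($\Rightarrow$).} Suppose $\varphi$ is irreducible and $v \neq 0$. Then $W_v$ is a nonzero invariant subspace. Since the only invariant subspaces are $0$ and $V$, we conclude $W_v = V$.

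\emph{($\Leftarrow$).} Suppose $W_v = V$ for every nonzero $v$, and let $L$ be a nonzero invariant subspace. Pick any nonzero $v \in L$. Invariance gives $v^g = \varphi_g(v) \in L$ for all $g \in G$, so $V = W_v \subseteq L$. Hence $L = V$, and the only invariant subspaces are $0$ and $V$, so $\varphi$ is irreducible.

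No step here is a genuine obstacle. The only points needing care are the closure of $W_v$ under every $\varphi_h$, which uses only linearity and the homomorphism property $\varphi_h \varphi_g = \varphi_{hg}$, and the convention that $V \neq 0$ and that the trivial subspaces are excluded.
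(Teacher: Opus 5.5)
Your proof is correct and complete. The paper states this proposition without proof, as a standard fact, and your argument (the orbit span $\langle v^{g} \mid g \in G\rangle$ is the smallest invariant subspace containing $v$) is exactly the standard justification. Incidentally, under the paper's definition (irreducible means having no non-trivial invariant subspace), the case $V = 0$ makes both sides vacuously true, so you do not need the convention $V \neq 0$.
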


Fix a field $\mathbb{F}$. For the rest of this section, we consider representations over $\mathbb{F}$. Let $\varphi : G \to \text{GL}(V)$ and $\rho : G \to \text{GL}(W)$ be representations. The \textit{direct sum} $\varphi \oplus \rho$ is a representation of $G$ over $V \oplus W$, defined as $(\varphi \oplus \rho)_{g}(u + v) = \varphi_{g}(u) + \rho_{g}(v)$, for all $g \in G$. A representation is \textit{completely reducible} if it is the direct sum of irreducible representations. Maschke's theorem states that if the characteristic of $\mathbb{F}$ is $0$ or coprime to $|G|$, then any representation of $G$ over $\mathbb{F}$ is completely reducible.	

Two representations $\varphi, \psi : G \to \text{GL}(V)$ are \textit{equivalent} if there exists an invertible linear transformation $T : V \to V$ such that $\varphi(g) = T \psi(g) T^{-1}$ for all $g \in G$. Suppose that $\varphi, \psi$ are completely reducible. Write $\varphi = \iota_{1}^{d_{1}} \oplus \cdots \oplus \iota_{\ell}^{d_{\ell}}$ and $\psi = \iota_{1}^{e_{1}} \oplus \cdots \oplus \iota_{\ell}^{e_{\ell}}$, where the $\iota_{i}$'s are irreducible and pairwise inequivalent, and each $d_{i}, e_{i} \geq 0$. We have that $\varphi$ and $\psi$ are equivalent if and only if $d_{i} = e_{i}$ for each $i \in [\ell]$.

The \textit{character} of $\varphi$, denoted $\chi_{\varphi} : G \to \mathbb{F}$, is the function that maps $\chi_{\varphi}(g) = \text{tr}(\varphi(g))$. If the characteristic of $\mathbb{F}$ does not divide $|G|$, then two representations of $G$ are equivalent if and only if their characters are the same. This yields the following.

\begin{lemma} \label{lem:CompareCharacters}
Let $G$ be a group of order $n$ specified by its multiplication table, and let $p \leq n$ be a prime that does not divide $n$. Let $k \in O(\log n)$, and $\varphi, \psi : G \to \text{GL}_{k}(\mathbb{Z}_{p})$ be representations given explicitly by associating with each $g \in G$, the corresponding matrix $\varphi(g)$ (respectively, $\psi(g)$). We can decide if $\varphi, \psi$ are equivalent in $\textsf{AC}^{0}$.
\end{lemma}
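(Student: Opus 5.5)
We compare characters, using the fact stated just before the lemma: since $p \nmid n = |G|$, two representations of $G$ over $\mathbb{Z}_p$ are equivalent if and only if their characters coincide. So it suffices to compute $\chi_\varphi(g) = \text{tr}(\varphi(g))$ and $\chi_\psi(g) = \text{tr}(\psi(g))$ for every $g \in G$ in parallel, and then accept if and only if $\chi_\varphi(g) = \chi_\psi(g)$ for all $g$. The final test is an unbounded fan-in $\textsf{AND}$ over $n$ equality tests of elements of $\mathbb{Z}_p$. Each element has $O(\log n)$ bits because $p \leq n$, so each equality test is in $\textsf{AC}^0$.

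The only step that is not immediately in $\ACz$ is computing one trace. This is a sum of $k \in O(\log n)$ elements of $\mathbb{Z}_p$, each of $O(\log n)$ bits, reduced modulo $p$. Iterated addition of $O(\log n)$ numbers is in $\textsf{TC}^0$ in general, so we need the smallness of the data to stay in $\ACz$. The plan is to use table lookup, exploiting the fact that the input itself has size polynomial in $n$.

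\begin{itemize}
\item The diagonal $(\varphi(g)_{11}, \ldots, \varphi(g)_{kk})$ is specified by $O(k \log p) = O(\log^2 n)$ bits, so direct lookup over all diagonals would need quasipolynomial size.
\item Instead, use a polynomial-size $\ACz$ circuit for adding $O(\log n)$ numbers of $O(\log n)$ bits each. The standard argument: addition of polylogarithmically many $O(\log n)$-bit numbers is in $\ACz$, since any function of $O(\log n)$ input bits has a polynomial-size DNF.
\item Concretely, split the summands into blocks of $O(\log n / \log\log n)$ columns. Column sums then have $O(\log\log n)$ bits each, and the combined data fed to each lookup has $O(\log n)$ bits, so each lookup is a polynomial-size DNF.
\item Alternatively, cite the known fact that iterated addition of $\text{polylog}(n)$ many $\text{polylog}(n)$-bit numbers is in $\ACz$ (via the Chandra--Fortune--Lipton / Vollmer bound).
\end{itemize}

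Reduction modulo $p$ of an $O(\log n)$-bit integer is again a function of $O(\log n)$ bits, so it can be done by brute-force lookup in $\ACz$. Equivalently, the quotient is at most $k$, so we can try all $O(\log n)$ multiples of $p$ in parallel.

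The main obstacle I expect is exactly this iterated addition. One must check that the fact "sums of $O(\log n)$ numbers with $O(\log n)$ bits lie in uniform $\ACz$" is applicable, and then assemble all $2n$ traces and the $n$ comparisons in constant depth, which is routine.
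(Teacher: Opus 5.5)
Your circuit analysis is fine and is the same as the paper's proof. The paper also computes each $\text{tr}\,\varphi(g)$ by $\textsf{AC}^0$ iterated addition of $O(\log n)$ many $O(\log n)$-bit integers, reduces modulo $p$, and compares; your extra justification of the iterated addition just fills in a citation. The genuine gap is the criterion you take from the preliminaries, which the paper's proof also relies on. For $p \nmid |G|$, the $\mathbb{Z}_p$-valued character does \emph{not} determine a representation up to equivalence. By Maschke's theorem, $\varphi \cong \bigoplus_S S^{m_S(\varphi)}$ over the simple $\mathbb{Z}_pG$-modules $S$, and $\chi_\varphi = \sum_S m_S(\varphi)\,\chi_S$, where the integers $m_S(\varphi)$ are read in $\mathbb{Z}_p$. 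So equal characters only give $m_S(\varphi) \equiv m_S(\psi) \pmod p$. Here is a concrete failure satisfying all of the lemma's hypotheses. Take $G = \mathbb{Z}_2^{\ell}$ with $\ell \geq 2$ (so $n = 2^{\ell} \geq 4 > p$), $p = 3$, and $k = 3$. Let $\varphi(x) = I_3$ for all $x$, and $\psi(x) = \lambda(x) I_3$ for a nontrivial homomorphism $\lambda : G \to \{\pm 1\} \subseteq \mathbb{Z}_3^{\times}$. Both characters are identically $0$ in $\mathbb{Z}_3$, since $\text{tr}(\pm I_3) = \pm 3 = 0$. Yet $\psi$ is not equivalent to $\varphi$, because the only conjugate of $I_3$ is $I_3$. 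Your circuit would accept this pair.

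The characters $\chi_S$ of pairwise non-isomorphic simple $\mathbb{Z}_pG$-modules are linearly independent when $p \nmid |G|$. Each is a sum, with coefficients $1$, over a Galois orbit of absolutely irreducible characters, since finite fields have trivial Schur indices. Hence the trace test is correct whenever the congruences above force equality of multiplicities. This happens in two cases: (i) if $k < p$, since every multiplicity lies in $\{0,\ldots,k\} \subseteq \{0,\ldots,p-1\}$; or (ii) if $\varphi$ and $\psi$ are both irreducible. Case (ii) covers every place the lemma is actually used: grouping irreducible constituents in Proposition~\ref{prop:DecomposeRepresentation}, and matching $\tau_i$ against $f_v$ in Lemma~\ref{lem:ListIndexingTuples}. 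So the natural repair is to add one of these hypotheses to the statement. For arbitrary $\varphi,\psi$ you would need a finer invariant, e.g.\ the characteristic polynomials of all $\varphi(g)$ and $\psi(g)$. These determine the Brauer character, and hence the representation, when $p \nmid |G|$. However, computing characteristic polynomials of $O(\log n) \times O(\log n)$ matrices is not obviously in $\textsf{AC}^0$; standard parallel algorithms give depth $\poly(\log\log n)$. So the $\textsf{AC}^0$ claim in full generality needs a different argument.
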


\begin{proof}
As $p$ does not divide $n$, $\varphi$ and $\psi$ are equivalent if and only if their characters are the same. For a given $g \in G$, we may evaluate $\text{tr}(\varphi(g))$ as follows. First, we note that as $p \leq n$, each entry in $\varphi(g)$ is represented using $O(\log n)$ bits. We first compute $\sum_{i=1}^{k} \varphi(g)_{ii}$ and $\sum_{i=1}^{k} \psi(g)_{ii}$. As $k \in O(\log n)$, we are adding $O(\log n)$-many $O(\log n)$-bit integers. This step is $\textsf{AC}^{0}$-computable (cf. \cite{VollmerText}). Note that the resulting sum uses only $O(\log n)$-many bits. Thus, we may reduce the resulting sum modulo $p$ in $\textsf{AC}^{0}$. We can then test whether $\sum_{i=1}^{k} \varphi(g)_{ii} = \sum_{i=1}^{k} \psi(g)_{ii}$ in $\ACz$. 
\end{proof}

We now show that we can efficiently decompose a representation.

\begin{proposition} \label{prop:DecomposeRepresentation}
Let $p, q$ be distinct primes. Let $G = \mathbb{Z}_{q}^{\ell} \ltimes_{\varphi} \mathbb{Z}_{p}^{k}$ be  given by its multiplication table. In $\textsf{AC}^{1}$, we can list the irreducible components of $\varphi$ and group them by equivalence type. In particular, each irreducible component is returned as a function of the form $\tau : \mathbb{Z}_{q}^{\ell} \to \text{GL}_{k_{\tau}}(\mathbb{Z}_{p})$, where $k_{\tau}$ is the dimension of $\tau$. 
\end{proposition}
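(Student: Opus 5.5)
We need to decompose the $\mathbb{Z}_p$-representation $\varphi$ of $H=\mathbb{Z}_q^\ell$ on $N=\mathbb{Z}_p^k$ in $\textsf{AC}^1$, given the multiplication table of $G$. Note $|N| = p^k \le n$, so $k \in O(\log n)$, and likewise $\ell \in O(\log n)$.

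\textbf{Step 1: recover $N$, $H$ and the action.} We first identify $N$ as the set of elements of order a power of $p$, which is the normal Sylow $p$-subgroup. We fix a complement $H$, for instance the subgroup generated by the $q$-elements of a Sylow $q$-subgroup. We compute bases of $N$ and $H$ as in the Abelian-group-basis algorithms available in $\textsf{AC}^1$; in fact, for elementary Abelian groups a greedy choice suffices, using membership tests via the multiplication table. Writing each conjugate $h n_j h^{-1}$ in the basis $n_1,\dots,n_k$ gives the matrix $\varphi(h)\in \text{GL}_k(\mathbb{Z}_p)$ for every $h\in H$. Expressing an element in a basis of an elementary Abelian group of rank $O(\log n)$ can be done by precomputing the table of all $p^k\le n$ linear combinations. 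Hence this step is $\textsf{AC}^1$ (in fact close to $\textsf{AC}^0$ after the basis computation).

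\textbf{Step 2: decompose.} Since $\gcd(q,p)=1$, Maschke's theorem gives complete reducibility, and $V=\mathbb{Z}_p^k$ has only $|V|\le n$ vectors. For each nonzero $v\in V$, we compute the cyclic submodule $W_v=\langle v^h : h\in H\rangle$ in parallel. Each $W_v$ is determined by the $O(\log n)$-step closure of spans, which is $\textsf{AC}^1$ via repeated doubling, or directly as the span of $|H|\le n$ vectors computed by Gaussian elimination over $\mathbb{Z}_p$ with dimension $O(\log n)$. We then use Proposition~\ref{prop:Irreducibility}: $W_v$ is irreducible iff $W_w=W_v$ for all nonzero $w\in W_v$, which is a polynomial-size comparison.

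To obtain an actual direct sum decomposition, we would rather work isotypically than greedily, since a greedy sequential choice of irreducible summands has depth up to $k$. The plan is as follows. Since $H$ is Abelian and $q\ne p$, the group algebra $\mathbb{Z}_p[H]$ is semisimple commutative, so each isotypic component is the image of a primitive idempotent. Concretely, for each irreducible $W_v$ we compute its character $\chi$ (as in Lemma~\ref{lem:CompareCharacters}) and group the irreducibles by equivalence class. The isotypic component $V_\chi$ is then the span of all irreducible $W_v$ in that class. The $V_\chi$ are in direct sum and exhaust $V$.

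Within $V_\chi$, we pick a basis of irreducible summands greedily, which is the step I expect to be the main obstacle for depth. Here is how I would handle it. Choose the lexicographically first $v$ with $W_v\le V_\chi$ irreducible. Then, for the next summand, take the lex-first $w$ with $W_w\cap \text{(current sum)}=0$, and so on. The number of summands is at most $k\in O(\log n)$, and each iteration is a rank computation over $\mathbb{Z}_p$ in dimension $O(\log n)$. Such a rank computation is in $\textsf{AC}^0$-ish depth via brute force over the $\le n$ vectors of the space: a subspace is given as a subset of $V$, and sums and intersections are computed by table lookup in constant depth. So the loop costs $O(\log n)$ depth in total, i.e.\ $\textsf{AC}^1$.

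\textbf{Step 3: output.} For each summand $W$ with chosen basis $w_1,\dots,w_{k_\tau}$, we output $\tau(h)$ as the matrix of $\varphi(h)|_W$ in that basis, again by coordinate lookup. Grouping the summands by equivalence type is done by comparing characters via Lemma~\ref{lem:CompareCharacters}. Overall the depth is dominated by the $O(\log n)$-round greedy selection, yielding $\textsf{AC}^1$.
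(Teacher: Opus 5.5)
Your proof is correct, but it takes a different route from the paper's and delivers more.

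\textbf{The paper's route.} For every nonzero $v$ the paper forms $W_v$ and keeps those passing the test of Proposition~\ref{prop:Irreducibility}. It fixes a basis of each surviving $W_v$ via \cite[Proposition~4.5]{JLVWQuasigroups}, reads off the matrices, and groups all of these irreducible submodules by character. It never selects a direct-sum decomposition. So the multiplicities $k_i$ that Lemma~\ref{lem:ListIndexingTuples} later relies on are left implicit.

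\textbf{What your route adds.} Your isotypic-plus-greedy step supplies exactly those multiplicities. Every subspace is an explicit subset of size at most $n$, so the membership test $w \notin S$ and the update $S \mapsto S + W_w$ are constant-depth table lookups. Hence $O(\log n)$ rounds stay within $\textsf{AC}^{1}$, and you obtain an honest decomposition $\varphi = \tau_1^{k_1} \oplus \cdots \oplus \tau_t^{k_t}$.

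\textbf{Minor remarks.}
\begin{itemize}
\item The isotypic split is not actually needed, despite your worry about greedy depth. The same greedy loop run on all of $V$ gives the same bound: add the lexicographically first irreducible $W_w$ with $w \notin S$. Such a $w$ exists while $S \neq V$, by complete reducibility.
\item In Step~1, ``the subgroup generated by the $q$-elements of a Sylow $q$-subgroup'' presupposes a Sylow subgroup you have not computed. It is simpler to let $G$ act by conjugation, which factors through $G/N$ because $N$ is Abelian, as in Lemma~\ref{lem:DecomposeCoprime}. Then identify $G/N$ with $\mathbb{Z}_q^{\ell}$ via a basis.
\item Comparing $\mathbb{Z}_p$-valued characters is sound here only because you apply it to irreducible summands. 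For reducible representations, multiplicities divisible by $p$ are invisible to the trace. For example, $U^{p}$ has identically zero trace.
\end{itemize}
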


\begin{proof}
Let $v \in \mathbb{Z}_{p}^{k}$ be non-zero. In $\textsf{FL}$, we may write down each $\varphi_{g}(v) = v^{g}$. Then in $\textsf{FL} \cap \textsf{FOLL}$, we may write down $W = \langle v^{g} : g \in G \rangle$ using a membership test \cite{BKLM}. Now for each $w \in W$, we may write down $W_{w}' = \langle w^{g} : g \in G \rangle$ and test whether $W = W'$ in $\textsf{L} \cap \textsf{FOLL}$. By Proposition~\ref{prop:Irreducibility}, $W$ is irreducible if and only if for all $w \in W$, $W_{w}' = W$.

Now for each invariant subspace $W$, we may in $\textsf{AC}^{1}$, fix a basis for that subspace \cite[Proposition~4.5]{JLVWQuasigroups}. Now for each $g \in \mathbb{Z}_{q}^{\ell}$, we consider the conjugation action of $g$ on the basis of $W$. This allows us to compute the matrix representation $M_{g}$ of $g$ on $W$. Let $\tau_{W} : \mathbb{Z}_{q}^{\ell} \to \text{GL}_{\text{dim}(W)}(\mathbb{Z}_{p})$ be the representation sending $g \mapsto M_{g}$. By Lemma~\ref{lem:CompareCharacters} we can, in $\textsf{AC}^{0}$, decide when two  representations are equivalent. Thus, in $\textsf{AC}^{0}$, we can group the equivalent irreducible representations. 

Our algorithm is $\textsf{AC}^{1}$-computable, as desired. The result now follows.
\end{proof}

%%%%%%%
\section{Isomorphism Testing of Small Graphs in Parallel}

In this section, we will establish the following.

\begin{theorem}[cf. {\cite{LuksBoundedValence}}] \label{thm:LuksBoundedGenerator}
Fix $n \in \mathbb{N}$. Let $X,Y$ be graphs with $m \in O(\log n)$ vertices. We can decide whether $X$ and $Y$ are isomorphic, using an $\textsf{AC}$ circuit of depth $O((\log n)^2 \cdot \poly(\log \log n))$ and size $\poly(n)$.
\end{theorem}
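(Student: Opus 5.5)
We follow Luks' reduction of \algprobm{Graph Isomorphism} to computing color-preserving automorphism groups, and exploit that the graphs have only $m \in O(\log n)$ vertices. We first reduce isomorphism of $X$ and $Y$ to automorphisms of a single structure. Form the disjoint union $Z = X \sqcup Y$ on $2m$ vertices. Then $X \cong Y$ if and only if $\Aut(Z)$ contains an element swapping the two components. This can be tested once generators of $\Aut(Z) \leq \text{Sym}(2m)$ are known, by computing orbits via Lemma~\ref{PermutationGroupsNC}(g) and checking whether some vertex of $X$ is in the orbit of a vertex of $Y$; if $X$ or $Y$ is disconnected, we first add a new apex vertex to each. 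The coset of isomorphisms is then read off as $\Aut(X)\sigma$ for any swapping $\sigma$.

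To compute $\Aut(Z)$, we use Luks' edge-by-edge scheme. Order the $\binom{2m}{2} = O(\log^2 n)$ vertex pairs $e_1, \ldots, e_M$. Define $G_0 = \text{Sym}(2m)$ and $G_i = \{g \in G_{i-1} : g \text{ preserves the edge set of } Z \cap \{e_1,\ldots,e_i\}\}$; it is cleaner to view $G_i$ as the stabilizer of the induced partition into edges and non-edges on the set of pairs processed so far. Each step is a \emph{set-stabilizer} problem in $G_{i-1}$ acting on pairs. A better organization is to sort the pairs into two classes, edges $E$ and non-edges. Then $\Aut(Z)$ is the setwise stabilizer of $E$ in $\text{Sym}(2m)$ acting on $\binom{[2m]}{2}$. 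This is a \algprobm{Set Transporter}/stabilizer instance on a domain of size $O(\log^2 n)$.

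We solve it by Luks' divide-and-conquer over the orbit and block structure. For intransitive groups, we process orbits one at a time, each time intersecting with the current stabilizer. For transitive groups, we take a minimal block system via Lemma~\ref{PermutationGroupsNC}(f) and pass to the kernel $N$ of the action on blocks via Lemma~\ref{PermutationGroupsNC}(c). We then recurse on the cosets of $N$, computing the needed transversals with Lemma~\ref{lem:Transversals}, and combine the pieces with \algprobm{Coset Intersection} (Lemma~\ref{PermutationGroupsNC}\ref{CosetInt}). Every group operation here is \FOPLL\ because the degree is polylogarithmic.

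The main obstacle is bounding the recursion. Luks' bound relies on bounded valence, which forces the primitive quotients to be small (Babai--Cameron--P\'alfy). Here the primitive quotients can be as large as $\text{Sym}(k)$, but $|\text{Sym}(2m)| = m^{O(m)} = n^{O(\log\log n)}$, so that bound is not directly polynomial. We would first try to avoid branching over full transversals altogether. Instead, we process the edges one at a time using the fact that the stabilizer chain $G_i$ has length $M = O(\log^2 n)$, and compute each $G_i$ from $G_{i-1}$ as a coset intersection in the small domain. That makes each step one \FOPLL\ call, with depth $\poly(\log\log n)$ and size $\poly(n)$. Done sequentially over $M = O(\log^2 n)$ steps, this gives depth $O((\log n)^2\cdot\poly(\log\log n))$ and size $\poly(n)$, as claimed. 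The delicate point is justifying that each one-edge refinement really reduces to a small coset-intersection instance. If it does not, the fallback is the Transversal-based recursion of Lemma~\ref{lem:Transversals}, with the recursion depth charged against $\log|\text{Sym}(2m)| = O(\log n\log\log n)$ and the branching controlled by the $(\log n)^{O(\log\log n)}$ index bound.
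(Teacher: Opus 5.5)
There is a genuine gap, at exactly the step you flag as delicate. Your chain $G_0 \geq G_1 \geq \cdots \geq G_M$ fails under either reading of its definition.

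\emph{First reading.} Suppose $G_i$ consists of the $g \in G_{i-1}$ stabilizing the split of $\{e_1,\ldots,e_i\}$ into edges and non-edges. Then $\Aut(Z)$ is not contained in the intermediate $G_i$, so $G_M$ can be a proper subgroup of $\Aut(Z)$. For example, if $Z$ were the $4$-cycle with edges $\{1,2\},\{2,3\},\{3,4\},\{4,1\}$ and $e_1=\{1,2\}$, every element of $G_1$ fixes the pair $\{1,2\}$. The rotation $(1\,2\,3\,4)$ is therefore lost for good.

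\emph{Second reading.} Suppose instead $G_i$ is the set of $g$ with $g(e_j)\in E \Leftrightarrow e_j\in E$ for all $j\le i$, i.e., Luks' $\mathcal{C}_{\{e_1,\ldots,e_i\}}(\text{Sym}(2m))$. Then the chain is nested and ends at $\Aut(Z)$. But $\{e_1,\ldots,e_i\}$ is not $\text{Sym}(2m)$-stable, so $G_i$ is a union of many cosets rather than a group or coset. Passing from $G_{i-1}$ to $G_i$ is then not a single coset intersection.

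Avoiding precisely this is the point of Luks' layering, which the paper follows:
\begin{enumerate}
\item Fix an edge $e$, and let $X_r$ be the part of $X$ lying on paths of length $\le r$ through $e$.
\item Use the restriction homomorphisms $\pi_r:\Aut_e(X_{r+1})\to\Aut_e(X_r)$.
\item The kernel of $\pi_r$ is $\prod_a \text{Sym}(f^{-1}(a))$, which can be written down in \ACz.
\item The image of $\pi_r$ is a \algprobm{Color Automorphism} instance $\mathcal{C}_B(\sigma G)$ with $B$ $G$-stable. It is solved by the orbit/block divide-and-conquer, using coset intersection and the transversals of Lemma~\ref{lem:Transversals}.
\end{enumerate}
The resulting depth is $m-1=O(\log n)$ layers times $O(\log n\cdot\poly(\log\log n))$ per layer, the per-layer cost coming from the transversal step. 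It is not $O(\log^2 n)$ pair-steps of \FOPLL{} work.

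Your fallback does not rescue the argument. $\text{Sym}(2m)$ acting on the $\binom{2m}{2}$ pairs is primitive once $2m\geq 5$. So the very first transitive step has no nontrivial block system, and you would branch over all of $\text{Sym}(2m)$, which has $n^{\Theta(\log\log n)}$ elements when $m=\Theta(\log n)$. As you observed yourself, no $(\log n)^{O(\log\log n)}$ index bound is available there, and Lemma~\ref{lem:Transversals} presupposes one.

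Note also that if you are prepared to call Lemma~\ref{PermutationGroupsNC}\ref{CosetInt} on the domain of pairs, the chain is superfluous. Identify both vertex sets with $[m]$. Then the isomorphisms $X\to Y$ are exactly the intersection of the image of $\text{Sym}(m)$ acting on pairs with the coset of all permutations of pairs carrying $E(X)$ onto $E(Y)$. That is a single coset-intersection instance of degree $\binom{m}{2}=O(\log^2 n)$. This would be a different, one-step proof. However, it rests entirely on Lemma~\ref{PermutationGroupsNC}\ref{CosetInt} at polylogarithmic degree, which the paper supports only through a footnote extending Grochow--Johnson--Levet beyond degree $O(\log n)$. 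It would also yield an \FOPLL{} bound, far stronger than the theorem, so you would need to verify that extension before relying on it. As written, your proposal establishes neither this argument nor the layered one.
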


In order to establish Theorem~\ref{thm:LuksBoundedGenerator}, we essentially parallelize the previous work of Luks for isomorphism testing for graphs of bounded valence \cite{LuksBoundedValence}. While our graphs here need not have bounded valence, they are \emph{small}. We will crucially take advantage of this, in tandem with a suite of efficient parallel algorithms for permutation groups (see Section~\ref{sec:PermutationGroups}).

Fix an edge $e$ of $X$. We compute $\Aut_{e}(X)$, the automorphism group of $X$ that fixes $e$. Following \cite{LuksBoundedValence}, define $X_{r}$ to be the subgraph of $X$ consisting of all vertices and all edges of $X$, which appear in paths of length $\leq r$ through $e$. So $X_{1} = e$ and $X_{m-1} = X$. We will compute $\Aut_{e}(X_{r})$ for each $r = 1, \ldots, m-1$. The groups are related via the homomorphisms:
\[
\pi_{r}: \Aut_{e}(X_{r+1}) \to \Aut_{e}(X_{r}),
\]

\noindent where $\pi_{r}(\sigma)$ is the restriction of $\sigma$ to $X_{r}$. Thus, given (generators for) $\Aut_{e}(X_{r})$, determining generators for $\Aut_{e}(X_{r+1})$ reduces to two problems:
\begin{itemize}
\item Find a set $\mathcal{R}$ of generators for $\text{Ker}(\pi_{r})$.
\item Find a set $\mathcal{S}$ of generators for $\text{Im}(\pi_{r})$.
\end{itemize}

Let $\Delta(X)$ be the maximum degree of $X$. Let $A \subseteq 2^{V(X_{r})} \setminus \{ \emptyset\}$ contain those subsets of size at most $\Delta$. Define $f : V(X_{r+1}) \setminus V(X_{r}) \to A$ by:
\[
f(v) := \{ w \in V(X_{r}) : vw \in E(X) \}.
\]

\begin{proposition} \label{prop:ComputeKernel}
We can compute $\mathcal{R}$ in $\ACz$.
\end{proposition}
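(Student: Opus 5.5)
We follow Luks' description of $\text{Ker}(\pi_r)$ in \cite{LuksBoundedValence}. An automorphism $\sigma \in \Aut_e(X_{r+1})$ lies in $\text{Ker}(\pi_r)$ exactly when it fixes every vertex of $X_r$. Such a $\sigma$ permutes $V(X_{r+1}) \setminus V(X_r)$, and since it fixes $V(X_r)$ pointwise it must satisfy $f(\sigma(v)) = f(v)$ for every new vertex $v$. The plan is to show that $\text{Ker}(\pi_r)$ is precisely the direct product of the symmetric groups on the fibers of $f$. For a suitable set $\mathcal{R}$ of generators we will then take all transpositions $(v\,w)$ with $v \neq w$ and $f(v) = f(w)$.

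\textbf{Correctness.} First, one inclusion. Any $\sigma \in \text{Ker}(\pi_r)$ preserves $f$, since $w \in f(v)$ iff $vw \in E(X)$ iff $\sigma(v)w = \sigma(v)\sigma(w) \in E(X)$. Hence $\sigma$ stabilizes each fiber and lies in the product of fiber symmetric groups.

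For the converse, take a transposition $(v\,w)$ with $f(v) = f(w)$, fixing everything else. It preserves the edges between new and old vertices, because $v$ and $w$ have identical neighborhoods in $V(X_r)$. Old--old edges are untouched. The only other edges in $X_{r+1}$ are edges among new vertices, and here we must recall how Luks' layering works. $X_{r+1}$ consists of edges on paths of length $\le r+1$ through $e$. A new vertex lies at distance exactly $r+1$ (in the path sense) from $e$. An edge between two new vertices would need a path of length $\ge r+2$, so $X_{r+1}$ contains no such edges.

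We will verify this carefully against the definition of $X_r$. If such edges do occur under the paper's convention, we would instead restrict to transpositions of vertices that are twins in $X_{r+1}$. That is still a first-order definable condition, since it compares adjacency vectors. Either way, each transposition is an automorphism fixing $e$ and $X_r$ pointwise, and transpositions within fibers generate the product of symmetric groups.

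\textbf{Complexity.} The graphs have $m \in O(\log n)$ vertices. Membership of each vertex and edge in $X_r$ and $X_{r+1}$ is presumably available from the previous stage of the iteration (or can be precomputed). Given that, computing $f(v)$ for each new vertex is a parallel lookup of adjacencies. Testing $f(v) = f(w)$ for a pair is an AND over $O(m)$ bit comparisons. There are $O(m^2)$ pairs, each handled by a constant-depth, polynomial-size subcircuit, and the output lists the transpositions for which the test succeeds. This places the computation of $\mathcal{R}$ in $\ACz$.

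\textbf{Main obstacle.} The main point requiring care is the structural claim that no edges of $X_{r+1}$ join two new vertices, or equivalently that fiber-preserving permutations really are automorphisms. If this fails, we fall back on the twin-based refinement described above, which remains in $\ACz$.
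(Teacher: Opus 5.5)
Your proof is correct and takes essentially the paper's approach. The paper cites Luks for $\text{Ker}(\pi_r)=\prod_{a\in A}\text{Sym}(f^{-1}(a))$ and writes down two standard generators for each fiber, ranging over the $\poly(n)$ many $a\in A$; you output all intra-fiber transpositions instead, which is equally \ACz. Your own verification of the kernel description also holds up. A new vertex first lies on a path through $e$ of length $r+1$, so an edge between two new vertices only appears in $X_{r+2}$. Likewise, every $X$-edge from a new vertex to an old one already lies in $X_{r+1}$, which your ``$\sigma$ preserves $f$'' step uses without saying so. The twin-based fallback is therefore never needed.
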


\begin{proof}
Luks \cite[Section~3.1]{LuksBoundedValence} established that $\text{Ker}(\pi_{r}) = \prod_{a \in A} \text{Sym}(f^{-1}(a))$. As $|V(X)| \in O(\log n)$, we have that $\Delta \in O(\log n)$ and $|A| \in \poly(n)$. So we can compute $\Delta$, and hence $A$ in $\textsf{AC}^{0}$. Furthermore, we can compute $f^{-1}(a)$ in $\ACz$ for each $a \in A$. Now the symmetric group is $2$-generated. So in $\ACz$, we may write down standard generators for each $\text{Sym}(f^{-1}(a))$. The result now follows.
\end{proof}

It remains to compute $\mathcal{S}$. Our key result will thus be as follows:

\begin{proposition} \label{prop:ComputeImage}
Take the same assumptions as Theorem~\ref{thm:LuksBoundedGenerator} and fix $r \in [m-1]$. Suppose we are given generators for $\Aut_{e}(X_{r})$. We can compute $\mathcal{S}$ using an $\textsf{AC}$ circuit of depth $O((\log n) \cdot \poly(\log \log n))$ and size $\poly(n)$.
\end{proposition}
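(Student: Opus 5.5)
Following Luks \cite[Section~3]{LuksBoundedValence}, the image $\text{Im}(\pi_{r})$ is the subgroup of $\Aut_{e}(X_{r})$ of those $\sigma$ that extend to $X_{r+1}$. Equivalently, $\sigma$ must preserve the edges of $X_{r+1}$ inside $V(X_r)$ (new edges between vertices at the boundary layer), and the induced action of $\sigma$ on $A$ (via $a \mapsto a^{\sigma}$) must preserve the multiplicity function $a \mapsto |f^{-1}(a)|$ together with the edge structure among the new vertices. Since all new vertices lie at distance $r+1$ from $e$, any edge among them is outside $X_{r+1}$, so only the first two conditions matter. Consequently $\text{Im}(\pi_r)$ is the stabilizer in $\Aut_e(X_r)$ of a coloured structure: a coloring of pairs in $V(X_r)$ (new edges) and a coloring of the subsets $a\in A$ by $|f^{-1}(a)|$. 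My plan is to realise this stabilizer as a \algprobm{Coset Intersection}/set-stabilizer computation on a domain of size $\poly(\log n)$, or more cleanly via Luks' sequential tower of point-stabilizers, and then to apply Lemma~\ref{PermutationGroupsNC}.

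The concrete approach I would take is to avoid acting on $A$, which has size $\poly(n)$ and not polylog. Instead, encode the constraint directly on $V(X_r)$. The coloring of pairs is a set of size $O(\log^2 n)$, and $\Aut_e(X_r)$ acts on pairs, so the domain has size $m^2\in O(\log^2 n)$. For the subsets $a$, only those $a$ with $f^{-1}(a)\neq\emptyset$ matter; there are at most $m$ of them. I would add these as at most $m$ extra points, but $\sigma$ need not map the list to itself, so I would instead proceed as Luks does. I would build the chain $G_0=\Aut_e(X_r)\geq G_1\geq\cdots$, where $G_{j}$ additionally preserves the coloured set system restricted to the first $j$ relevant subsets, or alternatively use the set-transporter reduction to \algprobm{Coset Intersection} of Lemma~\ref{PermutationGroupsNC}\ref{CosetInt}. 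For each color class $c$ of pairs, stabilizing $E_c\subseteq\binom{V(X_r)}{2}$ is a set-stabilizer. This reduces to intersecting the action image of $\Aut_e(X_r)$ on pairs, a group of degree $O(\log^2 n)$ obtained in $\FOPLL$, with the group $\text{Sym}(E_c)\times\text{Sym}(\overline{E_c})$. Each such intersection is \FOPLL by Lemma~\ref{PermutationGroupsNC}(h) with $c=2$.

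For the subset conditions, I would use the fact that each relevant $a$ is a subset of $V(X_r)$ of size $\leq\Delta$. The stabilizer of the family $\{a : |f^{-1}(a)|=k\}$ is the set-stabilizer of a family of at most $m$ subsets. I expect this step to be the main obstacle, since the action on all subsets is on too large a domain. I would try to handle it by sequentially fixing one subset at a time. For each of the at most $m$ relevant subsets $a_1,\dots,a_t$, I would compute the coset of elements mapping $a_i$ into the family, using transporters on the subset-orbit restricted to the current group via the same pair-encoding trick (a subset is recovered from its characteristic vector, and transporter reduces to coset intersection in $\text{Sym}(m)$). Equivalently, I would treat each step as Luks' tower, with $O(m)=O(\log n)$ sequential rounds. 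Each round uses \FOPLL subroutines from Lemma~\ref{PermutationGroupsNC} (membership, pointwise stabilizers, coset intersection), and transversals where needed via Lemma~\ref{lem:Transversals}, with depth $O(\log n\cdot\poly(\log\log n))$. Since there are $O(\log n)$ rounds of $\poly(\log\log n)$ depth, or a constant number of transversal calls, the total depth is $O((\log n)\cdot\poly(\log\log n))$ and the size remains $\poly(n)$. The result would be generators $\mathcal{S}$ for $\text{Im}(\pi_r)$.
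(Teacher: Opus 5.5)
Your characterization of $\text{Im}(\pi_r)$ is correct. The element $\sigma$ must stabilize the new edges $A'$ and each class $A_s$. Only the at most $m$ subsets $a$ with $f^{-1}(a)\neq\emptyset$ need tracking, since $A_0$ is then stabilized automatically. Handling $A'$ as a set stabilizer in the induced action on pairs is also fine, provided you take Lemma~\ref{PermutationGroupsNC}\ref{CosetInt} at face value for degree $O(\log^2 n)$.

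The gap is the step you yourself flag: computing the stabilizer in $G$ of the coloured family $\mathcal{F}=\{a_1,\dots,a_t\}$. Imposing the constraints one subset at a time does not give a chain of subgroups.
\begin{itemize}
\item The set of $\sigma$ sending $a_1,\dots,a_j$ colour-preservingly into $\mathcal{F}$ is a union of transporter cosets, and in general is not a group. For example, in $\text{Sym}(3)$ with $\mathcal{F}=\{\{1\},\{2\}\}$, the permutations sending $\{1\}$ into $\mathcal{F}$ form a $4$-element set. So this set cannot be carried forward by generators.
\item If instead $G_j$ fixes $a_1,\dots,a_j$ setwise, the chain ends in the subgroup fixing every $a_i$. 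That loses every element permuting equally coloured subsets.
\item Carrying all partial assignments $a_i\mapsto b_i$ explicitly costs up to $t!$ cosets of $\bigcap_i G_{\{a_i\}}$. Take $t=\Theta(\log n)$ equally coloured singletons $\{x_1\},\dots,\{x_t\}$ with $G$ inducing $\text{Sym}(\{x_1,\dots,x_t\})$ (e.g.\ pendant neighbours of an endpoint of $e$, each with one new child). The index is then $t!=n^{\Theta(\log\log n)}$, so the size is not polynomial.
\end{itemize}
Your appeal to Lemma~\ref{lem:Transversals} likewise has nothing to stand on, since you never establish an index bound of the form $(\log n)^{O(\log\log n)}$.

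This is exactly what the paper's proof supplies. It colours $A$ (including $A'$) and solves Luks' \algprobm{Color Automorphism} problem for $G\in\Gamma_m$ by divide and conquer. In the intransitive case it splits into orbits and intersects the resulting cosets. In the transitive case it passes to a minimal block system whose kernel $K$ has index $(\log n)^{O(\log\log n)}$ by Babai--Cameron--P\'alfy. Lemma~\ref{lem:Transversals} then applies, and the recursion has height $O(\log\log n)$.

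A repair within your own framework would be to encode the incidence relation $I=\{(v,j): v\in a_j\}$ as a set stabilizer. Use the product action of $G\times\prod_{c}\text{Sym}(\{j:\chi(a_j)=c\})$ on $V(X_r)\times[t]$, a domain of size $O(\log^2 n)$, and project to $G$. Be aware, though, that this, like your pair step, invokes Lemma~\ref{PermutationGroupsNC}\ref{CosetInt} at degree $O(\log^2 n)$. The same encoding would already give $\Aut_e(X)$ outright, as the stabilizer of $E(X)$ and $e$ in the action of $\text{Sym}(V(X))$ on pairs. Your argument would then bypass Luks' method altogether and rest entirely on the paper's footnoted extension of that lemma beyond $m\in O(\log n)$.
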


Proposition~\ref{prop:ComputeImage} immediately yields Theorem~\ref{thm:LuksBoundedGenerator}.

\begin{proof}[Proof of Theorem~\ref{thm:LuksBoundedGenerator}]
We proceed for $m-1 \in O(\log n)$ times iterations. At each iteration, we apply Proposition~\ref{prop:ComputeKernel} and Proposition~\ref{prop:ComputeImage}. The result now follows.
\end{proof}

The remainder of this section will be devoted to the proof of Proposition~\ref{prop:ComputeImage}.

\noindent Luks \cite{LuksBoundedValence} previously established that $\sigma \in \Aut_{e}(X_{r})$ is in $\text{Im}(\pi_{r})$ if and only if, for each $0 \leq s \leq m-1$, $\sigma$ stabilizes the set of fathers of $s$-tuples:
\[
A_{s} := \{ a \in A : |f^{-1}(a)| = s \},
\]

\noindent as well as the set $A'$ of new edges. Color $A$ accordingly with $2m$ colors (see e.g., \cite[Page~49]{LuksBoundedValence}). We need to now find the automorphisms in $G = \Aut_{e}(X_{r})$ acting on $A$, that preserve the edge colors. We now recall some notions from Luks \cite{LuksBoundedValence}.

\begin{definition} \label{def:ColorPreservingAutomorphisms}
Let $A$ be a colored set, with coloring $\chi$. Let $B \subseteq A$ and $K \leq \text{Sym}(A)$. The set of permutations in $K$ that preserve the colors in $B$ is:
\[
\mathcal{C}_{B}(K) := \{ \sigma \in K : \text{for all } b\in B, \chi(\sigma(b)) = \chi(b) \}.
\]
\end{definition}

\noindent Observe that:
\begin{align*}
&\mathcal{C}_{B}(K \cup K') = \mathcal{C}_{B}(K) \cup \mathcal{C}_{B}(K'), \text{ and} \\
%&\mathcal{C}_{B \cup B'}(K) = \mathcal{C}_{B}(\mathcal{C}_{B'}(K)).
&\mathcal{C}_{B \cup B'}(K) = \mathcal{C}_{B}(K) \cap \mathcal{C}_{B'}(K).
\end{align*}
Note that if $\mathcal{C}_{B}(\sigma B)$ is non-empty, then it is a left-coset of the subgroup $\mathcal{C}_{B}(G) \leq K$ \cite[Lemma~2.4]{LuksBoundedValence}.

\begin{remark}
Luks \cite{LuksBoundedValence} uses that $\mathcal{C}_{B \cup B'}(K) = \mathcal{C}_{B}(\mathcal{C}_{B'}(K))$. As $m \in O(\log n)$, we can compute $\mathcal{C}_{B}(K) \cap \mathcal{C}_{B'}(K)$ in $\FOPLL$ (Lemma~\ref{PermutationGroupsNC}\ref{CosetInt}), and we will later take advantage of this.
\end{remark}

We now turn to solving the \algprobm{Color Automorphism} problem. For $k \in \mathbb{N}$, let $\Gamma_{k}$ be the class of groups $G$ where all the non-Abelian composition factors of $G$ are subgroups of $\text{Sym}(k)$.

\begin{definition}
The \algprobm{Color Automorphism} problem is defined as follows:
\begin{itemize}
\item \textsf{Instance:} Generators for a subgroup $G \leq \text{Sym}(A)$ with $G \in \Gamma_{k}$, a $G$-stable subset $B$, and $\sigma \in \text{Sym}(A)$.

\item \textsf{Solution:} The set $\mathcal{C}_{B}(\sigma G)$.
\end{itemize}
\end{definition}

\noindent We will consider the \algprobm{Color Automorphism} problem in the case when $k = m \in O(\log n)$. In order to solve the \algprobm{Color Automorphism} problem, we proceed via a divide-and-conquer procedure. We have the following cases in the recursion:
\begin{itemize}
\item \textbf{Case 1 (Base Case):} The recursion bottoms out when $|B| = 1$. In this case, we compute the \algprobm{Pointwise Stabilizer} of $\mathcal{C}_{B^{\sigma^{-1}}}(G)$ in \FOPLL using Lemma~\ref{PermutationGroupsNC}(d). Now $\sigma\mathcal{C}_{B^{\sigma^{-1}}}(G) = \mathcal{C}_{B}(\sigma G)$.

\item \textbf{Case 2 (Intransitive Case):} Suppose that $B$ is the disjoint union of $G$-stable subsets $Z_1, \ldots, Z_k$. We use Lemma~\ref{PermutationGroupsNC}(g) to, in \FOPLL, break $B$ up into orbits $Z_1, \ldots, Z_k$. In particular, for each $i \in [k]$, $G$ acts transitively on $Z_i$. We now have that:
\[
\mathcal{C}_{B}(\sigma G) = \bigcap_{i=1}^{k} \mathcal{C}_{Z_{i}}(\sigma G).
\] 

We may thus compute each $\mathcal{C}_{Z_{i}}(\sigma G)$ in parallel. Note that as $G$ acts transitively on each $Z_{i}$, the recursive calls for $\mathcal{C}_{Z_{i}}(\sigma G)$ fall under Case 1 or Case 3. In order to compute the intersection, we use a binary tree circuit of depth $O(\log k) \leq O(\log |X|) \leq O(\log \log n)$. At each node of the binary tree, we utilize the $\FOPLL$ algorithm for $\algprobm{Coset Intersection}$ (Lemma~\ref{PermutationGroupsNC}\ref{CosetInt}). Thus, the total non-recursive work in this case is $\FOPLL$.

\item \textbf{Case 3 (Transitive Case):} If $|B| > 1$ and $B$ is not the disjoint union of at least two $G$-stable subsets, then we find a minimal $G$-block system in $B$:
\[
\Phi = \{ B_{1}, B_{2}, \cdots, B_{\ell} \}.
\]

As $|X| \in O(\log n)$, we have by Lemma~\ref{PermutationGroupsNC}(f) that such a block system is \FOPLL-computable. Furthermore, by Lemma~\ref{PermutationGroupsNC}(c), we may compute the kernel $K$ of the $G$-action on $\Phi$ in $\FOPLL$. From the work of Babai, Cameron, and Palfy \cite{BabaiCameronPalfy}, we have that $[G : K] \in (\log n)^{O(\log \log n)}$. Thus, we may write $G$ as a union of $(\log n)^{O(\log \log n)}$ cosets:
\[
G = \bigcup_{i=1}^{[G : K]} \tau_{i}K.
\]
This step is computable using an $\textsf{AC}$ circuit of depth $O((\log n)\poly(\log \log n))$ and size $\poly(n)$, using Lemma~\ref{lem:Transversals}. Now the problem breaks up as follows:
\[
\mathcal{C}_{B}(\sigma G) = \bigcup_{i=1}^{[G:K]} \mathcal{C}_{B}(\sigma \tau_{i}K).
\]

We now, in parallel, recursively compute $\mathcal{C}_{B}(\sigma \tau_{i}K)$ for each $1 \leq i \leq [G:K]$. As $K$ fixes (setwise) each block of $\Phi$, the computation for each $\mathcal{C}_{B}(\sigma \tau_{i}K)$ falls under Case 1 or Case 2. We may, in parallel, recursively compute each $\mathcal{C}_{B}(\sigma \tau_{i}K)$. Note that each recursive call deals with subsets of size $|B|/\ell$. 

Note that when we recombine the cosets, we must take care to ensure that we only have $\text{poly}(n)$ generators. To accomplish this, we use Lemma~\ref{PermutationGroupsNC}(e). As our permutation domain has size $O(\log n)$, each application of Lemma~\ref{PermutationGroupsNC}(e) is \FOPLL-computable. We use a binary tree circuit of depth $\log [G:K] \in O((\log \log n)^2)$ to compute $\mathcal{C}_{B}(\sigma G)$ from the $\mathcal{C}_{B}(\sigma \tau_{i}K)$ ($1 \leq i \leq [G : K]$). This yields a circuit of depth $\poly(\log \log n)$ and size $\poly(n)$ to compute $\mathcal{C}_{B}(\sigma G)$.
\end{itemize}

\noindent \\ \textbf{Complexity Analysis.} First, observe that a recursive call at the intransitive case (Case 2) results in either the base case (Case 1) or the transitive case (Case 3). Similarly, a recursive call at the transitive case (Case 3) results in either the base case (Case 1) or the intransitive case (Case 3). The recursive calls in the transitive case reduce the size of the problem by at least $1/2$. Thus, the recursion tree thus has height $O(\log m) = O(\log \log n)$. Each level of the recursion tree is computable using an $\textsf{AC}$ circuit of depth $O((\log n)\poly(\log \log n))$ and size $\poly(n)$. Thus, in total, we require an $\textsf{AC}$ circuit of depth $O((\log n)\poly(\log \log n))$ and size $\poly(n)$, as desired. This completes the proof of Proposition~\ref{prop:ComputeImage}.

\section{Linear Code Equivalence for Small Codes in Parallel}

In this section, we establish the following:

\begin{theorem}[cf. {\cite[Theorem~7.1]{BCGQ}}] \label{thm:LinearCodeEquivalence}
Let $m \in O(\log n)$, and let $C_1, C_2$ be linear codes of length $m$ given by their generator matrices. We can decide equivalence between $C_1$ and $C_2$, as well as compute the coset of equivalences, using an $\textsf{AC}$ circuit of depth $O((\log^2 n) \cdot \poly(\log \log n))$ and size $\poly(n)$.
\end{theorem}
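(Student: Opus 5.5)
We follow Babai's reduction of \algprobm{Linear Code Equivalence} to isomorphism of small graphs \cite[Theorem~7.1]{BCGQ}, and then run our parallel version of Luks' method (Theorem~\ref{thm:LuksBoundedGenerator}) on the resulting graphs. Let $C_1, C_2 \leq \mathbb{F}^{m}$ be given by generator matrices $A_1, A_2$ of size $d \times m$. If $d$ or the ranks differ, we reject; rank computation is in $\textsf{AC}^{2}$ \cite{VollmerText}, which is well within budget. Note that $q^{d} \leq q^{m}$ need not be polynomial in $n$ for large $q$. We therefore avoid enumerating codewords and instead work with a \emph{canonical basis} relative to a subset of coordinates.

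First, in $\textsf{AC}^{2}$ we bring each $A_i$ into reduced row echelon form. Babai's approach guesses an information set $I \subseteq [m]$ with $|I| = d$; the code restricted to $I$ is the identity. Any equivalence maps an information set of $C_1$ to an information set of $C_2$. Hence we fix one information set $I_1$ of $C_1$ and try, in parallel, all candidate information sets $I_2$ of $C_2$; there are at most $2^{m} = \poly(n)$ of them. For each such pair, we write the codes in systematic form $[\,\mathrm{Id}_d \mid M_1\,]$ and $[\,\mathrm{Id}_d \mid M_2\,]$. An equivalence sending $I_1 \to I_2$ then amounts to permutations $\pi$ of the $d$ rows/identity columns and $\rho$ of the remaining $m-d$ columns such that $M_2 = P_\pi M_1 P_\rho$, possibly up to monomial scaling if we allow it. This is precisely isomorphism of the bipartite "matrix graph" on $d + (m-d) = m$ vertices with edges colored by the entries of $M_i$. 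Colored edges can be encoded as uncolored graphs with $O(m)$ vertices, using gadgets of size $O(\log q)$ per color bit; this keeps the vertex count $O(\log n)$ after replacing $m$ by $O(m \log q)$. If $\log q$ is not $O(1)$, we instead feed the color classes directly as a coloring into the \algprobm{Color Automorphism} procedure underlying Proposition~\ref{prop:ComputeImage}. That procedure already handles colored sets, so no blowup occurs.

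Next, for each of the $\poly(n)$ pairs $(I_1, I_2)$, we apply Theorem~\ref{thm:LuksBoundedGenerator}, modified as in Luks to return the coset $\mathrm{Iso}(X,Y) = \sigma\Aut(X)$ rather than a yes/no answer. This is the standard trick of computing $\Aut(X \sqcup Y)$ and extracting a swapping element with Lemma~\ref{PermutationGroupsNC}. Each call has depth $O(\log^{2} n \cdot \poly(\log\log n))$ and polynomial size, and all calls run in parallel. Each resulting coset is projected to the column permutation in $\mathrm{Sym}(m)$. We then assemble $\text{CodeEq}(A_1,A_2)$ as the union of the nonempty cosets. All these cosets are cosets of the single group $\text{CodeEq}(A_1,A_1)$, which we obtain from the pairs $(I_1, I)$ with $I$ an information set of $C_1$. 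Hence the union is one coset $\sigma\,\text{CodeEq}(A_1,A_1)$. Its group is generated by the union of the pieces' generators, pruned with Lemma~\ref{PermutationGroupsNC}(e) along a binary tree of depth $O(\log n)$ with $\FOPLL$ work per node.

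The main obstacle is the faithfulness of the reduction: we must verify that a permutation $\rho$ of the non-information columns, together with a row permutation, really corresponds to a code equivalence, and that every equivalence arises this way for some $I_2$. This hinges on uniqueness of the systematic form. The $\text{GL}_d$ freedom is absorbed exactly by the identity block, leaving only permutation (or monomial) matrices. We would check this carefully, then confirm that the total depth is dominated by the graph isomorphism calls, giving $O(\log^{2} n \cdot \poly(\log\log n))$.
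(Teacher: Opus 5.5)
Your proposal follows the paper's proof: Babai's information-set reduction (Proposition~\ref{prop:ReductionCodeEquivalence}) to $\binom{m}{d}\le 2^{m}$ instances of isomorphism of edge-colored bipartite graphs on $m$ vertices, each solved by the parallel Luks procedure of Theorem~\ref{thm:LuksBoundedGenerator}. You spell out more than the paper does, namely the assembly of the coset and the handling of edge colors, both of which the paper leaves implicit. One small slip: each piece indexed by $I_2$ is a coset of the setwise stabilizer of $I_1$ in $\text{CodeEq}(A_1,A_1)$, not of $\text{CodeEq}(A_1,A_1)$ itself, although your conclusion that the union is a single coset of $\text{CodeEq}(A_1,A_1)$ is correct.
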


In \cite{BCGQ}, Babai exhibited an algorithm that tests the equivalence of two linear codes of length $m$ in time $(2+o(1))^m$. He accomplished this by reducing to $\binom{m}{d}$ instances of \algprobm{Graph Isomorphism}-- precisely, isomorphism testing of $d \times (m-d)$ bipartite graphs with colored edges. We show that Babai's reduction is computable using an $\textsf{NC}$ circuit of depth $O(\log^2 m)$ and size $\poly(m)$. As $m \in O(\log n)$, we obtain that this reduction is \FOPLL-computable. 

\begin{proposition}[cf. {\cite[Theorem~7.1]{BCGQ}}] \label{prop:ReductionCodeEquivalence}
Equivalence of $d$-dimensional linear codes of length $m$, over any field, given by generators can be reduced to $\binom{m}{d}$ instances of isomorphism of $d \times (m-d)$ bipartite graphs with colored edges. In particular, if $m \in O(\log n)$, this reduction is \FOPLL-computable, assuming field operations at unit cost.
\end{proposition}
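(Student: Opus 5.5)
We follow Babai's reduction from \cite[Theorem~7.1]{BCGQ} and check that each step has shallow circuits once $m \in O(\log n)$. Let $A_1, A_2$ be $d \times m$ generator matrices of $C_1, C_2$. First we compute the rank of each $A_i$ and discard redundant rows, so that both have full rank $d$; if the dimensions differ, we reject. Gaussian elimination on a $d \times m$ matrix with unit-cost field operations is in $\textsf{NC}^2$ relative to $m$, i.e.\ depth $O(\log^2 m) = O((\log\log n)^2)$ and size $\poly(m)$, so this step is \FOPLL. Next we fix an information set $I_1 \subseteq [m]$ of size $d$ for $C_1$, meaning the columns of $A_1$ indexed by $I_1$ are linearly independent. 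We then multiply by the inverse of that $d \times d$ submatrix, which again takes $\textsf{NC}^2$ in $m$. After permuting columns, this puts $A_1$ into systematic form $[I_d \mid M_1]$ with $M_1 \in \text{Mat}_{d \times (m-d)}(\mathbb{F})$.

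Second, any equivalence $P$ sends $I_1$ to some information set $J$ of $C_2$. So we enumerate all $\binom{m}{d} \le 2^m = \poly(n)$ subsets $J \subseteq [m]$ of size $d$ in parallel. For each $J$, we test in \FOPLL whether the columns of $A_2$ indexed by $J$ are independent, and if so we bring $A_2$ into systematic form $[I_d \mid M_{2,J}]$ relative to $J$.

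The key algebraic claim is Babai's. Codes in systematic forms $[I_d \mid M_1]$ and $[I_d \mid M_2]$ are equivalent via a permutation sending the first $d$ coordinates to the first $d$ coordinates exactly when $M_2 = \Pi_1^{-1} M_1 \Pi_2$ for permutation matrices $\Pi_1 \in \text{Sym}(d)$ and $\Pi_2 \in \text{Sym}(m-d)$. To see this, note that $T$ must equal $\Pi_1^{-1}$ on the identity block, since $T A_1 P$ restricted to those columns must again be $I_d$. This condition is precisely isomorphism of the complete bipartite graphs on $[d] \sqcup [m-d]$ whose edge $(i,j)$ is colored by the field element $(M)_{ij}$, with the two sides kept distinct by vertex colors. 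Writing down each graph instance is $\ACz$ once $M_{2,J}$ is known. So the whole reduction is \FOPLL, assuming field operations at unit cost. The output is a disjunction over $J$ of $\binom{m}{d}$ colored bipartite graph isomorphism instances, each with $O(\log n)$ vertices.

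To recover the coset of equivalences, we also record for each $J$ the column permutation relating $I_1, J$ to the standard positions, and conjugate the graph isomorphisms back through it. The step I expect to require the most care is the correctness of the systematic-form reduction, namely that fixing one information set of $C_1$ loses no equivalences. This holds because every equivalence maps $I_1$ onto some information set of $C_2$, and all candidate sets $J$ are tried. The complexity bookkeeping is routine: every linear-algebra step runs on matrices of dimension $O(\log n)$, so it costs depth $\poly(\log\log n)$, and the $2^m = \poly(n)$ parallel branches keep the size polynomial.
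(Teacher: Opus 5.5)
Your proposal is correct and follows essentially the same route as the paper. Both use Gaussian elimination (depth $O(\log^2 m)$, hence \FOPLL) to put the first generator matrix in systematic form, run one parallel branch for each of the $\binom{m}{d}$ candidate $d$-subsets, and test nonsingularity of the corresponding $d \times d$ block of the second matrix. Both then observe that a basic equivalence forces $T = \Pi_1^{-1}$ and hence $M_2 = \Pi_1^{-1} M_1 \Pi_2$, which is exactly colored bipartite graph isomorphism.

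Two small refinements in your version are welcome. You explicitly choose an information set $I_1$, whereas the paper tacitly ``abuses notation'' by assuming the first $d$ columns are independent. You also note how to transport the graph isomorphisms back to the coset of equivalences.
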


\begin{proof}
We follow the strategy as in the proof of \cite[Theorem~7.1]{BCGQ}. Let $A, B \in \text{Mat}_{d \times m}(\mathbb{F})$ be matrices of rank $d$. We need to find the set of pairs $(T, P) \in \text{GL}_{d}(\mathbb{F}) \times \text{Sym}(m)$ (where we treat $P$ as a permutation matrix) such that $B = TAP$. Our first goal is to place $A$ in the form $[I_{d} | A_{1}]$. This step is  computable with an $\textsf{NC}$ circuit of depth $O(\log^2 m)$ and size $\poly(m)$-- hence, in \FOPLL-- using Gaussian elimination \cite{BorodinParallelMatrix, MulmuleyRank}.  So for the remainder of the proof, we will abuse notation and refer to the previous $A$ as $A = [I_{d} |A_{1}]$.

The remainder of the proof is identical as in \cite[Theorem~7.1]{BCGQ}, and these details provide the precise reduction to \algprobm{GI}. Thus, for completeness, we will recall these details here. At a multiplicative cost of $\binom{m}{d}$ to the circuit size, we may in parallel, consider all such $d$-element subsets of $[m]$. For clarity, fix such a subset $[d]^{\sigma} \subseteq [m]$ under a hypothetical $\sigma \in \text{Sym}(m)$ making $A$ equivalent to $B$. By applying a single permutation to the columns, we reduce this case to those $\sigma$ satisfying $[d]^{\sigma} = [d]$. Following the proof of \cite[Theorem~7.1]{BCGQ}, we will refer to such $\sigma$ as \emph{basic equivalences}. The corresponding permtuation matrix $P(\sigma)$ can be written as a block-diagonal matrix $\text{diag}(P_{0}, P_{1})$, where $P_{0} \in \text{Mat}_{d \times d}(\mathbb{F})$ and $P_{1} \in \text{Mat}_{(m-d) \times (m-d)}(\mathbb{F})$ are permutation matrices.

We reduce the search for basic equivalences to one instance of finding the isomorphisms of $d \times (n-d)$ bipartite graphs, where the edges are colored using the elements of $\mathbb{F}$.  Let $B = [B_0 | B_1]$, where $B_0 \in \text{Mat}_{d \times d}(\mathbb{F})$ consists of the first $d$ columns of $B$. For any pair $(T, P)$ such that $B = TAP$, we will necessarily have that $B_{0} = TP_{0}$. Thus, if $B_{0}$ is singular, then there are no basic equivalences.

Suppose now that $B_{0} \in \text{GL}_{d}(\mathbb{F})$. Then $B_{0}B = [I_{d} | B_{0}^{-1}B_{1}]$. Thus, without loss of generality, we may assume that $B = [I_{d} | B_{1}]$. As we already have that $A = [I_{d} | A_{1}]$, we are looking for basic equivalences between $A = [I_{d} | A_{1}]$ and $B = [I_{d} | B_{1}]$. Now suppose that there exists a pair $(T, P) \in \text{GL}_{d}(\mathbb{F}) \times \text{Sym}(m)$, where $P = P(\sigma) = \text{diag}(P_0, P_1)$ is the permutation matrix corresponding to a basic equivalence $\sigma$, and $P_0, P_1$ are permutation matrices. Then $TP_{0} = \text{Id}_{d}$, which provides that $T = P_{0}^{-1}$ and $B_{1} = TA_{1}P_{1}$. Thus, we are searching for permutations $\sigma_0, \sigma_1$ such that $B_{1} = P(\sigma_{0}) A_{1} P(\sigma_{1})$. This is precisely the problem of deciding isomorphism of bipartite graphs with $\mathbb{F}$-colored edges, given by their incidence matrices $A_1, B_1$. The result now follows.	
\end{proof}

We now prove Theorem~\ref{thm:LinearCodeEquivalence}.

\begin{proof}[Proof of Theorem~\ref{thm:LinearCodeEquivalence}]
By Proposition~\ref{prop:ReductionCodeEquivalence}, we can in \FOPLL, reduce to the case of testing isomorphism of $\binom{m}{d}$ graphs with $O(m)$ vertices. By Theorem~\ref{thm:LuksBoundedGenerator}, we can decide isomorphism of such graphs using an $\textsf{AC}$ circuit of depth $O((\log^2 n) \cdot \poly(\log \log n))$ and size $\poly(n)$. The result now follows.
\end{proof}

Qiao, Sarma, and Tang \cite{QST11} considered a slight generalization of \algprobm{Linear Code Equivalence}, which they called \algprobm{Generalized Code Equivalence}.

\begin{definition}[{\cite[Problem~2]{QST11}}]
The \algprobm{Generalized Code Equivalence} problem takes as input two $d \times m$ matrices $C_{1}, C_{2}$ over the field $\mathbb{F}$, and a permutation group $S \leq \text{Sym}(m)$. We then ask if there exists some $B \in \text{GL}_{d}(\mathbb{F})$ and a permutation matrix $P \in S$ such that $BC_{1}P = C_{2}$.
\end{definition}

\begin{corollary}[cf. {\cite[Corollary~5.2]{QST11}}] \label{cor:GeneralizedCodeEquivalence}
Let $m \in O(\log n)$, and let $C_1, C_2$ be two linear codes of length $m$ given by their generator matrices. Let $S \leq \text{Sym}(m)$. We can solve the \algprobm{Generalized Code Equivalence} problem between $C_1$ and $C_2$ using an $\textsf{AC}$ circuit of depth $O((\log^2 n) \cdot \poly(\log \log n))$ and size $\poly(n)$, as well as compute the coset of equivalences.
\end{corollary}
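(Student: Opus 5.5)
The plan is to compute the full coset $\text{CodeEq}(C_1, C_2) \subseteq \text{Sym}(m)$ using Theorem~\ref{thm:LinearCodeEquivalence}, and then intersect it with $S$ using the \FOPLL\ procedure for small instances of \algprobm{Coset Intersection} (Lemma~\ref{PermutationGroupsNC}\ref{CosetInt}). If the set of permutation matrices $P$ admitting some $B \in \text{GL}_d(\mathbb{F})$ with $BC_1P = C_2$ is empty, we report non-equivalence. Otherwise it is a left coset $\pi \Aut(C_1)$ (or a right coset, depending on convention) of the permutation code automorphism group, and the generalized equivalences are exactly $\pi\Aut(C_1) \cap S$. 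This is again a (possibly empty) coset, and the instance is a yes-instance if and only if it is non-empty.

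The concrete steps are as follows. First, in \FOPLL\ we bring $C_1$ and $C_2$ to full row rank $d$ by Gaussian elimination, exactly as in Proposition~\ref{prop:ReductionCodeEquivalence}. If the ranks differ, we reject. Second, we invoke Theorem~\ref{thm:LinearCodeEquivalence} to obtain either ``not equivalent'' or a representative $\pi \in \text{Sym}(m)$ together with generators for $\Aut(C_1) \leq \text{Sym}(m)$. This step has depth $O((\log^2 n)\cdot\poly(\log\log n))$ and size $\poly(n)$. Third, since $m \in O(\log n)$, we apply Lemma~\ref{PermutationGroupsNC}\ref{CosetInt} to $\pi\Aut(C_1) \cap S\cdot \mathrm{id}$, which adds only $\poly(\log\log n)$ depth. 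If the output is non-empty with representative $P$, we recover the matrix $B$ by solving $BC_1P = C_2$ with \FOPLL\ linear algebra. The depths add to the claimed bound, and the size stays polynomial.

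The main obstacle is that Theorem~\ref{thm:LinearCodeEquivalence} has to deliver the whole coset in a form Lemma~\ref{PermutationGroupsNC}\ref{CosetInt} can consume: a single representative plus a $\poly(n)$-size generating set for $\Aut(C_1)$, rather than a decision bit. Via Proposition~\ref{prop:ReductionCodeEquivalence}, the equivalences arise as a union over the $\binom{m}{d} \le 2^m = \poly(n)$ choices of the $d$-subset. For each choice, the Luks-style procedure of Theorem~\ref{thm:LuksBoundedGenerator} (run with colored edges, and to its coset version $\mathcal{C}_B(\sigma G)$) yields a coset of basic equivalences. I would first check that each such coset, after conjugating back by the fixed column permutation chosen for that subset, lands inside $\text{Sym}(m)$. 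Each basic isomorphism $(\sigma_0,\sigma_1)$ of the bipartite graphs gives the permutation $\text{diag}(P_0,P_1)$ composed with that column permutation, together with $T = P_0^{-1}$.

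I would then merge the polynomially many cosets. Either we pick any non-empty one for the representative $\pi$ and gather all of them, translated by $\pi^{-1}$, as generators for $\Aut(C_1)$, or we simply take the union. We prune the generators with Lemma~\ref{PermutationGroupsNC}(e) in a binary tree of depth $O(\log \binom{m}{d}) = O(\log\log n)$ levels of \FOPLL\ work. If this union-of-cosets bookkeeping turns out to be awkward, the fallback is to intersect with $S$ separately within each subset's coset, and only then take the union. In that case the answer is yes if and only if some intersection is non-empty.
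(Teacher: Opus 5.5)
Your proposal is correct and follows the same route as the paper: compute $\text{CodeEq}(C_1,C_2)$ via Theorem~\ref{thm:LinearCodeEquivalence}, then intersect with $S$ using the \FOPLL{} \algprobm{Coset Intersection} procedure of Lemma~\ref{PermutationGroupsNC}\ref{CosetInt}. Your extra bookkeeping for assembling the coset from the $\binom{m}{d}$ subcases goes beyond what the paper spells out, since it simply cites the theorem; it contains one small slip, namely $\log\binom{m}{d} \le m \in O(\log n)$ rather than $O(\log\log n)$, but $O(\log n)$ levels of \FOPLL{} pruning (or a single application of Lemma~\ref{PermutationGroupsNC}(e) to the $\poly(n)$-size union) still stays within the claimed depth.
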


\begin{proof}
Qiao, Sarma, and Tang \cite[Corollary~5.2]{QST11} observed that \algprobm{Generalized Code Equivalence} may be solved by first solving \algprobm{Linear Code Equivalence}, and then performing \algprobm{Coset Intersection} with $S$. By Theorem~\ref{thm:LinearCodeEquivalence}, we can compute the coset of equivalences for \algprobm{Linear Code Equivalence} between $C_{1}$ and $C_{2}$ using an $\textsf{AC}$ circuit of depth $O((\log^2 n) \cdot \poly(\log \log n))$ and size $\poly(n)$. The \algprobm{Coset Intersection} procedure is $\FOPLL$-computable (Lemma~\ref{PermutationGroupsNC}\ref{CosetInt}). The result now follows.
\end{proof}

%%%%%%%%%%%%%%%%
\section{Isomorphism Testing of Coprime Extensions in Parallel}

Recall that $\mathcal{H}(\mathcal{A}, \mathcal{E})$ is the class of coprime extensions $H \ltimes N$ such that $N$ is Abelian and $H$ is elementary Abelian. In this section, we will establish the following.

\begin{theorem}[cf. {\cite{QST11}}] \label{thm:CoprimeExtensions}
Let $G_1, G_2$ be groups given by their multiplication tables. There exists a uniform $\textsf{AC}^{3}$ algorithm that can decide if $G_1, G_2 \in \mathcal{H}(\mathcal{A}, \mathcal{E})$; and if so, decides whether $G_1 \cong G_2$.
\end{theorem}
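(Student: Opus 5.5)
We follow the Qiao--Sarma--Tang strategy and implement each step in parallel, using Taunt's Lemma (Lemma~\ref{lem:Semidirect}) to reduce isomorphism to an equivalence question about representations, and then to \algprobm{Generalized Code Equivalence} (Corollary~\ref{cor:GeneralizedCodeEquivalence}).

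\textbf{Membership in $\mathcal{H}(\mathcal{A}, \mathcal{E})$.} For each prime $p \mid |G_i|$ we compute the set of elements of order a power of $p$; this takes $\ACz$ via the multiplication table. We then guess the set of primes $\pi$ dividing $|H|$ (there are only $O(\log n)$ primes, but we avoid enumerating subsets). Instead, we note that $N$ must be the unique normal Hall subgroup: $N$ is the set of elements whose order is coprime to $|G|/|N|$. Concretely, for each candidate partition we would try $N_\pi$ equal to the set of $\pi'$-elements. Cleaner, $N = $ the set of all elements $g$ with $g^{e}=1$, where $e$ ranges over divisors of $n$; there are $\poly(n)$ divisors, so in parallel we test, for each divisor $e$ with $\gcd(e,n/e)=1$, whether $N_e=\{g : g^e=1\}$ is an Abelian normal subgroup of order $e$. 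We also test whether some (equivalently the Hall) complement, e.g.\ the subgroup generated by $\{g: g^{n/e}=1\}$, is elementary Abelian of order $n/e$. Computing powers uses repeated squaring in $\AC^1$, and subgroup membership and closure tests are in $\textsf{L}$. By Schur--Zassenhaus, all complements are conjugate, so any choice of $H$ suffices.

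\textbf{Reducing isomorphism to representations.} Decompose $N=\prod_p N_p$ into Sylow subgroups, each characteristic, and further split $N_p$ into its homocyclic layers $\Omega$/$\mho$-quotients as in QST11; in the key case each relevant piece is an $\mathbb{F}_p$-module $\mathbb{Z}_p^{k}$ with $k \in O(\log n)$, on which $H \cong \mathbb{Z}_q^{\ell}$ acts. By Lemma~\ref{lem:Semidirect}, $G_1\cong G_2$ iff there is $\alpha\in \mathrm{Iso}(H_1,H_2)$ making the actions $\theta_1$ and $\theta_2\circ\alpha$ equivalent (simultaneously over all $p$-parts). By Proposition~\ref{prop:DecomposeRepresentation}, in $\AC^1$ we list the irreducible constituents of each action together with their multiplicities. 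Since $H$ is elementary Abelian and the characteristic is coprime, each irreducible constituent is determined by its kernel (a hyperplane-like subgroup) and a character orbit. Following QST11, we encode this data as a matrix over $\mathbb{F}_q$: columns are indexed by the (multiplicity-labelled, $O(\log n)$ many) irreducible constituents, each represented by a linear functional on $H$ describing its character up to Galois conjugacy. An isomorphism $\alpha$ then corresponds to $T\in\mathrm{GL}_\ell(\mathbb{F}_q)$ together with a permutation of columns. That permutation is constrained to preserve labels (prime $p$, layer, multiplicity, dimension), which is a Young subgroup $S$.

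\textbf{Code equivalence step.} The resulting problem is \algprobm{Generalized Code Equivalence} on codes of length $m\in O(\log n)$. By Corollary~\ref{cor:GeneralizedCodeEquivalence}, this is solvable with depth $O(\log^2 n\cdot\poly(\log\log n))$, and hence in $\AC^3$. The main obstacle I expect is making the encoding faithful: a constituent of dimension greater than $1$ corresponds to a Galois orbit of characters, not a single functional, so columns must be chosen canonically up to scalar and Frobenius action. One must also check that matching constituents equals equivalence of the full representations (Lemma~\ref{lem:CompareCharacters}). My first attempt would be to expand each constituent into its full Galois orbit of characters over the splitting field. The enlarged column set stays $O(\log n)$, since the total dimension is at most $\log n$, and the group $S$ is enlarged to allow the cyclic Frobenius shifts within an orbit. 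All other steps are in $\AC^1$, so the total depth is $O(\log^3 n)$.
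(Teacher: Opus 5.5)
Your membership test fails as written. The subgroup generated by $\{g : g^{n/e}=1\}$ is not a complement. By the conjugacy part of Schur--Zassenhaus, every element of order dividing $|H|$ lies in a conjugate of $H$, so this set generates the normal closure $H[N,H]$, which is strictly larger than $H$ whenever the action is nontrivial. For $S_3=\mathbb{Z}_2\ltimes\mathbb{Z}_3$ the three involutions generate all of $S_3$, so you would reject it, and likewise every member of $\mathcal{H}(\mathcal{A},\mathcal{E})$ with nontrivial action.

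The fix is what Lemma~\ref{lem:DecomposeCoprime} does: never build $H$. Instead check that $n/e=q^{\ell}$ and that $G/N_e$ is elementary Abelian ($g^{q}\in N_e$ and $[g,h]\in N_e$ for all $g,h$). Then let $G/N_e$ act on $N_e$ by conjugation, which is well defined because $N_e$ is Abelian.

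Two other steps are asserted rather than justified.
\begin{enumerate}
\item You claim the action on $N_p$ is determined up to equivalence by the induced actions on the layers $(\Omega_1(N_p)\cap p^{j-1}N_p)/(\Omega_1(N_p)\cap p^{j}N_p)$. This is the content of Le Gall's theorem (Proposition~\ref{prop:LeGall}, with conjugacy taken blockwise) and should be cited as such.
\item The functional labelling each constituent must be computed relative to one primitive $q$-th root of unity fixed for both groups. The paper does this by factoring $\Phi_q$ over $\mathbb{F}_p$ and comparing characters (Lemma~\ref{lem:ListIndexingTuples}, in $\textsf{NC}^{3}$). So your claim that everything outside the code step is in $\textsf{AC}^{1}$ needs an argument, though the $\textsf{AC}^{3}$ total is unaffected.
\end{enumerate}

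With these repairs your route is correct, and it differs from the paper's in ways that matter. You keep a single $\alpha\in\Aut(H)$ and put all primes and all layers into one code over $\mathbb{F}_q$. Columns are labelled by (prime, layer, multiplicity), and whole Frobenius orbits $\{v,pv,p^{2}v,\dots\}$ appear as columns. Since a linear $T$ satisfies $T(pv)=pT(v)$, the label-preserving Young subgroup already suffices, and there is no need to enumerate the up to $2^{k}$ indexing tuples.

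The paper instead tests each prime separately, collapses $N_p$ via $\Psi_p$ to a single module $\mathbb{Z}_p^{s_p}$, and applies Theorem~\ref{thm:ElemAbelian}. Your simultaneous, layer-labelled encoding is what the argument actually requires, as two pairs show (with $D_m$ dihedral of order $2m$):
\begin{itemize}
\item $D_{15}\times\mathbb{Z}_2$ and $S_3\times D_5$ are both of the form $\mathbb{Z}_2^{2}\ltimes\mathbb{Z}_{15}$. They satisfy $G_{1,p}\cong G_{2,p}$ for $p=3,5$, yet have centers of orders $2$ and $1$.
\item $S_3\times\mathbb{Z}_9\not\cong\mathbb{Z}_3\times D_9$, although their $\Psi_3$-images $\mathrm{diag}(-1,1)$ and $\mathrm{diag}(1,-1)$ are conjugate in $\mathrm{GL}_2(\mathbb{F}_3)$.
\end{itemize}
Your labelled code separates both pairs.
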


%%%%
\subsection{Additional Preliminaries}

We begin by recalling some additional preliminaries concerning coprime extensions. We note that coprime extensions are determined entirely by the isomorphism types of $N$, $H$ and their actions-- see Taunt's lemma (recalled as Lemma~\ref{lem:Semidirect}). We will now show how to compute a decomposition $G = H \ltimes N$, where $N$ is Abelian, $H$ is elementary Abelian, and $\text{gcd}(|H|, |N|) = 1$, if such a decomposition exists.

\begin{lemma} \label{lem:DecomposeCoprime}
Let $G$ be a group given by its multiplication table. We can, in $\textsf{FL} \cap \textsf{FOLL}$, decide if there exists an Abelian normal Hall subgroup $N \trianglelefteq G$ such that $G/N$ is elementary Abelian and $\text{gcd}(|N|, |G/N|) = 1$, as well as compute such an $N$, if one exists, and $G/N$ in $\textsf{FL} \cap \textsf{FOLL}$.
\end{lemma}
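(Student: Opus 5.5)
The plan is to exploit the fact that, in a coprime extension $H \ltimes N$ with $N$ Abelian and $H$ elementary Abelian, the subgroup $N$ is pinned down by element orders. Since $\gcd(|N|,|G/N|)=1$ and $N$ is normal, $N$ is the unique subgroup of its order. More concretely, let $\pi$ be the set of primes dividing $|N|$. Then $N$ is exactly the set of elements of $G$ whose order is a $\pi$-number. Indeed, any $\pi$-element $g$ maps to an element of $G/N$ whose order divides both $|G/N|$ and a $\pi$-number, hence maps trivially, so $g \in N$.

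Moreover, $G/N$ elementary Abelian forces $|G/N| = q^{\ell}$ for a single prime $q$. Then $\pi$ is the set of all primes dividing $|G|$ other than $q$, and $N = \{g \in G : g^{|G|_{q'}} = 1\}$, where $|G|_{q'}$ denotes the $q'$-part of $|G|$.

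The algorithm therefore proceeds as follows.
\begin{enumerate}
\item Factor $n = |G|$. This is feasible in $\textsf{FL}\cap\textsf{FOLL}$, since $n$ has only $O(\log n)$ bits and trial division by numbers up to $n$ can be done in parallel.
\item In parallel, try each prime $q \mid n$, together with the case $H$ trivial, i.e.\ $N = G$.
\item For a fixed $q$, compute the candidate set $N_q := \{g : g^{n_{q'}} = 1\}$. Powering in a group given by its multiplication table is in $\textsf{FL}\cap\textsf{FOLL}$ by repeated squaring: $O(\log n)$ multiplications, each a table lookup, which can be composed in $\textsf{FOLL}$ via the standard doubling trick of \cite{BKLM} for computing powers in Cayley groups.
\item Check the required properties of $N_q$. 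We need that $N_q$ is closed under multiplication, which is an $\ACz$ check over all pairs. We need $|N_q| = n_{q'}$, computed by counting, which is fine in $\textsf{L}$ and also in $\textsf{FOLL}$ since we only compare a count to a fixed value. Normality of $N_q$ is a conjugation check over all pairs, in $\ACz$ given the table. Commutativity is an $\ACz$ check over all pairs.
\item Check that $G/N_q$ is elementary Abelian. It suffices to verify that $[g,h] \in N_q$ for all $g,h$, and that $g^q \in N_q$ for all $g$; again this is powering plus membership.
\item If some $q$ passes, output $N = N_q$. Then output $G/N$ by computing the coset representatives, taking the least element of each coset $gN$, and writing its multiplication table; this is in $\ACz$ given membership in $N$.
\end{enumerate}
Coprimality holds automatically, since $|N_q| = n_{q'}$.

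For correctness, any valid decomposition must arise this way for the prime $q$ dividing $|G/N|$, by the argument in the first paragraph. Conversely, any $N_q$ passing the checks is a valid decomposition. When several $q$ pass, any of them is acceptable.

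The main obstacle is making the element-order and powering computations lie simultaneously in $\textsf{FL}$ and in $\textsf{FOLL}$. For $\textsf{FL}$ this is immediate by iterating multiplication. For $\textsf{FOLL}$, I would cite the \cite{BKLM} result that computing $g^k$ in a Cayley group is in $\textsf{FOLL}$, and confirm that all remaining steps are $\ACz$ checks layered on top, so that the total depth stays $O(\log\log n)$.
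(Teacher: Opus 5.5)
Your route works, but one step is justified incorrectly: the test $|N_q| = n_{q'}$. It is false, as a general principle, that comparing the size of a subset of $G$ (given by its characteristic vector) to a fixed value is in $\textsf{FOLL}$. By padding, an \textsf{OR} of such fixed-target tests computes \textsf{Parity}, which needs depth $\Omega(\log n/\log\log n)$.

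The test is redundant, so you can simply drop it:
\begin{itemize}
\item Once closure shows $N_q$ is a subgroup, every element of $N_q$ has $q'$-order, so $q \nmid |N_q|$ by Cauchy's theorem.
\item Your step-5 test ($g^{q} \in N_q$ for all $g$) makes $G/N_q$ a group of exponent dividing $q$, hence a $q$-group, again by Cauchy.
\end{itemize}
Together these give coprimality, and in fact $|N_q| = n_{q'}$, with no counting. Normality is also automatic, since $N_q$ is defined through element orders.

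In the same spirit, keep $G/N$ indexed by your chosen coset representatives. Renumbering the cosets as $1,\dots,|G/N|$ would be a prefix count. The paper likewise handles $G/N$ implicitly, via the $\textsf{AC}^{0}$ test $gh^{-1} \in N$.

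With that repair, your argument is correct and takes a different route from the paper's. The paper never guesses the complement's prime. For every prime $p$ it computes the set $S_p$ of $p$-elements (via \cite{BKLM}) and keeps those $S_p$ that are Abelian subgroups, hence unique normal Sylow subgroups. It then takes $N$ to be the product of all of them. This choice is canonical and fits the later Sylow-by-Sylow reduction to the groups $G_{i,p}$. However, the paper leaves implicit why this maximal product suffices: if the Sylow subgroup for the complement prime is also normal, then $G$ is Abelian and one lands in the case $N = G$. It also defers the check that $G/N$ is elementary Abelian to the later theorems.

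Your version instead identifies $N$ directly as the set of $q'$-elements for the complement prime $q$, tries all $q$ in parallel, and verifies every condition of the statement within the lemma itself. So your decision procedure is self-contained.
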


\begin{proof}
We first note that any such $N$ is the direct product of a subset of the Abelian normal Sylow subgroups corresponding to the prime divisors of $|N|$. Thus, it suffices to take $N$ as the direct product of each Abelian normal Sylow subgroup of $G$. We proceed as follows, to first construct $N$. For each prime $p$ dividing $|G|$, we may in $\textsf{FL} \cap \textsf{FOLL}$,  compute the set $S_{p}$ of elements in $G$ whose order is a power of $p$ \cite{BKLM}. We then test, in $\ACz$, whether $S_{p}$ satisfies the group axioms and whether it is Abelian. If both of these conditions are satisfied, then $S_{p}$ forms the unique Abelian normal Sylow $p$-subgroup of $G$. Thus, we can construct $N$ in $\textsf{FL} \cap \FOLL$.

We now turn to constructing $H$. While the Schur--Zassenhaus theorem (cf. \cite{Robinson1982}) guarantees that there exists $H \leq G$ such that $G = H \ltimes N$, we will not explicitly construct such an $H$. Instead, we will use the quotient group representation $G/N$. Note that in $\textsf{AC}^{0}$, we may test whether any two elements $g, h \in G$ belong to the same coset of $G/N$, by testing whether $gh^{-1} \in N$.

Next, we claim that for any two elements $g_{1}, g_{2}$ belonging to the same coset of $G/N$ and any $n \in N$, we have that $n^{g_{1}} = n^{g_{2}}$. As $g_{1}, g_{2}$ belong to the same coset of $G/N$, there exists some $n' \in N$ such that $g_{2} = g_{1}n'$. As $N$ is Abelian, we have that: 
\[
n^{g_{2}} = g_{1}n' \cdot n \cdot (n')^{-1} g_{1}^{-1} = n^{g_{1}}.
\]
Thus, the quotient group representation of $H \cong G/N$ fully specifies the action. The result now follows.
\end{proof}

\subsection{Representations of $\mathbb{Z}_{q}^{\ell}$ over $\mathbb{Z}_{p}$}

Let $p, q$ be distinct primes. We recall key facts regarding representations of $\mathbb{Z}_{q}^{\ell}$ over $\mathbb{Z}_{p}$, as well as additional background from \cite[Section~5.2]{QST11}. For $n \in \mathbb{N}$, the \emph{cyclotomic polynomial} $\Phi_{n}(x)$ is the unique irreducible polynomial that is a divisor of $x^{n}-1$, but for all $k < n$, is not a divisor of $x^{k}-1$. Suppose that $\Phi_{q}(x)$ factors as $g_{1}(x) \cdots g_{r}(x)$, where $g_{1}(x), \ldots, g_{r}(x) \in \mathbb{Z}_{p}[x]$ are monic polynomials of degree $d = (q-1)/r$. Note that $d = |\mathbb{Z}_{q}^{\times}|$. Let $M \in \text{GL}_{d}(\mathbb{Z}_{p})$ be the companion matrix of $g_{1}(x)$. For $v \in \mathbb{Z}_{q}^{\ell}$ with $v \neq 0$, define $v^{*} : \mathbb{Z}_{q}^{\ell} \to \mathbb{Z}_{q}$ by mapping $v^{*}(u) = (v,u)$ (the inner product of $v$ and $u$). %Qiao, Sarma, and Tang \cite[Footnote~4]{QST11} noted that any $d \times d$ matrix with characteristic polynomial $g_{1}(x)$ would suffice, and we could have equivalently chosen instead a matrix with characteristic polynomial $g_{i}(x)$ for any $i \in [r]$.

For $u \in \mathbb{Z}_{q}^{\ell}$, let $0 \leq h_{u} < q$ such that $h_{u} \equiv v^{*}(u) \pmod{q}$. Now define $f_{v} : \mathbb{Z}_{q}^{\ell} \to \text{GL}_{d}(\mathbb{Z}_{p})$ by sending $u \mapsto M^{h_{u}}$. We will abuse notation by writing $M^{v^{*}(u)}$ in place of $M^{h_{u}}$. Let $f_{0} : \mathbb{Z}_{q}^{\ell} \to \text{GL}_{d}(\mathbb{Z}_{p})$ be the trivial representation. Now $\{ f_{v} : v \in \mathbb{Z}_{q}^{\ell} \}$ forms the set of all irreducible representations of $\mathbb{Z}_{q}^{\ell}$ over $\mathbb{Z}_{p}$. Note that for distinct and non-zero $u, v \in \mathbb{Z}_{q}^{\ell}$, $f_{u}$ and $f_{v}$ might be equivalent. We recall the following characterization from Qiao, Sarma, and Tang \cite{QST11}.

\begin{lemma}[{\cite[Claim~1]{QST11}}] \label{lem:Claim1}
Let $u, v \in \mathbb{Z}_{q}^{\ell}$ be distinct and both non-zero. Let $f_{u}, f_{v}$ be the corresponding irreducible representations. We have that $f_{u}$ and $f_{v}$ are equivalent if and only if there exists some $s \in \mathbb{Z}_{q}$ such that $u = sv$, and $M^{s}$ and $M$ are conjugate (by abuse of notation, we associate $s$ with the least non-negative integer belonging to the equivalence class $s \in \mathbb{Z}_{q}$).
\end{lemma}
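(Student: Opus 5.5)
The plan is to use the fact that each $f_v$ factors through the linear functional $v^{*} : \mathbb{Z}_{q}^{\ell} \to \mathbb{Z}_{q}$, followed by the map $\mathbb{Z}_{q} \to \text{GL}_{d}(\mathbb{Z}_{p})$, $h \mapsto M^{h}$. Everything then reduces to two facts about $M$.

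First I would check that $M$ has multiplicative order exactly $q$. Since $g_{1}(x)$ divides $\Phi_{q}(x)$, which divides $x^{q}-1$, and $g_1$ is the minimal polynomial of its companion matrix $M$, we get $M^{q} = I$. Moreover $M \neq I$, because $x - 1$ does not divide $g_{1}(x)$: indeed $\Phi_{q}(1) = q \not\equiv 0 \pmod p$ since $p \neq q$. As $q$ is prime, the order of $M$ is exactly $q$. Consequences:
\begin{itemize}
\item $h \mapsto M^{h}$ is a well-defined injective homomorphism $\mathbb{Z}_{q} \to \text{GL}_{d}(\mathbb{Z}_{p})$, which justifies the abuse of notation $M^{v^{*}(u)}$;
\item each $f_{v}$ is a homomorphism;
\item $\ker f_{v} = \ker v^{*} = v^{\perp}$.
\end{itemize}

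For the forward direction, suppose $f_{u}$ and $f_{v}$ are equivalent via $T$, so $T f_{v}(w) T^{-1} = f_{u}(w)$ for all $w$. Equivalent representations have equal kernels, so $u^{\perp} = v^{\perp}$. These are hyperplanes in $\mathbb{Z}_{q}^{\ell}$ (as $u, v \neq 0$), and the standard inner product is nondegenerate, so $u = sv$ for some nonzero $s \in \mathbb{Z}_{q}$. Then $u^{*}(w) = s\, v^{*}(w)$. Since $v \neq 0$, $v^{*}$ is surjective, so I choose $w$ with $v^{*}(w) = 1$. Evaluating the equivalence at this $w$ gives $T M T^{-1} = M^{s}$, so $M^{s}$ and $M$ are conjugate.

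For the converse, suppose $u = sv$ and $T M T^{-1} = M^{s}$. Then for every $w$,
\[
T f_{v}(w) T^{-1} = (T M T^{-1})^{v^{*}(w)} = M^{s v^{*}(w)} = M^{u^{*}(w)} = f_{u}(w).
\]
Here the reduction of exponents modulo $q$ is legitimate because $M^{q} = I$. Hence $f_{u}$ and $f_{v}$ are equivalent.

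The only real obstacle is the order-$q$ claim for $M$, specifically that $M \neq I$. This is what lets the kernel of $f_{v}$ be read off from $v^{*}$, and it is exactly where the hypothesis $p \neq q$ enters. Everything else is routine linear algebra over $\mathbb{Z}_{q}$.
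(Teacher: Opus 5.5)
Your proof is correct and complete. The paper gives no argument for this lemma; it quotes it directly from \cite[Claim~1]{QST11}, so there is no competing proof to compare against.

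Your argument is a self-contained justification:
\begin{enumerate}
\item Show that $M$ has order exactly $q$.
\item Read off $\ker f_v = v^{\perp}$.
\item Equivalent representations have equal kernels, and the dot product is nondegenerate, so $u = sv$.
\item Evaluate the intertwining relation at a $w$ with $v^{*}(w)=1$ to get $TMT^{-1} = M^{s}$.
\end{enumerate}
You also correctly pinpoint $\Phi_q(1) = q \not\equiv 0 \pmod p$ as the one place where the hypothesis $p \neq q$ is needed.

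Your argument never uses the value of $d$, which is fortunate. The paper's aside that $d = |\mathbb{Z}_q^{\times}|$ holds only when $r = 1$; in general $d$ is the multiplicative order of $p$ modulo $q$.
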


Let $\tau : \mathbb{Z}_{q}^{\ell} \to \text{GL}_{k}(\mathbb{Z}_{p})$. As $p, q$ are distinct primes, we have by Maschke's theorem that $\tau$ is completely reducible. Write $\tau = f_{v_{1}}^{k_{1}} \oplus \cdots \oplus f_{v_{t}}^{k_{t}}$, where each $v_{i} \in \mathbb{Z}_{q}^{\ell}$, and $k_{1} \geq k_{2} \geq \cdots \geq k_{t} \geq 1$. For a given multiplicity $w \in [k]$, define $L_{\tau}(w)$ to be the set of irreducible representations with multiplicity $w$ appearing in $\tau$. Now define $L_{\tau} := (L_{\tau}(w))_{w \in [k]}$. We have that $L_{\tau}$ determines $\tau$ up to equivalence. In order to deal with the concrete form of the representations, Qiao, Sarma, and Tang introduced the following.

\begin{definition}[{\cite[Definition~5.4]{QST11}}]
Let $\tau : \mathbb{Z}_{q}^{\ell} \to \text{GL}_{k}(\mathbb{Z}_{p})$, and let $w \in [k]$. Define $\mathcal{L}_{\tau}(w)$ to be a set of vectors such that for every irreducible representation $f \in L_{\tau}(w)$, there exists a unique $v \in \mathcal{L}_{\tau}(w)$ such that $f$ and $f_{v}$ are equivalent. Now define $\mathcal{L}_{\tau} := (\mathcal{L}_{\tau}(w))_{w \in [k]}$. We refer to such a tuple $\mathcal{L}_{\tau}$ as an \emph{indexing tuple} of $L_{\tau}$.
\end{definition} 

Qiao, Sarma, and Tang established that a representation $\tau : \mathbb{Z}_{q}^{\ell} \to \text{GL}_{k}(\mathbb{Z}_{p})$ has at most $2^{k}$ indexing tuples \cite{QST11}.

\begin{proposition}[{\cite[Claim~2]{QST11}}] \label{prop:QSTClaim2}
Let $\tau, \gamma : \mathbb{Z}_{q}^{\ell} \to \text{GL}_{k}(\mathbb{Z}_{p})$ be two representations. We have that $\tau$ and $\gamma$ are equivalent if and only if there exist indexing tuples of $\tau$ and $\gamma$, $\mathcal{L}_{\tau}$ and $\mathcal{L}_{\gamma}$, such that $\mathcal{L}_{\tau} = \mathcal{L}_{\gamma}$. 
\end{proposition}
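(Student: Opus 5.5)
Both directions will go through the canonical decomposition of a completely reducible representation into isotypic components, which pins down the multiset of irreducible equivalence classes together with their multiplicities. Since $p \neq q$, Maschke's theorem makes both $\tau$ and $\gamma$ completely reducible. By the Krull--Schmidt uniqueness recalled in the representation-theory preliminaries, $\tau \simeq \gamma$ holds exactly when every irreducible class appears in both with the same multiplicity. In the notation above, this says $L_\tau(w) = L_\gamma(w)$ as sets of equivalence classes for every $w \in [k]$, that is, $L_\tau = L_\gamma$. So the task reduces to showing that $L_\tau = L_\gamma$ holds if and only if the two representations have a common indexing tuple.

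For the forward direction, assume $\tau \simeq \gamma$, so $L_\tau(w) = L_\gamma(w)$ for each $w$. We need a family of vectors that is simultaneously valid for both. For each $w$ and each class $f \in L_\tau(w)$, the irreducibles of $\mathbb{Z}_q^\ell$ over $\mathbb{Z}_p$ are exactly the $f_v$, so we can pick one $v$ with $f \simeq f_v$. Let $\mathcal{L}(w)$ consist of these chosen vectors. This is an indexing tuple for $\tau$, because each class in $L_\tau(w)$ has exactly one chosen representative. Distinct classes cannot share a representative, since $f_v$ determines its class. The same reasoning makes it an indexing tuple for $\gamma$, because $L_\gamma(w) = L_\tau(w)$. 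Taking $\mathcal{L}_\tau = \mathcal{L}_\gamma = \mathcal{L}$ finishes this direction.

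For the converse, suppose $\mathcal{L}_\tau = \mathcal{L}_\gamma$. Then for each $w$, the set of classes $\{[f_v] : v \in \mathcal{L}_\tau(w)\}$ equals $L_\tau(w)$. Here the definition matters in two ways. Every $f \in L_\tau(w)$ is equivalent to some $f_v$ with $v \in \mathcal{L}_\tau(w)$. Conversely, every $v$ in the tuple corresponds to some member of $L_\tau(w)$, which I take to be implicit in the definition: the map is a bijection and $\mathcal{L}$ contains no extraneous vectors. The same holds for $\gamma$. Since the vector sets coincide, $L_\tau(w) = L_\gamma(w)$ for all $w$, and the isotypic characterization then gives $\tau \simeq \gamma$.

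The main obstacle is the precise reading of the definition. It must be interpreted as a bijection between $\mathcal{L}_\tau(w)$ and $L_\tau(w)$, meaning no extra vectors are allowed. Otherwise two different representations could share a padded tuple. I will state this interpretation explicitly at the start, noting that it is the one under which Qiao, Sarma, and Tang bound the number of indexing tuples by $2^k$. Lemma~\ref{lem:Claim1} is not needed for the equivalence itself. It only explains why the choice of representatives is non-canonical (scalar multiples $sv$), which is why the statement quantifies existentially over indexing tuples.
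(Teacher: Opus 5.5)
Your argument is correct. The paper gives no proof of its own here. It imports Claim~2 from Qiao, Sarma, and Tang and only remarks, without justification, that $L_\tau$ determines $\tau$ up to equivalence. Your argument supplies the missing proof along the natural route: reduce to $L_\tau = L_\gamma$ via the multiplicity criterion for completely reducible representations, then observe that an indexing tuple encodes $L_\tau$ exactly.

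Your caveat about the definition is not pedantry. As written, the definition only forces the assignment from classes in $L_\tau(w)$ to vectors of $\mathcal{L}_\tau(w)$ to be a well-defined injection. Under that literal reading the ``if'' direction is false. For instance, take $\ell \geq 2$, $k = d$, $\tau = f_{e_1}$ and $\gamma = f_{e_2}$. These are inequivalent by Lemma~\ref{lem:Claim1}, since $e_1$ is not a scalar multiple of $e_2$. Yet $\mathcal{L}(1) = \{e_1, e_2\}$ with $\mathcal{L}(w) = \emptyset$ for $w \geq 2$ satisfies the literal condition for both. So surjectivity (no extraneous vectors) must be built into the definition. As you note, it is also what makes the $2^k$ bound valid, and hence the polynomial-size enumeration in Lemma~\ref{lem:ListIndexingTuples}; padding would otherwise allow far more tuples.

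You also inherit one further convention from the text. You use that every irreducible constituent is equivalent to some $f_v$. This requires reading $f_0$ as the one-dimensional trivial representation, not the $d$-dimensional one the text literally writes. For $d > 1$ the latter is reducible, and a trivial constituent of $\tau$ would then have no representative at all.
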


Let $\varphi \in \text{GL}_{\ell}(\mathbb{Z}_{q})$. The representation of $f_{v}$ induced by $\varphi$ has the following form: $(f_{v} \circ \varphi)(u) = f_{v}(\varphi(u)) = M^{v^{*}(\varphi(u))} = M^{(\varphi^{T}(v))^{*}(u)} = f_{\varphi^{T}(v)}(u)$. Now for any two representations $g, h$ of an arbitrary group  and any $\varphi' \in \Aut(G)$, we have that $(g \oplus h) \circ \varphi' = (g \circ \varphi') \oplus (h \circ \varphi')$. Thus, if $\tau : \mathbb{Z}_{q}^{\ell} \to \text{GL}_{k}(\mathbb{Z}_{p})$ is a representation, then $\tau \circ \varphi = f_{\varphi^{T}(v_{1})}^{k_{1}} \oplus \cdots \oplus f_{\varphi^{T}(v_{t})}^{k_{t}}$. If $S \subseteq \mathbb{Z}_{q}^{\ell}$, then $S^{\varphi} = \{ \varphi^{T}(v) : v \in S\}$. Thus, for any $\tau : \mathbb{Z}_{q}^{\ell} \to \text{GL}_{k}(\mathbb{Z}_{p})$, we have that $\mathcal{L}_{\tau \circ \varphi} = \mathcal{L}_{\tau}^{\varphi} = (\mathcal{L}_{\tau}(w)^{\varphi})_{w \in [k]}$.

\begin{lemma} \label{lem:ListIndexingTuples}
Let $p, q$ be distinct primes, and let $G = \mathbb{Z}_{q}^{\ell} \ltimes_{\varphi} \mathbb{Z}_{p}^{k}$ be given by its multiplication table. We can list all indexing tuples of $\varphi$ in $\textsf{NC}^{3}$.
\end{lemma}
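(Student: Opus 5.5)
We first decompose $\varphi$ into irreducible components by Proposition~\ref{prop:DecomposeRepresentation}. This is an $\textsf{AC}^{1}$ computation, and it yields each irreducible component $\tau$ explicitly as a map $\mathbb{Z}_{q}^{\ell} \to \text{GL}_{k_\tau}(\mathbb{Z}_p)$, with the components already grouped by equivalence type. Counting the components in each equivalence class gives the multiplicities. Thus for each $w \in [k]$ we obtain $L_{\varphi}(w)$, with one representative per class. The multiplicities are at most $k \in O(\log n)$, since $p^{k} \le n$, so counting is in $\textsf{AC}^0$.

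Next, for each representative $\tau$ we find every $v \in \mathbb{Z}_q^{\ell}$ with $f_v$ equivalent to $\tau$. We first write down the matrix $M$, the companion matrix of a fixed irreducible factor $g_1$ of $\Phi_q$ over $\mathbb{Z}_p$. To obtain $g_1$, we note that $d$ is the order of $p$ modulo $q$, and since $d \le k \in O(\log n)$, the polynomial $g_1$ has degree $O(\log n)$. We can find it by brute force over monic degree-$d$ candidates only if $p^d \le n$. Here $p^d \le p^k \le |G|$, so there are $\poly(n)$ candidates, and each divisibility or irreducibility check is a polynomial computation of degree $O(\log n)$ over $\mathbb{Z}_p$. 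That check is in $\textsf{NC}^2$, since irreducibility of $g$ of degree $d$ can be tested by verifying $\gcd(x^{p^i}-x, g)=1$ for $i<d$ together with $g \mid x^{p^d}-x$. We then range over all $v \in \mathbb{Z}_q^\ell$; there are $q^\ell \le n$ of them. For each $v$ we build $f_v$ by computing the powers $M^{h}$ for $h<q$, which amounts to iterated matrix products of $O(\log n)$-dimensional matrices in $\textsf{NC}^2$. We test $f_v \simeq \tau$ by comparing characters with Lemma~\ref{lem:CompareCharacters}, which applies because $p \nmid q^\ell$; this is in $\textsf{AC}^0$. For each $\tau$ this gives the set $V_\tau$ of admissible indexing vectors. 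By Lemma~\ref{lem:Claim1}, $V_\tau$ is a union of scalar multiples, so $|V_\tau| \le q-1$.

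Finally, an indexing tuple is obtained by choosing one $v \in V_\tau$ for every irreducible class $\tau$ appearing in $\varphi$ and placing it in $\mathcal{L}_\varphi(w)$ for the multiplicity $w$ of $\tau$. The number of classes is at most $k/d$, and since $|V_\tau| \le q-1$ and each class has dimension $d$, the product of choices is bounded. The paper records the bound $2^k \le n$ from \cite{QST11}, and here I would verify it directly: $(q-1)^{k/d}$ with $d$ the order of $p$ mod $q$ should be bounded using $|V_\tau|\le (q-1)/d \cdot$ (number of $s$ with $M^s \sim M$), which equals $d$. This is the step I expect to be the main obstacle, namely controlling the count so that the enumeration is polynomial size. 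If the count is $\poly(n)$, we index choice functions by strings of length $O(\log n)$ and write each tuple in parallel in $\textsf{AC}^0$. The overall depth is dominated by the $\textsf{NC}^2$ polynomial and matrix-power work and the $\textsf{AC}^1$ decomposition, which comfortably gives $\textsf{NC}^3$.
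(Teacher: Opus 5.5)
Your argument follows the paper's proof except at one step, where you take a genuinely different and cheaper route. As in the paper, you decompose $\varphi$ with Proposition~\ref{prop:DecomposeRepresentation}, build every $f_v$ from powers of the companion matrix $M$, and sort the vectors $v$ into classes by comparing characters via Lemma~\ref{lem:CompareCharacters}. That comparison is sound only because both sides are irreducible; over $\mathbb{Z}_p$, characters determine multiplicities only modulo $p$ in general. You then write down all choice functions, again as the paper does.

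The difference is how you obtain $g_1$. The paper factors $\Phi_q$ with Berlekamp's algorithm, and that is exactly what forces $\textsf{NC}^{3}$. You instead note that if $\varphi$ has a nontrivial component then $d\le k$, so $p^{d}\le p^{k}\le n$, and exhaustive search over monic degree-$d$ polynomials has polynomial size. You can even drop the irreducibility test. Since $p\ne q$, $\Phi_q$ is squarefree over $\mathbb{Z}_p$ and all of its irreducible factors have degree $d$, so any monic degree-$d$ divisor of $\Phi_q$ is already one of the $g_i$. This brings the lemma down to $\textsf{NC}^{2}$ and sidesteps the factoring bottleneck raised in the conclusion. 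The $\textsf{AC}^{3}$ bound of the main theorem is unchanged, since it is dictated by code equivalence. If $\varphi$ is trivial, no $M$ is needed and the only indexing tuple is $\{0\}$ at $w=k$.

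The counting step you flag as the main obstacle is not one, though your sketch of it is garbled. The paper simply cites the $2^{k}$ bound of \cite{QST11}. A direct argument is short. Fix a root $\lambda$ of $g_1$. By Lemma~\ref{lem:Claim1}, a nontrivial class is $V_\tau=\{sv_0 : M^{s}\sim M\}$. The characteristic polynomial of $M^{s}$ is the (irreducible) minimal polynomial of $\lambda^{s}$, so $M^{s}\sim M$ exactly when $\lambda^{s}=\lambda^{p^{j}}$ for some $j$. That is, $s$ lies in the subgroup $\langle p\rangle\le\mathbb{Z}_q^{\times}$, which has order $d$. Thus $|V_\tau|=d$, the trivial class contributes only the vector $0$, and there are at most $k/d$ nontrivial classes. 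This gives at most $d^{k/d}\le 2^{k}\le p^{k}\le n$ indexing tuples. Even your crude bound $|V_\tau|\le q-1$ suffices, since $q\mid p^{d}-1$ gives $(q-1)^{k/d}<p^{k}\le n$.
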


%{prop:DecomposeRepresentation}

\begin{proof}
We first factor the cyclotomic polynomial $\Phi_{q}(x)$ in $\textsf{NC}^{3}$ \cite{Berlekamp} (see \cite{Eberly}). Let $g_{1}(x), \ldots, g_{r}(x)$ be the irreducible factors of $\Phi_{q}(x)$. In $\textsf{AC}^{0}$, we may write down the companion matrix $M$ for $g_{1}(x)$. By Proposition~\ref{prop:DecomposeRepresentation}, we can compute the decomposition of $\varphi = \tau_{1}^{k_{1}} \oplus \cdots \oplus \tau_{t}^{k_{t}}$ into its irreducible components, in $\textsf{AC}^{1}$. In particular, each irreducible representation is given explicitly by identifying each group element in $\mathbb{Z}_{q}^{\ell}$ with a corresponding matrix. Furthermore, we may in $\textsf{FL}$, sort the $\tau_{i}$ to ensure that $k_{1} \geq k_{2} \geq \cdots \geq k_{t}$.

Now recall that for each irreducible representation $\tau$ in the decomposition of $\varphi$, there exists some $v \in \mathbb{Z}_{q}^{\ell}$ such that $\tau$ is equivalent to $f_{v}$. We will proceed in parallel for each non-zero $v \in \mathbb{Z}_{q}^{\ell}$, to explicitly construct $f_{v}$. For clarity, we will consider some fixed $v \in \mathbb{Z}_{q}^{\ell}$. For each $u \in \mathbb{Z}_{q}^{\ell}$, we may compute $v^{*}(u) = (v, u)$ (the inner product) in parallel as follows. We first evaluate $(v,u)$ in the integers. As this is a sum of $O(\log n)$-many $O(\log n)$-bit integers, we can compute this sum in $\textsf{AC}^{0}$ (see e.g., \cite{VollmerText}). As the resulting sum uses only $O(\log n)$-many bits, we may  then reduce the sum modulo $q$ in $\textsf{AC}^{0}$. Thus, we may evaluate $v^{*}(u)$ in $\textsf{AC}^{0}$. Thus, in $\textsf{AC}^{0}$, we may select a corresponding $h_{u}$ such that $h_{u} \equiv v^{*}(u) \pmod{q}$.

We now turn to computing $M^{h_{u}}$. We will use the repeated squaring technique to perform $O(\log n)$ matrix multiplications. Observe first that if $A, B \in M_{d \times d}(\mathbb{Z}_{p})$, then $(AB)_{ij} = \sum_{h=1}^{d} A_{ih}B_{hj} \pmod{p}$. We may evaluate this sum in $\textsf{AC}^{0}$, by the discussion in the preceding paragraph. Thus, we may compute $M^{h_{u}}$ in $\textsf{AC}^{1}$. It follows that we can compute the map $f_{v}(u) = M^{h_{u}}$, for all $u \in \mathbb{Z}_{q}^{\ell}$, in $\textsf{AC}^{1}$.

Now for each $\tau_{i}$ ($i \in [t]$), we identify the set $S_{i}$ of $v \in \mathbb{Z}_{q}^{\ell}$ corresponding to the $f_{v}$ that are equivalent to $\tau_{i}$. By Lemma~\ref{lem:CompareCharacters}, this step is $\textsf{AC}^{0}$-computable. 

Now we may construct a single indexing set $\mathcal{L}_{\varphi}$ in the following manner. In parallel, for each $w \in [k]$ and each $i \in [t]$ such that $k_{i} = w$, we select a single vector $v \in S_{i}$ to include in $\mathcal{L}_{\tau}(w)$. This step is $\textsf{AC}^{0}$-computable. Qiao, Sarma, and Tang \cite{QST11} previously established that the number of indexing sequences for $\varphi$ is at most $2^k$. As $k \in O(\log n)$, there are $\poly(n)$ such indexing sequences. It follows that once the $S_{i}$ sets have been constructed, we may write down all such indexing sequences in $\ACz$.

Our algorithm is thus $\textsf{NC}^{3}$-computable, as desired. The result now follows.
\end{proof}

%%%%%
\subsection{Isomorphism Testing of $\mathcal{H}(\mathcal{E}, \mathcal{E})$} 

In this section, we will establish the following.

\begin{theorem}[cf. {\cite[Theorem~1.3]{QST11}}] \label{thm:ElemAbelian}
Given groups $G_{1}, G_{2}$ by their multiplication tables, there exists a uniform $\textsf{AC}^{3}$ algorithm that decides if $G_{1}, G_{2} \in \mathcal{H}(\mathcal{E}, \mathcal{E})$; and if so, decides if $G_{1} \cong G_{2}$. 
\end{theorem}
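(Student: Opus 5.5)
We first recognize membership, using Lemma~\ref{lem:DecomposeCoprime}. For each $G_i$ we compute, in $\textsf{FL} \cap \textsf{FOLL}$, the candidate normal Hall subgroup $N_i$ and the quotient $G_i/N_i$. We then additionally check in $\ACz$ that $N_i$ and $G_i/N_i$ each have exponent a single prime: $N_i \cong \mathbb{Z}_p^k$ and $G_i/N_i \cong \mathbb{Z}_q^\ell$ with $p \neq q$. If either check fails for $G_1$ or $G_2$, we reject. If the primes $p,q$ or the dimensions $k,\ell$ differ between $G_1$ and $G_2$, the groups are non-isomorphic. Since $|G_i| = n$, we have $k,\ell \in O(\log n)$. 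By the quotient-representation argument in the proof of Lemma~\ref{lem:DecomposeCoprime}, the action $\varphi_i : \mathbb{Z}_q^\ell \to \text{GL}_k(\mathbb{Z}_p)$ is explicitly available. (We fix bases for $N_i$ and $G_i/N_i$ in $\textsf{AC}^1$ via \cite[Proposition~4.5]{JLVWQuasigroups}.)

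Next we reduce to an equivalence problem on representations. By Taunt's Lemma (Lemma~\ref{lem:Semidirect}), since the extension is coprime, $G_1 \cong G_2$ iff there exist $\beta \in \text{GL}_\ell(\mathbb{Z}_q)$ and $\alpha \in \text{GL}_k(\mathbb{Z}_p)$ with $\varphi_2 \circ \beta = \alpha \varphi_1 \alpha^{-1}$. Equivalently, $\varphi_2 \circ \beta$ and $\varphi_1$ are equivalent representations for some $\beta$. By Proposition~\ref{prop:QSTClaim2} and the identity $\mathcal{L}_{\tau\circ\beta} = \mathcal{L}_\tau^{\beta}$, this holds iff there are indexing tuples $\mathcal{L}_{\varphi_1}$, $\mathcal{L}_{\varphi_2}$ and $\beta$ with $\mathcal{L}_{\varphi_2}^{\beta} = \mathcal{L}_{\varphi_1}$. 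That is, $\beta^T$ must map each set $\mathcal{L}_{\varphi_2}(w)$ onto $\mathcal{L}_{\varphi_1}(w)$ for every $w$. By Lemma~\ref{lem:ListIndexingTuples} we list all indexing tuples of both actions in $\textsf{NC}^3$. There are at most $2^k = \poly(n)$ of them, so we fix one tuple for $\varphi_1$ and range over all tuples for $\varphi_2$ in parallel.

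For a fixed pair of tuples, we must decide whether some invertible linear map carries the labelled vector family to the other. Following \cite{QST11}, we phrase this as \algprobm{Generalized Code Equivalence}. Concatenate the vectors of $\mathcal{L}_{\varphi_i}(1), \mathcal{L}_{\varphi_i}(2), \ldots$ as columns of an $\ell \times m$ matrix $C_i$ over $\mathbb{Z}_q$, where $m \leq k \in O(\log n)$ because the multiplicities times dimensions sum to $k$. Let $S \leq \text{Sym}(m)$ be the product of the symmetric groups on the column blocks of equal multiplicity $w$. A matrix $B \in \text{GL}_\ell(\mathbb{Z}_q)$ and $P \in S$ with $BC_2P = C_1$ give exactly the required $\beta^T$.

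This step is where I expect the main care is needed. An exact column match is too strict, because the indexing vectors are determined only up to the scalar ambiguity in Lemma~\ref{lem:Claim1}. Enumerating over all indexing tuples of $\varphi_2$ absorbs that ambiguity. I will also need to check that the columns span, or else handle the rank-deficient case by working within the span (with the complement chosen freely), so that the resulting $B$ extends to an invertible $\beta$.

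Each instance is solved by Corollary~\ref{cor:GeneralizedCodeEquivalence} with depth $O(\log^2 n \cdot \poly(\log\log n))$ and polynomial size. We take the OR over the $\poly(n)$ parallel choices. The overall depth is dominated by the $\textsf{NC}^3$ factoring step of Lemma~\ref{lem:ListIndexingTuples}, which yields a uniform $\textsf{AC}^3$ algorithm.
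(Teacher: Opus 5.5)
Your proposal is correct and follows the paper's proof essentially step for step. You decompose via Lemma~\ref{lem:DecomposeCoprime}, list indexing tuples in $\textsf{NC}^{3}$ via Lemma~\ref{lem:ListIndexingTuples}, fix one tuple for $G_1$ and range in parallel over the at most $2^{k}$ tuples for $G_2$, and solve each instance as \algprobm{Generalized Code Equivalence} over $\prod_{w}\text{Sym}(|\mathcal{L}(w)|)$ via Corollary~\ref{cor:GeneralizedCodeEquivalence}, with the $\textsf{NC}^{3}$ factoring step dominating the depth; your explicit use of Taunt's lemma and of the rank-deficient (non-faithful action) case make precise points the paper leaves to \cite{QST11}. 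One small correction: the exponent check involves powering group elements and is not evidently in $\ACz$, but doing it in $\textsf{L}$ via \cite{BKLM}, as the paper does, still gives the $\textsf{AC}^{3}$ bound.
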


\begin{proof}
We first use Lemma~\ref{lem:DecomposeCoprime} to, in $\textsf{FL}$, find decompose $G_{i} = H_{i} \ltimes_{\tau_{i}} N_{i}$ ($i = 1, 2$). We may then, in $\textsf{L}$, decide whether: (i) the $N_{i}$ and $H_{i}$ are elementary Abelian groups of coprime order, (ii) $N_{1} \cong N_{2}$, and (iii) $H_{1} \cong H_{2}$ \cite{BKLM}. Suppose that $N_{1} \cong N_{2} \cong \mathbb{Z}_{p}^{k}$ and $H_{1} \cong H_{2} \cong \mathbb{Z}_{q}^{\ell}$. Now $\tau_{1}, \tau_{2} : \mathbb{Z}_{q}^{\ell} \to \text{GL}_{k}(\mathbb{Z}_{p})$ are representations. It remains to decide whether $\tau_{1}$ and $\tau_{2}$ are equivalent.

By Proposition~\ref{prop:DecomposeRepresentation}, we may in $\textsf{AC}^{1}$, decompose $\tau_{1}, \tau_{2}$ into their irreducible components and group them by equivalence type. Write $\tau_{1} = f_{v_{1}}^{k_{1}} \oplus \cdots \oplus f_{v_{t}}^{k_{t}}$ and $\tau_{2} = f_{u_{1}}^{\ell_{1}} \oplus \cdots \oplus f_{u_{t'}}^{\ell_{t'}}$. Now in $\textsf{NC}^{3}$ (using Lemma~\ref{lem:ListIndexingTuples}), we may obtain indexing tuples $\mathcal{L}_{\tau_{1}}, \mathcal{L}_{\tau_{2}}$. We may check in $\textsf{L}$ that $t = t'$; as well as that for all $w \in [k]$, whether $|\mathcal{L}_{\tau_{1}}(w)| = |\mathcal{L}_{\tau_{2}}(w)|$. If these conditions are not satisfied, then $\tau_{1}$ and $\tau_{2}$ are not equivalent; in which case, $G_{1}$ and $G_{2}$ are not isomorphic.

We now consider our fixed indexing tuple $\mathcal{L}_{\tau_{1}}$ for $\tau_{1}$. By Proposition~\ref{prop:QSTClaim2}, it suffices to decide if there exists an indexing tuple $\mathcal{L}_{\tau_{2}}$ of $\tau_{2}$ and $\varphi \in \text{GL}_{\ell}(\mathbb{Z}_{p})$ such that $\mathcal{L}_{\tau_{1}}^{\varphi} = \mathcal{L}_{\tau_{2}}$. In parallel, we will consider all indexing tuples of $\tau_{2}$. Note that we can list all such indexing tuples of $\tau_{2}$ in $\textsf{NC}^{3}$ (using Lemma~\ref{lem:ListIndexingTuples}). For clarity, fix such an indexing tuple $\mathcal{L}_{\tau_{2}}$. Qiao, Sarma, and Tang \cite[Proposition~5]{QST11} showed that deciding whether such a $\varphi$ exists reduces to \algprobm{Generalized Code Equivalence} for a code of length $m \in O(\log n)$ over $\prod_{w=1}^{k} \text{Sym}(|\mathcal{L}_{\tau_{1}}(w)|)$. As the symmetric group is $2$-generated, we can write down standard generators for $\text{Sym}(|\mathcal{L}_{\tau_{1}}(w)|)$ in $\ACz$. By Corollary~\ref{cor:GeneralizedCodeEquivalence}, we can solve this instance of \algprobm{Generalized Code Equivalence} using an $\textsf{AC}$ circuit of depth $O((\log^{2} n) \cdot \poly(\log \log n))$ and size $\poly(n)$. 

In total, our algorithm is $\textsf{AC}^{3}$-computable. The result now follows.
\end{proof}

As a consequence, we obtain the following corollary.

\begin{corollary}[cf. {\cite[Section~5.3]{QST11}}] \label{cor:ElemAbelian}
There exists a uniform $\textsf{AC}^{3}$ algorithm that, given groups $G_1, G_2$ by their multiplication tables, decides if $G_1, G_2 \in \mathcal{H}(\prod \mathcal{E}, \mathcal{E})$; and if so, decides if $G_1 \cong G_2$.
\end{corollary}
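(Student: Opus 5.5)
The plan is to reduce from $\mathcal{H}(\prod \mathcal{E}, \mathcal{E})$ to Theorem~\ref{thm:ElemAbelian}, one prime at a time, and then glue the pieces together with a single automorphism of the complement.

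\textbf{Decomposition.} First I apply Lemma~\ref{lem:DecomposeCoprime} to each $G_i$. In $\textsf{FL} \cap \textsf{FOLL}$ this yields $N_i$ as the product of the Abelian normal Sylow subgroups, together with the quotient $H_i \cong G_i/N_i$. I then check in $\textsf{L}$ \cite{BKLM} three things: that $H_i$ is elementary Abelian, say $\mathbb{Z}_q^{\ell}$; that each Sylow factor $N_{i,p}$ is elementary Abelian, say $\mathbb{Z}_p^{k_p}$; and that $H_1 \cong H_2$ and $N_{1,p} \cong N_{2,p}$ for every $p$. Each $N_{i,p}$ is characteristic, so $H_i$ acts on it. Hence $\tau_i = \bigoplus_p \tau_{i,p}$ with $\tau_{i,p} : \mathbb{Z}_q^{\ell} \to \text{GL}_{k_p}(\mathbb{Z}_p)$.

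\textbf{Reduction to a single twist.} By Taunt's Lemma (Lemma~\ref{lem:Semidirect}), using coprimality, $G_1 \cong G_2$ iff there is $\varphi \in \text{GL}_{\ell}(\mathbb{Z}_q)$ with $\tau_{1,p} \circ \varphi \sim \tau_{2,p}$ for every $p$ simultaneously. The isomorphism of $N$ must preserve the Sylow factors, so it splits as a product of per-prime equivalences. The key point is that $\varphi$ is common to all primes.

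\textbf{Per-prime indexing tuples.} For each prime $p$, in parallel, I run Lemma~\ref{lem:ListIndexingTuples} on $H_i \ltimes N_{i,p}$ in $\textsf{NC}^3$. This fixes an indexing tuple $\mathcal{L}_{\tau_{1,p}}$ and lists all indexing tuples of $\tau_{2,p}$. There are at most $2^{k_p}$ of them, and $\prod_p 2^{k_p} \le |N| \le n$. So a combined choice, meaning one indexing tuple of $\tau_{2,p}$ for every $p$, ranges over only $\poly(n)$ options, and I can try all of them in parallel. By Proposition~\ref{prop:QSTClaim2}, the condition becomes: there is $\varphi$ with $\mathcal{L}_{\tau_{1,p}}^{\varphi} = \mathcal{L}_{\tau_{2,p}}$ for all $p$.

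\textbf{Merging into one code.} I plan to merge all the vectors $v \in \mathbb{Z}_q^{\ell}$ appearing in the tuples, across all primes and multiplicities, into one $\ell \times m$ matrix over $\mathbb{Z}_q$. The colour classes are pairs $(p,w)$, and the number of columns is $m \le \sum_p k_p \in O(\log n)$. I then follow the reduction of \cite[Proposition~5]{QST11}, with the permutation group $S = \prod_{p,w} \text{Sym}(|\mathcal{L}_{\tau_{1,p}}(w)|)$; its standard generators can be written down in $\ACz$. This gives one instance of \algprobm{Generalized Code Equivalence}, which Corollary~\ref{cor:GeneralizedCodeEquivalence} solves in depth $O(\log^2 n \cdot \poly(\log\log n))$. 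Taking the OR over all the parallel choices gives an overall $\textsf{AC}^3$ bound.

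\textbf{Main obstacle.} The delicate step is checking that the QST reduction survives this merging. Each vector is only determined up to the scalar equivalence of Lemma~\ref{lem:Claim1}, and that equivalence depends on $p$, since the companion matrix $M$ changes with $p$. The same $v$ may also occur for two different primes. My first attempt would be to encode each $(p,w)$ class by its own block of columns and tag the blocks by colour, so that $S$ never mixes blocks. Within each block I would reuse QST's per-prime handling of scalars verbatim, as in \cite[Section~5.3]{QST11}.
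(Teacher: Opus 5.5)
Your argument is correct, but it differs from the paper's proof at exactly the point you single out. The paper never merges the primes. It forms $G_{i,p} = H_i \ltimes A_{i,p}$ for each prime $p$ and asserts, attributing this to \cite[Section~5.3]{QST11}, that $G_1 \cong G_2$ if and only if $G_{1,p} \cong G_{2,p}$ for every $p$. It then runs Theorem~\ref{thm:ElemAbelian} independently for each $p$. This is simpler, since Theorem~\ref{thm:ElemAbelian} is used as a black box. However, the per-prime tests allow a different automorphism $\varphi_p$ of $H$ for each $p$, whereas Taunt's Lemma requires a single $\varphi$.

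As literally stated, the per-prime criterion fails. Take $H = \langle e_1, e_2 \rangle \cong \mathbb{Z}_2^2$ and $N = \mathbb{Z}_3 \times \mathbb{Z}_5$. In $G_1$, let $e_1$ act trivially and $e_2$ invert both factors. In $G_2$, let $e_2$ invert only $\mathbb{Z}_3$ and $e_1$ invert only $\mathbb{Z}_5$. For $p = 3, 5$, both $G_{1,p}$ and $G_{2,p}$ are isomorphic to $\mathbb{Z}_2 \times (\mathbb{Z}_2 \ltimes \mathbb{Z}_p)$, with $\mathbb{Z}_2$ acting by inversion (for $p = 5$, swap $e_1$ and $e_2$). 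Yet $G_1 \cong \mathbb{Z}_2 \times (\mathbb{Z}_2 \ltimes \mathbb{Z}_{15})$ has $e_1$ in its center, while $G_2 \cong (\mathbb{Z}_2 \ltimes \mathbb{Z}_3) \times (\mathbb{Z}_2 \ltimes \mathbb{Z}_5)$ is centerless. So the groups are not isomorphic.

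Your single \algprobm{Generalized Code Equivalence} instance enforces the common $\varphi$ at no extra cost. All indexing vectors live in the same $\mathbb{Z}_q^{\ell}$, whatever $p$ is. The number of combined indexing-tuple choices is at most $2^{\sum_p k_p} \le |N| \le n$. The merged length is still $\sum_p k_p \in O(\log n)$.

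The obstacle you anticipate is not really there. You fix $\mathcal{L}_{\tau_{1,p}}$ and range over all combined indexing tuples of the $\tau_{2,p}$. This absorbs the scalar ambiguity of Lemma~\ref{lem:Claim1}, including its dependence on $p$ through $M$, before any code is built. What remains is exactly the condition $\mathcal{L}_{\tau_{1,p}}(w)^{\varphi} = \mathcal{L}_{\tau_{2,p}}(w)$ as sets, for all pairs $(p,w)$. That is, $\varphi^{T} C_1 = C_2 P$ for some $P$ preserving the $(p,w)$-blocks. So no per-prime handling of scalars is needed inside the code.

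Repeated vectors in different blocks are harmless, since $S$ never mixes blocks. If the columns do not span $\mathbb{Z}_q^{\ell}$, pass to a full-rank generator matrix of the row space. This is legitimate because two $\ell \times m$ matrices have the same row space exactly when one is obtained from the other by left multiplication by an element of $\mathrm{GL}_{\ell}(\mathbb{Z}_q)$.
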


\begin{proof}
We first use Lemma~\ref{lem:DecomposeCoprime} write $G_{i} = H_{i} \ltimes_{\varphi_i} N_{i}$, where $N_i$ is the direct product of the Abelian normal Sylow subgroups of $G_i$. Now in $\textsf{L}$, we may compute the Abelian normal Sylow subgroups of $N_i$, and test whether each is elementary Abelian \cite{BKLM}. Similarly, in $\textsf{L}$, we test whether $H_i$ is elementary Abelian. Finally, in $\textsf{L}$, we may test whether $N_{1} \cong N_{2}$ and $H_{1} \cong H_{2}$ \cite{BKLM}. Otherwise, we reject.

For $i \in [2]$, let $A_{i,p} \leq N_{i}$ be the Sylow $p$-subgroup of $N_{i}$ (and hence, $G_{i}$). Let $\varphi_{i, p} : H_{i} \to \text{GL}(A_{i,p})$ be the projection of $\varphi_{i}$ onto $A_{i,p}$. Let $G_{i,p} = H_{i} \ltimes \varphi_{i,p} A_{i,p}$. Qiao, Sarma, and Tang \cite[Section~5.3]{QST11} established that $G_{1} \cong G_{2}$ if and only if, for all primes $p$ dividing $|N_{1}| = |N_{2}|$, $G_{1,p}$ and $G_{2,p}$ are isomorphic. We have shown that, in $\textsf{FL}$, we can construct each $G_{1,p}, G_{2,p}$. We now apply Theorem~\ref{thm:ElemAbelian} in parallel for each prime $p$ dividing $|N_{1}| = |N_{2}|$ to decide isomorphism between $G_{1,p}$ and $G_{2,p}$ in $\textsf{AC}^{3}$.

In total, our algorithm is $\textsf{AC}^{3}$-computable. The result now follows.
\end{proof}

%%%
\subsection{Le Gall's Technique in Parallel}

In this section, we will establish the following. 

\begin{proposition}[cf. {\cite{Gal09, BQ}}] \label{prop:LeGall}
Let $A$ be an Abelian $p$-group given by its multiplication table. Let $S = (g_{1}, \ldots, g_{s})$ be a basis for $A$, where $\mathbb{Z}_{p^{e_{i}}} \cong \langle g_{i} \rangle$. Let $\varphi_{1}, \varphi_{2} \in \Aut(A)$ be given by matrices in $R(A)$ with respect to $S$ (see Definition~\ref{def:Ranum}). Suppose that $|\varphi_{1}| = |\varphi_{2}|$. If $p$ does not divide $|\varphi_{1}| = |\varphi_{2}|$, then there exists an $\textsf{FL} \cap \textsf{FOLL}$ computable map $\Psi_{p} : \Aut(A) \to \text{GL}_{s}(\mathbb{Z}_{p})$ such that $\varphi_{1}$ and $\varphi_{2}$ are conjugate if and only if $\Psi_{p}(\varphi_{1})$ and $\Psi_{p}(\varphi_{2})$ are conjugate.
\end{proposition}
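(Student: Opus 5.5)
The plan is to follow Le Gall's reduction \cite{Gal09} (as refined in \cite{BQ}), which passes to the Frattini quotient. Let $\Phi(A) = pA$, so that $A/pA \cong \mathbb{Z}_{p}^{s}$ has basis given by the images $\bar g_{1}, \ldots, \bar g_{s}$. Every $\varphi \in \Aut(A)$ preserves the characteristic subgroup $pA$, and so it induces an automorphism $\bar\varphi$ of $A/pA$. I would define $\Psi_{p}(\varphi) := \bar\varphi$, written as the matrix in $\text{GL}_{s}(\mathbb{Z}_{p})$ with respect to $(\bar g_{i})$. Concretely, if $\varphi$ is given by a matrix $(a_{ij})$ in $R(A)$, so that $\varphi(g_{j}) = \sum_{i} a_{ij} g_{i}$, then $\Psi_{p}(\varphi)$ is just $(a_{ij} \bmod p)$.

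Computing $\Psi_{p}$ is easy. Each entry of the matrix is an $O(\log n)$-bit integer, and reducing it modulo $p$ is in $\ACz$. If instead the automorphism is given through the multiplication table, I would first compute the coordinates of $\varphi(g_{j})$ in the basis $S$. This amounts to discrete-logarithm-style membership and order computations in an Abelian group, which are in $\textsf{FL} \cap \FOLL$ by \cite{BKLM}. So the map lies in $\textsf{FL} \cap \FOLL$, and the forward direction is immediate: if $\varphi_{2} = \gamma\varphi_{1}\gamma^{-1}$, then $\Psi_{p}(\varphi_{2}) = \Psi_{p}(\gamma)\Psi_{p}(\varphi_{1})\Psi_{p}(\gamma)^{-1}$, since $\Psi_{p}$ is a homomorphism.

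The main obstacle is the converse. Suppose $\bar\varphi_{1}$ and $\bar\varphi_{2}$ are conjugate in $\text{GL}_{s}(\mathbb{Z}_{p})$, and write $k := |\varphi_{1}| = |\varphi_{2}|$ with $p \nmid k$. We must produce a conjugating automorphism of $A$. Two points need care.

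First, the conjugating matrix must lift to $\Aut(A)$, which has a block structure dictated by the exponents $e_{i}$. To handle this, I would instead work with the graded pieces $V_{t} := (p^{t}A)[p]$, or equivalently the layers $\Omega_{t+1}(A)/\Omega_{t}(A)$ reduced modulo $p$. These are $\mathbb{Z}_{p}[\langle\varphi\rangle]$-modules whose dimensions are the multiplicities of the cyclic factors $\mathbb{Z}_{p^{e}}$. Because $p \nmid k$, Maschke's theorem and coprime-action lifting (Fitting's lemma / Hensel-type lifting for coprime-order actions) show that the $\langle\varphi\rangle$-module $A$ decomposes as a direct sum $\bigoplus_{e} A_{e}$ of $\varphi$-invariant homocyclic components. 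Moreover, the isomorphism type of each $A_{e}$ as a $\mathbb{Z}_{p^{e}}[x]/(x^{k}-1)$-module is determined by its reduction modulo $p$. This uses that $x^{k}-1$ is separable mod $p$, so $\mathbb{Z}_{p^{e}}[x]/(x^{k}-1)$ is a product of unramified Galois rings, and a projective module over such a ring is determined by its reduction. Hence conjugacy of $\varphi_{1}$ and $\varphi_{2}$ is equivalent to conjugacy of the reductions of $\varphi_{1}$ and $\varphi_{2}$ on each homocyclic layer $A_{e}/pA_{e}$.

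Second, this layer-wise reduction must be assembled into a single matrix. I would define $\Psi_{p}$ more carefully as the block-diagonal matrix $\bigoplus_{e} \overline{\varphi|_{e}}$ in $\text{GL}_{s}(\mathbb{Z}_{p})$, where the blocks are indexed by the distinct exponents $e$. A conjugation that respects the blocks corresponds to a conjugation of the whole matrix once distinct blocks are forced to be non-interacting. To force this, I would tag each block by tensoring it with a distinct scalar marker, or equivalently observe that the induced action of $\varphi$ on the layer quotients $p^{t}A/p^{t+1}A$ is canonically defined by the matrix entries modulo $p$ with rows and columns restricted to the indices with $e_{i} > t$. Conjugacy of these restricted matrices for every $t$ is then what is needed.

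If packing everything into a single $s \times s$ matrix turns out to be delicate, I would follow \cite{BQ} and let $\Psi_{p}$ output the tuple of these restricted reductions. I expect this step, verifying that block-respecting conjugacy on each layer suffices and lifts, to be the real difficulty. The computation itself remains in $\textsf{FL} \cap \FOLL$, since it only involves selecting index sets according to the $e_{i}$ and reducing entries modulo $p$.
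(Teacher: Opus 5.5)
\paragraph{Gap: the map you first define fails the converse.}
Your map $\Psi_{p}(\varphi) = (a_{ij} \bmod p)$, i.e.\ the action on $A/pA$, does not satisfy the converse. Neither does the block-diagonal matrix $\bigoplus_{e}\overline{\varphi|_{e}}$ if conjugacy is tested in $\text{GL}_{s}(\mathbb{Z}_{p})$.

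Take $p$ odd, $A = \mathbb{Z}_{p} \times \mathbb{Z}_{p^{2}}$, $\varphi_{1} = \text{diag}(1,-1)$ and $\varphi_{2} = \text{diag}(-1,1)$, both of order $2$. Both constructions send these to $\text{diag}(1,-1)$ and $\text{diag}(-1,1)$, which are conjugate in $\text{GL}_{2}(\mathbb{Z}_{p})$ via the coordinate swap. However, $\varphi_{1}$ acts by $-1$ on the characteristic subgroup $pA \cong \mathbb{Z}_{p}$, while $\varphi_{2}$ acts trivially on it, so they are not conjugate in $\Aut(A)$.

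So your worry about packing is justified, but scalar tagging does not fix it. The tags lie in $\mathbb{Z}_{p}^{\times}$ and can carry eigenvalues of one block onto those of another. For example, with $p=5$, eigenvalue pairs $(2,1)$ and $(4,3)$ (both of order $4$) become the same set $\{2,4\}$ under tags $(1,4)$.

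The way out is not a cleverer single matrix but the convention the paper inherits from Le Gall. The map $\Psi_{p}(U) = \text{diag}([U]_{1}, \ldots, [U]_{t})$ lands in $V(A) = \prod_{i}\text{GL}_{k_{i}}(\mathbb{Z}_{p})$, one block per distinct exponent. Conjugacy of the images is taken \emph{in $V(A)$}, i.e.\ blockwise. This is equivalent to your tuple fallback. As written, though, your proposal leaves the choice of map and of conjugacy open, and asserts properties (computability, the forward direction) of the first map, which is incorrect.

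\paragraph{Comparison with the paper.}
The paper does not reprove the $V(A)$ criterion. It cites \cite{Gal09, BQ}, whose argument uses that $\ker\Psi_{p} = N(A)$ is a $p$-group together with Schur--Zassenhaus. It then only verifies computability, exactly as in your last paragraph: orders of the $g_{i}$ via \cite{BKLM} give the index ranges, then entries are reduced mod $p$ in $\ACz$.

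Your decomposition of $A$ into $\varphi$-invariant homocyclic pieces over the Galois rings $\mathbb{Z}_{p^{e}}[x]/(x^{k}-1)$ is a valid and more self-contained alternative proof of that criterion. It has one unproved link: $A_{e}/pA_{e}$ is defined through a $\varphi$-dependent decomposition, whereas $\Psi_{p}$ reads off the diagonal block for exponent $e$ in the fixed basis $S$. To connect them, consider the section $(p^{e-1}A)[p]/(p^{e}A)[p]$, which has basis $\{p^{e-1}g_{i} : e_{i} = e\}$.
\begin{itemize}
\item $\varphi$ acts on this section by exactly that diagonal block reduced mod $p$. Terms with $e_{i} < e$ vanish. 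If $e_{i} > e_{j} = e$, then $p^{e_{i}-e} \mid a_{ij}$, so those terms fall into $(p^{e}A)[p]$.
\item Multiplication by $p^{e-1}$ induces a $\varphi$-equivariant isomorphism from $A_{e}/pA_{e}$ onto this section.
\end{itemize}
With that link, and with conjugacy read in $V(A)$, your argument goes through.
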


Following the strategy of \cite{Gal09, QST11, BQ}, we will later use Proposition~\ref{prop:LeGall} to reduce isomorphism testing of $\mathcal{H}(\mathcal{A}, \mathcal{E})$ to the case of $\mathcal{H}(\mathcal{E}, \mathcal{E})$. Le Gall \cite{Gal09} established Proposition~\ref{prop:LeGall} in the case when the complement was cyclic. He also showed that this reduction is $\textsf{NC}$-computable relative to $|A|$. Babai and Qiao \cite{BQ} subsequently showed that Le Gall's reduction holds for arbitrary complements.

Our goal in this section is to carefully analyze Le Gall's result, to show that $\Psi_{p}$ is $\textsf{FL} \cap \textsf{FOLL}$ computable. This is a critical step in order to establish the $\textsf{AC}^{3}$ bounds in Theorem~\ref{thm:CoprimeExtensions}. 

We first recall key background concerning the matrix characterization of the automorphism groups of Abelian groups, from Ranum \cite{Ranum} and Le Gall \cite[Section~4.2]{Gal09}. Throughout this section, we will assume that $A$ is an Abelian $p$-group. Suppose that $A \cong \mathbb{Z}_{p^{e_{1}}} \times \cdots \times \mathbb{Z}_{p^{e_{s}}}$, for some positive integers $s$ and $e_{1} \leq e_{2} \leq \cdots \leq e_{s}$. Let $(g_{1}, \ldots, g_{s})$ be a basis for $A$, where $\mathbb{Z}_{p^{e_{i}}} = \langle g_{i} \rangle$.

\begin{definition}[{\cite{Ranum}}] \label{def:Ranum}
Define $M(A)$ as the following subset of $s \times s$ matrices over $\mathbb{Z}$.
\[
M(A) = \{ (u_{ij}) \in \mathbb{Z}^{s \times s} \mid 0 \leq u_{ij} < p^{e_{i}} \text{ and } p^{e_{i} - e_{\min(i,j)}} \text{ divides } u_{ij} \text{ for all } i, j \in [s] \}.
\]
Given $U, U' \in M(A)$, define the multiplication operation $*$ as follows: $U * U'$ is the integer matrix $W$ of size $s \times s$ such that $w_{ij} = (\sum_{k=1}^{s} U_{ij}U'_{kj}) \pmod{p^{e_{i}}}$, for all $i, j \in [s]$. That is, after computing the usual matrix multiplication $UU'$, we reduce each entry modulo $p^{e_{i}}$, where $i$ is the row entry.

Let $R(A) = \{ U \in M(A) : \text{det}(U) \not \equiv 0 \pmod{p} \}$.
\end{definition}

\noindent Ranum established that if $A$ is an Abelian $p$-group, then $(R(A), *) \cong \Aut(A)$ \cite{Ranum}. We now recall key details regarding the structure of $R(A)$. Write $A = H_{1} \times \cdots \times H_{t}$, where $H_{i} = \mathbb{Z}_{p^{f_{i}}}^{k_{i}}$, $f_{1} < f_{2} < \cdots < f_{t}$, and each $k_{i} > 0$. So $f_{i}$ is the $i$th smallest element in the sequence $(e_1, \ldots, e_s)$, and $k_i$ is the multiplicity of $f_i$. Observe that $k_{1} + \cdots + k_{t} = s$. 

Let $U \in M(A)$. We define $t$ blocks $D_{1}(U), \ldots, D_{t}(U)$ as follows. $D_{i}(U)$ is the $k_{i} \times k_{i}$ matrix obtained by selecting the rows and columns with indices from $(k_{1} + \cdots + k_{i-1} + 1)$ to $(k_{1} + \cdots + k_{i-1} + k_{i})$. Observe that $D_{i}(U)$ lies on the diagonal of $U$. For any matrix $U \in M(A)$ and any $i \in [t]$, define $[U]_{i}$ as the matrix obtained by reducing the entries of $D_{i}(U)$ modulo $p$. Thus, $[U_{i}] \in \text{GL}_{k_{i}}(\mathbb{Z}_{p})$. For each $\ell \in [t]$, define the following subset of $M(H_\ell)$:
\[
K_{i}(A) = \{ (u_{ij}) \in M(H_\ell) \mid p \text{ divides } (u_{ij} - \delta_{ij}) \text{ for all } i, j \in [k_{\ell}]\}.
\]
Here, $\delta_{ij}$ is the Kronecker delta function. That is, $\delta_{ij} = 1$ if $i = j$, and $\delta_{ij} = 0$ otherwise. Thus, each diagonal entry of a matrix in $K_{i}(A)$ is of the form $1 + p\lambda_{ii}$, and each non-diagonal entry is of the form $p\lambda_{ij}$. We now recall the following definition from Le Gall.

\begin{definition}[{\cite{Gal09}}] \label{def:Def4.2}
Let $N(A)$ be the following subset of $M(A)$:
\[
N(A) := \{ U \in M(A) : D_{i}(U) \in K_{i}(A) \text{ for each } i \in [t] \}.
\]
We will also consider the following subgroup of $\text{GL}_{s}(\mathbb{Z}_{p})$.
\[
V(A) := \{ M \in \text{GL}_{s}(\mathbb{Z}_p) : V = \text{diag}(V_{1}, \ldots, V_{t}) \text{ with } V_{i} \in \text{GL}_{k_{i}}(\mathbb{Z}_{p}) \text{ for each } i \in [t] \}.
\]

Let $\Psi_{p} : R(A) \to V(A)$ be defined by $\Psi_{p}(U) = \text{diag}([U]_1, \ldots, [U]_t)$. That is, the diagonal blocks of $U$ are reduced modulo $p$, and the remaining entries are set to $0$.
\end{definition}

Le Gall established that $\Psi_{p}$ is a group homomorphism. We are now ready to prove Proposition~\ref{prop:LeGall}.

\begin{proof}[Proof of Proposition~\ref{prop:LeGall}]
We have from Le Gall \cite[Proposition~4.2]{Gal09} and Babai--Qiao \cite{BQ} that $\varphi_{1}, \varphi_{2}$ are conjugate if and only if $\Psi_{p}(\varphi_{1})$ and $\Psi_{p}(\varphi_{2})$ are conjugate. It remains to show that $\Psi_{p}$ is $\textsf{FL} \cap \textsf{FOLL}$-computable.

We first compute $|g_{i}|$ in $\textsf{FL} \cap \textsf{FOLL}$ \cite{BKLM}. Thus, in $\textsf{FL} \cap \textsf{FOLL}$, we can compute $t \leq s$ and $k_{1}, \ldots, k_{t}$ such that $A \cong \mathbb{Z}_{p^{f_{1}}}^{k_{1}} \times \cdots \times \mathbb{Z}_{p^{f_{t}}}^{k_{t}}$. Furthermore, in $\textsf{FL} \cap \textsf{FOLL}$, we can compute for each $i \in [t]$, the indices $(k_{1} + \cdots + k_{i-1} + 1)$ to $(k_{1} + \cdots + k_{i-1} + k_{i})$ corresponding to $D_{i}(\varphi_{1})$ (resp., $D_{i}(\varphi_{2})$). In particular, this allows us to explicitly write down each $D_{i}(\varphi_{1})$ (resp., $D_{i}(\varphi_{2})$). Now as each entry in $\varphi_{1}, \varphi_{2}$ is represented using $O(\log |A|)$ many bits, we may in $\ACz$ reduce the entries of each $D_{i}(\varphi_{1})$ (resp., each $D_{i}(\varphi_{2})$) modulo $p$, and then set the remaining entries of the original matrix to $0$. In total, $\Psi_{p}$ is $\textsf{FL} \cap \textsf{FOLL}$ computable. The result now follows.
\end{proof}

%%%%
\subsection{Isomorphism Testing of $\mathcal{H}(\mathcal{A}, \mathcal{E})$ in Parallel}

In this section, we prove Theorem~\ref{thm:CoprimeExtensions}.

\begin{proof}[Proof of Theorem~\ref{thm:CoprimeExtensions}]
We first use Lemma~\ref{lem:DecomposeCoprime} write $G_{i} = H_{i} \ltimes_{\varphi_i} N_{i}$, where $N_i$ is the direct product of the Abelian normal Sylow subgroups of $G_i$. Now in $\textsf{FL}$, we may compute the Abelian normal Sylow subgroups of $N_i$. Similarly, in $\textsf{L}$, we test whether $H_i$ is elementary Abelian. Finally, in $\textsf{L}$, we may test whether $N_{1} \cong N_{2}$ and $H_{1} \cong H_{2}$ \cite{BKLM,ChattopadhyayToranWagner,CGLWISSAC}. Otherwise, we reject.

For $i \in [2]$, let $A_{i,p} \leq N_{i}$ be the Sylow $p$-subgroup of $N_{i}$ (and hence, $G_{i}$). Let $\varphi_{i, p} : H_{i} \to \Aut(A_{i,p})$ be the projection of $\varphi_{i}$ to $A_{i,p}$. Let $G_{i,p} = H_{i} \ltimes \varphi_{i,p} A_{i,p}$. Qiao, Sarma, and Tang \cite[Section~5.3]{QST11} established that $G_{1} \cong G_{2}$ if and only if, for all primes $p$ dividing $|N_{1}| = |N_{2}|$, $G_{1,p} := H_{1} \ltimes_{\varphi_{1,p}} A_{1,p}$ and $G_{2,p} := H_{2} \ltimes_{\varphi_{2,p}} A_{2,p}$ are isomorphic. We have shown that, in $\textsf{FL}$, we can construct each $G_{1,p}, G_{2,p}$.

By \cite[Proposition~4.5]{JLVWQuasigroups}, we may in $\textsf{AC}^{1}$ compute a basis $\beta$ for $A_{1,p}$, as well as $A_{2,p}$, and thus fix an isomorphism $\phi_{p} : A_{1,p} \cong A_{2,p}$. Let $\Psi_{p}$ be as defined in Definition~\ref{def:Def4.2}. By Proposition~\ref{prop:LeGall} (applied with $\beta$), we have that $\varphi_{1,p}, \varphi_{2,p}$ are conjugate if and only if $\Psi_{p}(\varphi_{1,p})$ and $\Psi_{p}(\varphi_{2,p})$ are conjugate. Furthermore, $\Psi_{p}$ is $\textsf{FL} \cap \textsf{FOLL}$ computable. Thus, given $G_{1,p}$ and $G_{2,p}$, we may in $\textsf{AC}^{1}$ write down $G_{i,p}^{\prime} := H_{1} \ltimes_{\Psi_{p}(\varphi_{i,p})} \mathbb{Z}_{p}^{s_{p}}$ ($i = 1, 2$). As $H_{1} \cong H_{2}$ and $A_{1,p} \cong A_{2,p}$, we have that the following are equivalent: 
\begin{itemize}
\item $G_{1,p} \cong G_{2,p}$,
\item $G_{1,p}' \cong G_{2,p}'$, and
\item $\Psi_{p}(\varphi_{1})$ and $\Psi_{p}(\varphi_{2})$ are conjugate.
\end{itemize}

Thus, it suffices to test for isomorphism between $G_{1,p}' \cong G_{2,p}'$, which is $\textsf{AC}^{3}$-computable using Theorem~\ref{thm:ElemAbelian}. The result now follows.
\end{proof}

%%%%%%%

\section{Isomorphism Testing of Central-Radical Groups in Parallel: When Enumerating $\Aut(Q)$ is Allowed} \label{sec:Section6}

In this section, we will establish the following.

\begin{theorem}[cf. {\cite[Theorem~6.1]{GQCoho}}] \label{thm:GQCohoA}
Let $\mathcal{S}$ be a logspace-computable characteristic subgroup function. Fix functions $d(n) \in \Omega(\log^2 n)$ and $s(n)$. Let $G_1, G_2$ be two groups of order $n$ given by their multiplication tables, and suppose that (i) $\mathcal{S}(G_1) \leq Z(G_1)$ and (ii) $\Aut(G_1/\mathcal{S}(G_1))$ can be listed using an $\mathsf{AC}$ circuit of depth $d(n)$ and size $s(n)$. Then the following hold:
\begin{enumerate}[label=(\alph*)]
\item We can decide isomorphism between $G_1$ and $G_2$ using an $\mathsf{AC}$ circuit of depth $\max\{d(n), \log^{3}(n) \}$ and size $s(n) \cdot n^{O(1)}$. 

\item If furthermore, $\mathcal{S}(G_1)$ is elementary Abelian, then the coset of isomorphisms can be found using an $\textsf{AC}$ circuit of depth $d(n)$ and size $s(n) \cdot n^{O(1)}$.
\end{enumerate}
\end{theorem}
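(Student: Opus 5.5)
We follow the structure of Grochow and Qiao's proof of \cite[Theorem~6.1]{GQCoho}, parallelizing each step. Compute $A_i := \mathcal{S}(G_i)$ in $\textsf{L}$ and check that $A_i \leq Z(G_i)$; form $Q_i := G_i/A_i$ as quotient tables. Test $A_1 \cong A_2$ (Abelian groups, in $\textsf{L}$ \cite{BKLM}). Since $A_i$ is central, the action $\theta_i$ is trivial. By Lemma~\ref{lem:GQMainLemma}, $G_1 \cong G_2$ iff there are $\beta: Q_1 \to Q_2$ and $\alpha : A_1 \to A_2$ with $\alpha \circ f_1 - f_2\circ(\beta\times\beta) \in B^2(Q_1,A_2)$. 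We fix a normalized section $s_i$ by picking coset representatives in $\ACz$ and write $f_i$ in $\ACz$. We also fix one isomorphism $\beta_0 : Q_1 \to Q_2$ if one exists. This is obtained by listing $\Aut(Q_1)$, or equivalently by listing all isomorphisms via the same depth-$d(n)$ procedure. Then we enumerate in parallel all $\beta = \beta_0\gamma$ with $\gamma \in \Aut(Q_1)$. This costs depth $d(n)$ and size $s(n)\cdot n^{O(1)}$.

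For a fixed $\beta$, let $f := f_2\circ(\beta\times\beta)$. It remains to decide whether some $\alpha \in \mathrm{Iso}(A_1,A_2)$ satisfies $[\alpha\circ f_1] = [f]$ in $H^2(Q_1,A_2)$. Since the action is trivial, $B^2$ consists of the maps $b_u(p,q) = u(p)+u(q)-u(pq)$. For part (b), $A_2 \cong \mathbb{Z}_p^k$ with $k \in O(\log n)$, so everything is linear algebra over $\mathbb{Z}_p$. We decompose $f_1 = \sum_j a_j \otimes c_j$ with $c_j : Q_1\times Q_1 \to \mathbb{Z}_p$ coordinate cocycles. The condition becomes: the $k \times (n^2)$ matrix of $f$ lies in $M\cdot F_1 + (\text{row space of } B^2)$ for some $M \in \mathrm{GL}_k(\mathbb{Z}_p)$. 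We pass to $H^2$: we compute a basis of $B^2(Q_1,\mathbb{Z}_p)$ and a complement by rank and Gaussian elimination over $\mathbb{Z}_p$ in $\textsf{NC}^2$ \cite{BorodinParallelMatrix, MulmuleyRank}. We then project the $c_j$ and the coordinates of $f$ to $H^2$, giving $k\times h$ matrices $C_1, C$. The question becomes whether $C = MC_1$ for invertible $M$, that is, whether the two matrices have the same row space in $\mathbb{Z}_p^h$, or more precisely whether there is an invertible $M$ matching them. This is a rank and solvability check in $\textsf{NC}^2$. The full set of solutions $M$ is an affine coset of $\{M : MC_1 = C_1\}$, computable in $\textsf{NC}^2$. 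Combining over all $\gamma$, together with the $u$'s, yields the coset of isomorphisms. The depth is $\max(d(n), O(\log^2 n)) = d(n)$ since $d\in\Omega(\log^2 n)$.

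For part (a), $A$ is an arbitrary Abelian group. We write $A_2 = \bigoplus_p A_{2,p}$ and work prime by prime. For each $p$, we decompose $A_{2,p} \cong \bigoplus \mathbb{Z}_{p^{e}}$ using a basis from \cite[Proposition~4.5]{JLVWQuasigroups}. The cohomology classes become modules over $\mathbb{Z}/p^e$, and the question becomes whether the coordinate tuples of $f$ lie in the orbit of $f_1$ under $R(A)$ (Definition~\ref{def:Ranum}) modulo coboundaries. We plan to reduce this, as Grochow and Qiao do, to solving a linear system over $\mathbb{Z}/p^e\mathbb{Z}$ together with a test that some solution is invertible (determinant nonzero mod $p$). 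Linear systems over $\mathbb{Z}/p^e$ and the related Smith-normal-form-type computations are where we expect depth $O(\log^3 n)$, via the $\textsf{NC}^3$ algorithms for linear algebra over $\mathbb{Z}/m$. This explains the $\max\{d(n),\log^3 n\}$ bound.

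The main obstacle is this last step. Deciding existence of an invertible $\alpha$, not just any linear map, in the non-elementary case is a genuine question. The plan is to exploit that the solution set is an affine submodule and that invertibility is detected by the reductions $[U]_i$ mod $p$ (Le Gall's map $\Psi_p$ from Proposition~\ref{prop:LeGall}). This reduces the invertibility test to a generic-rank question over $\mathbb{Z}_p$ on the diagonal blocks. If that reduction fails, we would fall back on deciding only existence (not the coset), which is exactly what part (a) claims.
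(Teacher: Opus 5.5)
Your part (b) follows the paper's argument. Projecting each coordinate cocycle along a fixed complement of $B^2(Q,\mathbb{Z}_p)$ is exactly the $\Aut(A)$-equivariant projection of Proposition~\ref{prop:6.10}. After that, $\mathrm{GL}_k(\mathbb{Z}_p)$-equivalence is row-space equality in $\mathsf{NC}^2$. The coset is assembled from a compatible $\alpha$ for each $\beta$, the stabilizer of $M_{\pi(f)}$, and the homomorphisms $Q\to A$.

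\textbf{The gap is in part (a).} You correctly isolate the difficulty. For $A=\bigoplus_i \mathbb{Z}/p^{e_i}\mathbb{Z}$, the pairs $(\alpha,u)$ with $\alpha$ an \emph{endomorphism} and $\alpha\circ f_1 - f = b_u$ form an affine module, and one must decide whether it contains an automorphism. You do not resolve this, and the suggested route fails as stated.
\begin{itemize}
\item Le Gall's $\Psi_p$ (Proposition~\ref{prop:LeGall}) concerns conjugacy of automorphisms of order coprime to $p$, which is not the question here.
\item It is true that $U\in M(A)$ lies in $R(A)$ iff every reduced diagonal block $[U]_i$ is invertible over $\mathbb{Z}_p$. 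But that only converts the problem into deciding whether an affine space of matrices over $\mathbb{Z}_p$ contains a nonsingular member.
\item That question is not a (generic-)rank computation. Over $\mathbb{Z}_2$, $\mathrm{diag}(x,x+1)$ has generic rank $2$ yet is singular for both values of $x$. Over a fixed small field the general problem is NP-hard.
\item Your fallback of ``deciding only existence'' is circular, since the existence of an invertible $\alpha$ is exactly what (a) asks you to decide.
\end{itemize}

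\textbf{The missing ingredient} is Grochow--Qiao's characterization, recorded in the paper as Proposition~\ref{prop:GQ6.9}. Some $\alpha\in\Aut(A)$ makes $f_1$ and $f_2^{\alpha}$ cohomologous if and only if $\langle R_1^{(p_i,\mu_i)}, B^2(Q,\mathbb{Z}/p_i^{\mu_i}\mathbb{Z})\rangle = \langle R_2^{(p_i,\mu_i)}, B^2(Q,\mathbb{Z}/p_i^{\mu_i}\mathbb{Z})\rangle$ for every $i$. This is the module analogue of the field fact you used in (b): two $k\times h$ matrices over a field are $\mathrm{GL}_k$-equivalent iff they have the same row space. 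The rescaling and reduction of rows to exponent $\mu_i$ accounts for the mixed orders of the cyclic factors. Crucially, this characterization removes the invertibility constraint altogether.

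With it, the paper's proof of (a) runs as follows. For each $\beta\in\Aut(Q)$ in parallel, form $f_2^{(\mathrm{id},\beta)}$ and take the $\mathsf{AC}^0$ basis of $B^2$ from Proposition~\ref{prop:GQ6.8}. Then test equality of the two $\mathbb{Z}$-spans by solving linear systems over a finite Abelian group, which is in $\mathsf{NC}^3$ \cite{McKenzieCook}. That step, not a search for an invertible matrix, is the source of the $\log^3 n$ term. No $\alpha$ is ever constructed, consistent with (a) claiming only a decision procedure.
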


\begin{remark}
The key reason for the difference in circuit depth, between Theorem~\ref{thm:GQCohoA}(a) and (b), is due to the difference in parallel complexity for solving systems of linear equations. Over arbitrary Abelian groups, solving a system of linear equations is $\mathsf{NC}^{3}$-computable  \cite{McKenzieCook, MulmuleyRank}. In the case of elementary Abelian groups, solving a system of linear equations is $\textsf{NC}^{2}$-computable (see e.g., \cite{BorodinParallelMatrix, MulmuleyRank, BDHMParallelLinearAlgebra}).
\end{remark}

We will now record a couple corollaries of Theorem~\ref{thm:GQCohoA}. Grochow and Levet \cite{GrochowLevetWL} previously established that $\Aut(G/\rad(G))$ can be listed using an $\textsf{SAC}$ (and hence, $\textsf{AC}$) circuit of depth $O(\log n)$ and size $n^{O(\log \log n)}$. Subsequently, Grochow, Johnson, and Levet \cite{GrochowJohnsonLevet} showed that this could be accomplished using a $\mathsf{AC}$ circuit of depth $O(\log \log n)$ and size $n^{O(\log \log n)}$. We thus obtain the following corollary.

\begin{corollary}[cf. {\cite[Theorem~A/Corollary~6.2]{GQCoho}}] \label{cor:6.2}
Isomorphism of central-radical groups, given by their multiplication tables, can be decided by $\textsf{AC}$ circuits of depth $O(\log^3 n)$ and size $n^{O(\log \log n)}$. If furthermore the center is elementary Abelian, then we can decide isomorphism and compute the coset of isomorphisms using $\textsf{AC}$ circuits of depth $O(\log^2 n)$ and size $n^{O(\log \log n)}$.	
\end{corollary}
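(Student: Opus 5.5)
The corollary follows by instantiating Theorem~\ref{thm:GQCohoA} with $\mathcal{S}(G) = \rad(G)$. Its hypotheses hold as follows, and the depth and size then combine directly.

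\textbf{Step 1: $\mathcal{S}$ is an admissible characteristic subgroup function.} The solvable radical is a characteristic subgroup function, as recalled in Section~\ref{sec:GroupExtensions}. For central-radical groups, $\rad(G) = Z(G)$, and $Z(G)$ is computable from the multiplication table in $\ACz$: $g \in Z(G)$ iff $gh = hg$ for all $h$. So, given $G_1, G_2$, we compute $Z(G_i)$ and set $\mathcal{S}(G_i) := Z(G_i)$. It is cleaner to apply Theorem~\ref{thm:GQCohoA} with $\mathcal{S} = Z$ itself, which is logspace-computable and trivially satisfies $\mathcal{S}(G_1) \le Z(G_1)$. We also verify that $G_i$ is central-radical, i.e.\ $\rad(G_i) = Z(G_i)$, equivalently $G_i/Z(G_i)$ is Fitting-free. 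This check can be performed within the claimed resources; for instance, it falls out of the Fitting-free machinery of \cite{GrochowJohnsonLevet} applied to $G_i/Z(G_i)$, or we can test directly that $G_i/Z(G_i)$ has no nontrivial Abelian normal subgroup via minimal normal subgroups. If either group fails the check, we reject or handle it as outside the class.

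\textbf{Step 2: listing $\Aut(Q)$.} Here $Q = G_1/Z(G_1) = G_1/\rad(G_1)$ is Fitting-free. By \cite{GrochowJohnsonLevet}, as quoted just before the corollary, $\Aut(G/\rad(G))$ can be listed by $\textsf{AC}$ circuits of depth $O(\log\log n)$ and size $n^{O(\log\log n)}$. We first write down the multiplication table of $Q$ as a quotient, which is $\ACz$/$\textsf{FL}$-computable. We then take $d(n) = \Theta(\log^2 n)$, which is admissible since $d \in \Omega(\log^2 n)$ and this is at least the actual depth $O(\log\log n)$, together with $s(n) = n^{O(\log\log n)}$.

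\textbf{Step 3: conclude.} Theorem~\ref{thm:GQCohoA}(a) gives depth $\max\{d(n), \log^3 n\} = O(\log^3 n)$ and size $n^{O(\log\log n)} \cdot n^{O(1)} = n^{O(\log\log n)}$. When the center is elementary Abelian, part~(b) gives the coset of isomorphisms with depth $d(n) = O(\log^2 n)$ and the same size.

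The main subtlety I expect is Step~1: making sure the center/radical identification and the central-radical membership test fit within the depth and size budget. A secondary subtlety is that the listing result is stated for the Cayley table of $G/\rad(G)$, which we must first construct from $G$'s table; this is routine.
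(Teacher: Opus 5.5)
Your proposal is correct and follows the paper's route. You instantiate Theorem~\ref{thm:GQCohoA} using the Grochow--Johnson--Levet listing of $\Aut(G/\rad(G))$ in depth $O(\log\log n)$ and size $n^{O(\log\log n)}$, pad $d(n)$ up to $\Theta(\log^2 n)$, and read off depth $O(\log^3 n)$ for (a) and $O(\log^2 n)$ for (b). Taking $\mathcal{S} = Z$, which avoids computing $\rad$ in logspace, and adding the central-radical membership check are harmless refinements the paper leaves implicit.
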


Babai, Codenotti, Grochow, and Qiao \cite{BCGQ} previously established that if $G/\rad(G)$ has $O(\log n / \log \log n)$ minimal normal subgroups, then $\Aut(G/\rad(G))$ has size $\poly(|G/\rad(G)|)$. Grochow and Levet \cite{GrochowLevetWL} established that in this case, we can list $\Aut(G/\rad(G))$ in logspace. Thus, we obtain the following corollary.

\begin{corollary}[cf. {\cite[Corollary~6.3]{GQCoho}}]
Let $G$ and $H$ be central-radical groups of order $n$, given by their multiplication tables. If $G/\rad(G)$ has $O(\log n / \log \log n)$ minimal normal subgroups, then isomorphism between $G$ and $H$ can be decided in $\textsf{AC}^{3}$. If furthermore, the center is elementary Abelian, then we can decide isomorphism and compute the coset of isomorphisms in $\textsf{AC}^{2}$.
\end{corollary}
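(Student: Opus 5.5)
The corollary follows by instantiating Theorem~\ref{thm:GQCohoA} with $\mathcal{S} = \rad$. For a central-radical group, $\rad(G) = Z(G)$, so condition (i) of Theorem~\ref{thm:GQCohoA}, namely $\mathcal{S}(G_1) \leq Z(G_1)$, holds trivially. Since $\rad(G) = Z(G)$, we may equally take $\mathcal{S} = Z$, which is logspace-computable from the multiplication table: $g \in Z(G)$ iff $gh = hg$ for all $h$, an $\textsf{AC}^0$ test. Using $\mathcal{S} = Z$ avoids having to argue that the solvable radical itself is logspace-computable.

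First, in $\textsf{AC}^0$ (or $\textsf{L}$), we will verify that $G_1$ and $G_2$ are central-radical. Concretely, we check that $G_i/Z(G_i)$ has trivial solvable radical, i.e.\ is Fitting-free. Whether this check is needed at all depends on whether the statement presupposes the promise; if it is needed, we will cite the existing $\textsf{AC}^3$ (or better) recognition of Fitting-free groups from \cite{GrochowJohnsonLevet}, or else work under the promise. For condition (ii), $Q_i = G_i/Z(G_i) = G_i/\rad(G_i)$ is Fitting-free. By \cite{GrochowJohnsonLevet}, $\Aut(Q_1)$ can be listed by $\textsf{AC}$ circuits of depth $O(\log\log n)$ and size $n^{O(\log\log n)}$. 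We will pad the depth to $d(n) = \Theta(\log^2 n)$ so as to meet the hypothesis $d(n) \in \Omega(\log^2 n)$, keeping $s(n) = n^{O(\log \log n)}$.

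Applying Theorem~\ref{thm:GQCohoA}(a) then gives depth $\max\{d(n), \log^3 n\} = O(\log^3 n)$ and size $s(n) \cdot n^{O(1)} = n^{O(\log\log n)}$, which is the first claim. When $Z(G_1)$ is elementary Abelian, Theorem~\ref{thm:GQCohoA}(b) gives depth $d(n) = O(\log^2 n)$ and the same size bound, and it also returns the coset of isomorphisms. If $Z(G_2)$ is not elementary Abelian while $Z(G_1)$ is, the groups are non-isomorphic; this can be detected in $\textsf{AC}^0$ by checking element orders, so we first compare the centers and reject on a mismatch.

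The only delicate point is the bookkeeping: Theorem~\ref{thm:GQCohoA} is stated with a single $\mathcal{S}$ and with the listing hypothesis only for $G_1$. We must make sure the listing of $\Aut(Q_1)$ from \cite{GrochowJohnsonLevet} applies, since it requires the input to be Fitting-free, and that the size of the listed automorphism group, $n^{O(\log\log n)}$, fits within $s(n)$. Both follow from the cited result, so no genuine obstacle remains.
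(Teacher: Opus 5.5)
Your argument has a genuine gap. It never uses the hypothesis that $G/\rad(G)$ has $O(\log n/\log\log n)$ minimal normal subgroups, and as a result it proves only the weaker Corollary~\ref{cor:6.2}. With the listing of $\Aut(Q_1)$ from \cite{GrochowJohnsonLevet} you have $s(n) = n^{O(\log\log n)}$. Theorem~\ref{thm:GQCohoA} then gives circuits of depth $O(\log^3 n)$ (resp.\ $O(\log^2 n)$) but of size $s(n)\cdot n^{O(1)} = n^{O(\log\log n)}$. That size is superpolynomial, whereas $\textsf{AC}^{3}$ and $\textsf{AC}^{2}$ require size $n^{O(1)}$. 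So your sentence ``which is the first claim'' is false. Your closing check that $n^{O(\log\log n)}$ ``fits within $s(n)$'' is exactly where the argument falls short: to land in $\textsf{AC}^{3}$ or $\textsf{AC}^{2}$ you need $s(n) \in \poly(n)$.

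The missing ingredient is the one the paper uses, and it is where the minimal-normal-subgroup hypothesis enters. Babai, Codenotti, Grochow, and Qiao \cite{BCGQ} showed that if $Q = G/\rad(G)$ has $O(\log n/\log\log n)$ minimal normal subgroups, then $|\Aut(Q)| \in \poly(|Q|)$. Grochow and Levet \cite{GrochowLevetWL} showed that in this case $\Aut(Q)$ can be listed in logspace. Since $\textsf{L} \subseteq \textsf{AC}^{1}$, this gives polynomial-size circuits of depth $O(\log n)$. Knowing only that $|\Aut(Q)|$ is small would not suffice if you kept the \cite{GrochowJohnsonLevet} procedure, whose stated circuit size is $n^{O(\log\log n)}$; you need a listing procedure of polynomial size. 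Pad this listing to $d(n) = \Theta(\log^2 n)$ and take $s(n) = \poly(n)$ in Theorem~\ref{thm:GQCohoA}. Part (a) then gives depth $O(\log^3 n)$ and polynomial size, i.e.\ $\textsf{AC}^{3}$. Part (b) gives depth $O(\log^2 n)$ and polynomial size, i.e.\ $\textsf{AC}^{2}$, together with the coset of isomorphisms. The rest of your bookkeeping is fine: taking $\mathcal{S} = Z$, padding the depth, and rejecting when exactly one center is elementary Abelian.
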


\subsection{Additional Preliminaries}

We recall from Grochow and Qiao \cite{GQCoho} details on how to work with $2$-cohomology classes algorithmically.  First, as the action is trivial in central extensions, we will drop it from $Z^{2}(Q,A), B^{2}(Q,A)$, and $H^{2}(Q,A)$. Write $A := \mathcal{S}(G)$, and let $A = \prod_{i=1}^{k} \mathbb{Z}/p_{i}^{\mu_{i}}\mathbb{Z}$ be the  decomposition of $A$ into cyclic subgroups.%-- note that we may compute this decomposition in $\mathsf{AC}^{1}$ \cite{JLVWQuasigroups}, together with cyclic generators $e_{i}$ of $\mathbb{Z}/p_{i}^{\mu_{i}}\mathbb{Z}$.

By choosing an arbitrary section $s$, we get a cocycle $f : Q \times Q \to A$. We may view $f$ as a $k \times |Q|^{2}$-size integer matrix, which we denote $M_{f}$. The rows are indexed by $[k]$, and the columns are indexed by $Q \times Q$. For $i \in [k]$ and $(q, q') \in Q \times Q$, the entry $M_{f}[i, (q, q')]$ is the $i$th coordinate of $f(q, q')$ relative to the basis $(e_1, \ldots, e_k)$ modulo $p_{i}^{\mu_{i}}$. 

Under this identification, the set $C^{2}(Q, A)$ is the set of all such matrices. Now $Z^{2}(Q,A)$ is a subgroup of $C^{2}(Q,A)$, under matrix addition. Similarly, $B^{2}(Q,A)$ is a subgroup of $Z^{2}(Q,A)$ under matrix addition. We will use $U_{i}$ to denote the subgroup of $C^{2}(Q,A)$ consisting of matrices whose only non-zero entries are in the $i$th row (so $U_{i} \cong (\mathbb{Z}/p_{i}^{\mu_{i}}\mathbb{Z})^{|Q|^2}$). $\Aut(A)$ acts on $C^{2}(Q,A)$ by left-multiplication, and $\Aut(Q)$ acts on $C^{2}(Q,A)$ by permuting the columns according to the diagonal action of $\Aut(Q)$. Note that the actions of $\Aut(A)$ and $\Aut(Q)$ commute. 

\begin{proposition}[cf. {\cite[Proposition~6.8]{GQCoho}}] \label{prop:GQ6.8}
For any $\mu \geq 1$, a $\mathbb{Z}$-basis of $B^{2}(Q, \mathbb{Z}/p^{\mu}\mathbb{Z})$ can be computed in $\ACz$. Furthermore, a $\mathbb{Z}_{p}$-basis of $B^{2}(Q, \mathbb{Z}_{p})$ can be computed in the same bound.
\end{proposition}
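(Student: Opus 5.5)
We follow Grochow and Qiao's proof of \cite[Proposition~6.8]{GQCoho} and make each step explicit. Since the action is trivial, the coboundary map $\delta : C^{1}(Q, A) \to C^{2}(Q,A)$ sends $u : Q \to A$ to $b_{u}(p,q) = u(p) + u(q) - u(pq)$. The coboundary group is the image of the $\mathbb{Z}$-linear map $\delta$. Because $u$ may be taken normalized ($u(1)=0$, as the constant functions give $b_{u}(p,q)=u(1)$-type contributions that we handle separately), $C^{1}$ has the natural generators $\epsilon_{x}$ for $x \in Q$. Here $\epsilon_{x}$ is the function equal to $1 \in \mathbb{Z}/p^{\mu}\mathbb{Z}$ at $x$ and $0$ elsewhere. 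The images $b_{x} := \delta(\epsilon_{x})$ generate $B^{2}(Q, \mathbb{Z}/p^{\mu}\mathbb{Z})$. Each entry of $b_{x}$ is read off directly from the multiplication table:
\[
b_{x}(p,q) = [p = x] + [q = x] - [pq = x].
\]
Each entry is a value in $\{-1,0,1,2\}$ reduced mod $p^{\mu}$, and it depends on $O(1)$ table lookups and equality tests. So all $|Q|$ vectors $b_{x}$, each of length $|Q|^{2}$, are written down by a constant-depth circuit of size $\poly(n)$.

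The content of the statement is that these generators, or an explicitly described subset of them, form a basis rather than merely a generating set. We would first determine the kernel of $\delta$. Its kernel consists of the homomorphisms $Q \to \mathbb{Z}/p^{\mu}\mathbb{Z}$ together with the constant-at-identity direction. We would then exhibit a complement directly from structural data. For the $\mathbb{Z}_{p}$ case, the plan is to show that $\{b_{x} : x \in T\}$ is linearly independent for a set $T$ obtained by discarding a distinguished set of elements whose span meets $\text{Hom}(Q,\mathbb{Z}_p)$ trivially. If $Q$ is perfect, as in the intended applications, $\text{Hom}(Q, \mathbb{Z}/p^{\mu}\mathbb{Z}) = 0$, so $\{b_{x} : x \in Q \setminus \{1\}\}$ is already a basis, and $b_{1}$ is determined by the normalization. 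In general, we would follow Grochow--Qiao's explicit description of $\text{Hom}(Q,A)$ via $Q/[Q,Q]$. We would pick a transversal-like index set of coordinates that certifies independence by a triangularity argument: order $Q$ and show that the matrix of the chosen $b_{x}$ restricted to columns $(x,\cdot)$ is unitriangular. The chosen set is then defined by a first-order condition on the table, which is $\ACz$.

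The main obstacle is exactly this independence or selection step. In general it looks like a rank computation, which is not in $\ACz$. We must avoid computing rank and instead prove that a combinatorially specified subset is independent. I would try first to use the columns $(x,x)$ and $(x,y)$ with $y$ generic, aiming for a unitriangular submatrix so that independence over $\mathbb{Z}/p^{\mu}\mathbb{Z}$ (hence a $\mathbb{Z}$-basis in the module sense) follows immediately. The $\mathbb{Z}_{p}$ statement is then the case $\mu = 1$ of the same construction.
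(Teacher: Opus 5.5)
Your first paragraph is, in substance, the paper's entire proof. The paper takes $u_q$ to be the indicator function of $q$, sets $f_q=b_{u_q}$, and asserts that these $|Q|$ coboundaries form a basis. It notes that each of the $|Q|^2$ entries costs $O(1)$ additions in $\mathbb{Z}/p^{\mu}\mathbb{Z}$ and one table lookup, and it never argues independence. You are right that independence is not automatic. Your argument for the case $\mathrm{Hom}(Q,\mathbb{Z}/p^{\mu}\mathbb{Z})=0$ (e.g.\ the perfect quotients in Theorem~\ref{thm:MainCohoSec7}) is correct: the kernel of $u\mapsto b_u$ is exactly $\mathrm{Hom}(Q,\mathbb{Z}/p^{\mu}\mathbb{Z})$, so $\{b_x : x\neq 1\}$ is then a basis of the normalized coboundaries. (The kernel has no extra ``constant-at-identity'' direction: $b_u(1,1)=u(1)$, so in particular $b_{\epsilon_1}\neq 0$.)

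For general $Q$, however, your selection step is a genuine gap, and it cannot be closed along the lines you sketch.

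When $\mu\geq 2$, the module $B^{2}(Q,\mathbb{Z}/p^{\mu}\mathbb{Z})\cong C^{1}/\mathrm{Hom}(Q,\mathbb{Z}/p^{\mu}\mathbb{Z})$ need not be free. For $Q=\mathbb{Z}_p$ with $p$ odd and $\mu=2$ it is $(\mathbb{Z}/p^{2}\mathbb{Z})^{p-2}\oplus\mathbb{Z}/p\mathbb{Z}$. Each of the $p-1$ generators $b_x$ has order $p^{2}$, so no subset of them is a basis, either in the free-module sense or as a decomposition into cyclic summands.

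When $\mu=1$, a subset $\{b_x : x\in T\}$ is a basis exactly when the images of the discarded non-identity elements (those outside $T$) form a basis of $Q/[Q,Q]Q^{p}$. Finding such a set requires computing $[Q,Q]Q^{p}$ and a basis of the quotient, and nothing in your sketch makes this a first-order condition on the table.

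Your proposed unitriangular certificate also fails concretely. Since $b_y(x,x)=2[x=y]-[x^{2}=y]$, we get $b_x(x,x)=2$ for every $x\neq 1$, which is a non-unit when $p=2$.

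The way out is to drop the selection step altogether. What your construction (and the paper's) actually yields in $\textsf{AC}^{0}$ is a generating set, and it is a basis precisely when $\mathrm{Hom}(Q,\mathbb{Z}/p^{\mu}\mathbb{Z})=0$. A generating set is all that the later uses need: the $\mathbb{Z}$-spans in Theorem~\ref{thm:GQCohoA}(a), the complement $W_0$ computed by $\textsf{NC}^{2}$ linear algebra in Proposition~\ref{prop:6.10}, and Lemma~\ref{lem:7.5}.
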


\begin{proof}
We will parallelize \cite[Proposition~6.8]{GQCoho}. Let $A = \mathbb{Z}/p^{\mu}\mathbb{Z}$. For $q \in Q$, $q \neq \text{id}$, let $u_{q} : Q \to \mathbb{Z}$ be given by $u_{q}(q') = \delta(q, q')$, where $\delta$ is the Kronecker delta function. Let $f_{q} : Q \times Q \to A$ be the $2$-coboundary based on $u_{q}$. The set $V := \{ f_{q} : q \in Q \}$ then forms a basis for $B^{2}(Q, \mathbb{Z}/p^{\mu}\mathbb{Z})$. There are $|Q|$ basis elements, each of which is constructed by computing its $|Q|^{2}$ values. Each such value can be computed by a constant number of additions in $\mathbb{Z}/p^{\mu}\mathbb{Z}$ and a single  product in $Q$. Both of these steps are $\ACz$-computable. 

Note that if $\mu = 1$, then $V$ is a $\mathbb{Z}_{p}$-basis for $B^{2}(Q, \mathbb{Z}_{p})$. The result now follows.
\end{proof}

Now for a $2$-cochain $f \in C^{2}(Q, A)$ with corresponding $k \times |Q|^2$ matrix $M$, let $R_{i} \leq (\mathbb{Z}/p_{i}^{\mu_{i}}\mathbb{Z})^{|Q|^2}$ be the subgroup generated by the $i$th row of $M$. For $\mu < \mu_{i}$, let $R_{i}^{(\mu)}$ denote the subgroup of $(\mathbb{Z}/p_{i}^{\mu_{i}}\mathbb{Z})^{|Q|^2}$ that is given by taking $R_{i}$ modulo $p^{\mu}$. For $\mu > \mu_{i}$, let $R_{i}^{(\mu)}$ denote the subgroup of $(\mathbb{Z}/p_{i}^{\mu_{i}}\mathbb{Z})^{|Q|^2}$ that is given by multiplying every element of $R_{i}$ by $p^{\mu-\mu_{i}}$. For any prime $q \neq p$, let $R_{i}^{(q,\mu)}$ denote the trivial subgroup. If $q = p$, let $R_{i}^{(q, \mu)} := R_{i}^{(\mu)}$. Let $R^{(p, \mu)} = \langle R_{1}^{(p, \mu)}, \ldots, R_{k}^{(p, \mu)} \rangle$. 

\begin{proposition}[{\cite[Proposition~6.9]{GQCoho}}] \label{prop:GQ6.9}
Let $A = \prod_{i=1}^{k} \mathbb{Z}/p_{i}^{\mu_{i}}\mathbb{Z}$ be an Abelian group (the $p_i$ are primes, not necessarily distinct). Let $f_{1}, f_{2} \in C^{2}(Q, A)$. With the notation as above, there exists $\alpha \in \Aut(A)$ such that $f_{1}$ and $f_{2}^{\alpha}$ are cohomologous if and only if
\[
\langle R_{1}^{(p_{i}, \mu_{i})}, B^{2}(Q, \mathbb{Z}/p_{i}^{\mu_{i}}\mathbb{Z}) \rangle = \langle R_{2}^{(p_{i}, \mu_{i})}, B^{2}(Q, \mathbb{Z}/p_{i}^{\mu_{i}}\mathbb{Z}) \rangle,
\]
for each $i \in [k]$. Here, $\langle \cdot \rangle$ denotes the $\mathbb{Z}$-span ($=$group generated by).
\end{proposition}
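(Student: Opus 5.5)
The plan is to dualize: encode a cochain $f$ by which homomorphic images of it land in each cyclic factor, modulo coboundaries. Throughout, $\operatorname{Hom}(A, \mathbb{Z}/p_i^{\mu_i}\mathbb{Z}) \circ f$ denotes the set of cochains $\xi \circ f$ with $\xi$ a homomorphism.

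\textbf{Step 1.} Fix $i$ and put $C_i := \mathbb{Z}/p_i^{\mu_i}\mathbb{Z}$. I first check that
\[
R^{(p_i,\mu_i)} = \operatorname{Hom}(A, C_i)\circ f \leq C^2(Q, C_i).
\]
Indeed, a homomorphism $\mathbb{Z}/p_j^{\mu_j}\mathbb{Z} \to C_i$ is zero if $p_j \neq p_i$. If $p_j = p_i$ and $\mu_j \geq \mu_i$, it is reduction mod $p_i^{\mu_i}$ followed by a scalar. If $p_j = p_i$ and $\mu_j < \mu_i$, it is multiplication by $p_i^{\mu_i-\mu_j}$ followed by a scalar. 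These three cases are exactly the three cases in the definition of $R_j^{(p_i,\mu_i)}$. Hence the condition of Proposition~\ref{prop:GQ6.9} reads
\[
L_i(f_1) = L_i(f_2), \qquad \text{where } L_i(f) := \operatorname{Hom}(A,C_i)\circ f + B^2(Q,C_i).
\]

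\textbf{Step 2 (forward direction).} Suppose $f_1 = \alpha\circ f_2 + b$ with $b$ a coboundary. Since $\xi \mapsto \xi\circ\alpha$ is a bijection of $\operatorname{Hom}(A,C_i)$, we get $\operatorname{Hom}(A,C_i)\circ f_1 \subseteq \operatorname{Hom}(A,C_i)\circ f_2 + B^2(Q,C_i)$. Composing with $\alpha^{-1}$ gives the reverse inclusion, so $L_i(f_1)=L_i(f_2)$. Both $C^2$ and $B^2$ are additive in the coefficient group, so nothing else is needed here.

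\textbf{Step 3 (converse).} First, both $\operatorname{Aut}(A)$ and the conditions split over the primary components of $A$, so I reduce to $A$ a $p$-group. Write $\hat A=\operatorname{Hom}(A,\mathbb{Q}/\mathbb{Z})$, and for $\ell=1,2$ consider the homomorphism
\[
F_\ell:\hat A \to C^2(Q,\mathbb{Q}/\mathbb{Z})/B^2, \qquad \xi\mapsto[\xi\circ f_\ell].
\]
Homomorphisms into $C_i$ correspond to elements of $\hat A[p^{\mu_i}]$, the $p^{\mu_i}$-torsion, so the hypotheses say $F_1(\hat A[p^{\mu}])=F_2(\hat A[p^{\mu}])$ for every exponent $\mu$ occurring in $A$. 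Since $\alpha\in\operatorname{Aut}(A)$ corresponds to the dual automorphism $\hat\alpha\in\operatorname{Aut}(\hat A)$, the goal becomes: find $\hat\alpha\in\operatorname{Aut}(\hat A)$ with $F_1 = F_2\circ\hat\alpha$.

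For this I first build the natural isomorphism $\hat A/\ker F_1 \cong \operatorname{Im}F_1=\operatorname{Im}F_2 \cong \hat A/\ker F_2$. I then try to lift it to an automorphism of $\hat A$. Concretely, I choose a basis of $\hat A$ adapted layer by layer, by exponent $\mu$. For each basis element $x$ of order $p^{\mu}$, I pick $y\in\hat A[p^{\mu}]$ with $F_2(y)=F_1(x)$; such a $y$ exists by the layered equality. I then send $x\mapsto y$.

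\textbf{Main obstacle.} The difficulty is making this assignment invertible. The chosen $y$'s can be modified by elements of $\ker F_2\cap\hat A[p^{\mu}]$, and I would use this freedom so that the images of the basis elements remain independent of the correct orders. The check is that the reduction mod $p$ of the diagonal blocks, as in Ranum's description $R(A)$ via Definition~\ref{def:Ranum}, is nonsingular. I would argue this by a counting/Nakayama-type argument on each layer $\hat A[p^\mu]/(\hat A[p^{\mu-1}]+p\hat A[p^{\mu+1}])$. Symmetry of the hypothesis in $f_1,f_2$ should give that $\ker F_1$ and $\ker F_2$ have the same layer dimensions, and with that the nonsingularity follows. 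This step is where I expect the real work to lie.
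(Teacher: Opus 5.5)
Your Steps 1 and 2 are fine. The identification $R^{(p_i,\mu_i)}=\operatorname{Hom}(A,C_i)\circ f$ matches the three cases in the definition, and the forward direction follows from $\xi\circ b_u=b_{\xi\circ u}$.

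\textbf{The gap in Step 3.} You pass to $\mathbb{Q}/\mathbb{Z}$ coefficients and declare that the goal ``becomes'' $F_1=F_2\circ\hat\alpha$. That condition is necessary but not sufficient for $f_1$ and $\alpha\circ f_2$ to be cohomologous, because the natural map $C^2(Q,\mathbb{Z}/p^{\mu}\mathbb{Z})/B^2(Q,\mathbb{Z}/p^{\mu}\mathbb{Z})\to C^2(Q,\mathbb{Q}/\mathbb{Z})/B^2(Q,\mathbb{Q}/\mathbb{Z})$ is not injective. If $u\colon Q\to\mathbb{Q}/\mathbb{Z}$ and $b_u$ takes values in $p^{-\mu}\mathbb{Z}/\mathbb{Z}$, then $p^{\mu}u$ is a homomorphism. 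Moreover $b_u\in B^2(Q,\mathbb{Z}/p^{\mu}\mathbb{Z})$ iff that homomorphism lies in $p^{\mu}\operatorname{Hom}(Q,\mathbb{Q}/\mathbb{Z})$. So the kernel is $\operatorname{Hom}(Q,\mathbb{Q}/\mathbb{Z})/p^{\mu}\operatorname{Hom}(Q,\mathbb{Q}/\mathbb{Z})$, which is nonzero whenever $p$ divides $|Q^{ab}|$, where $Q^{ab}=Q/[Q,Q]$.

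Concretely, take $Q=A=\mathbb{Z}/p\mathbb{Z}$, $f_1=0$, and $f_2(x,y)=\lfloor (x+y)/p\rfloor$ for $x,y\in\{0,\dots,p-1\}$ (the cocycle of $\mathbb{Z}/p^2\mathbb{Z}$).
\begin{itemize}
\item Viewed in $\mathbb{Q}/\mathbb{Z}$ via $1\mapsto 1/p$, we have $f_2=b_u$ with $u(x)=x/p^2$, so $F_1=F_2=0$.
\item Hence your weakened hypothesis and your weakened goal (with $\hat\alpha=\mathrm{id}$) both hold.
\item Yet no $\alpha\circ f_2$ is cohomologous to $0$, since $[f_2]\neq 0$ in $H^2(\mathbb{Z}/p\mathbb{Z},\mathbb{Z}/p\mathbb{Z})$.
\end{itemize}

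In general, $[f]\mapsto F$ kills a subgroup of $C^2(Q,A)/B^2(Q,A)$ isomorphic to $\operatorname{Ext}(Q^{ab},A)$. This is exactly the information the hypotheses retain by working modulo $B^2(Q,\mathbb{Z}/p_i^{\mu_i}\mathbb{Z})$ level by level.

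To repair the argument, you must either keep the levels separate or handle the $\operatorname{Ext}(Q^{ab},A)$ part on its own. Keeping levels separate means producing one $\alpha$ with $\xi\circ(f_1-\alpha\circ f_2)\in B^2(Q,\mathbb{Z}/p^{\mu}\mathbb{Z})$ for every $\mu$ and every $\xi\in\operatorname{Hom}(A,\mathbb{Z}/p^{\mu}\mathbb{Z})$. The transition maps between levels are not injective, so they cannot be collapsed into a single target. As written, your reduction is faithful only when $p\nmid|Q^{ab}|$, for example when $Q$ is perfect.

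\textbf{The lifting step.} Even in that faithful case, the step you flag as the main obstacle is the entire content of the converse, and your sketch does not prove it. Put $I:=\operatorname{Im}F_1=\operatorname{Im}F_2$. Dualizing the surjections $F_\ell\colon\hat A\twoheadrightarrow I$ gives two embeddings $\widehat{I}\hookrightarrow A$. Since $\hat A[p^{\mu}]^{\perp}=p^{\mu}A$, your layered condition says exactly that the induced isomorphism between the two copies of $\widehat{I}$ inside $A$ preserves heights in $A$. What you need is that every such height-preserving isomorphism between subgroups extends to an automorphism of $A$.

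This is a genuine extension theorem, not a counting or Nakayama remark. It should be provable Kaplansky-style: adjoin one proper element at a time and check that the relevant relative Ulm invariants agree. Your sketch has two specific problems:
\begin{itemize}
\item Choosing preimages basis element by basis element and then correcting by $\ker F_2\cap\hat A[p^{\mu}]$ gives no control over how different layers interact.
\item Equal layer dimensions of $\ker F_1$ and $\ker F_2$ is a necessary condition, but you have not shown it forces the diagonal blocks in Ranum's description to be invertible.
\end{itemize}
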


\begin{proposition}[cf. {\cite[Proposition~6.10]{GQCoho}}] \label{prop:6.10}
For $A = \mathbb{Z}_{p}^{k}$ and a group $Q$, let $n = |A| \cdot |Q|$. In $\textsf{NC}^{2}$, we can compute an $\Aut(A)$-invariant complement $W$ of $B^{2}(Q,A)$ in $C^{2}(Q,A)$, as well as a $\mathbb{Z}_{p}$-linear projection $\pi : C^{2}(Q,A) \to W$ that commutes with every $\alpha \in \Aut(A)$.
\end{proposition}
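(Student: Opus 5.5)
We follow the construction behind \cite[Proposition~6.10]{GQCoho}. The key observation is that $\Aut(A) = \text{GL}_{k}(\mathbb{Z}_{p})$ acts on $C^{2}(Q,A) \cong \text{Mat}_{k \times |Q|^{2}}(\mathbb{Z}_{p})$ by left multiplication. Moreover, $B^{2}(Q,A)$ is exactly the set of matrices whose rows all lie in $B^{2}(Q,\mathbb{Z}_{p}) \leq \mathbb{Z}_{p}^{|Q|^{2}}$: a coboundary $b_{u}$ with $u = (u_{1},\ldots,u_{k})$ has $i$th row $b_{u_i}$. So $B^{2}(Q,A) = \mathbb{Z}_{p}^{k} \otimes B^{2}(Q,\mathbb{Z}_{p})$, with $\Aut(A)$ acting only on the first tensor factor. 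Consequently, if $W_{0}$ is any $\mathbb{Z}_{p}$-complement of $B^{2}(Q,\mathbb{Z}_{p})$ in $\mathbb{Z}_{p}^{|Q|^{2}}$, then $W := \{ M : \text{every row of } M \text{ lies in } W_{0}\}$ is a complement of $B^{2}(Q,A)$. It is $\Aut(A)$-invariant, because left multiplication takes linear combinations of rows, and $W_{0}$ is a subspace.

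The plan therefore reduces everything to a single scalar problem on row space $\mathbb{Z}_{p}^{N}$, where $N = |Q|^{2} \le n^{2}$:
\begin{enumerate}[label=(\arabic*)]
\item Use Proposition~\ref{prop:GQ6.8} to write down, in $\ACz$, the spanning set $\{f_{q}\}$ of $B^{2}(Q,\mathbb{Z}_{p})$ as row vectors.
\item Extract a basis of $B^{2}(Q,\mathbb{Z}_{p})$ and extend it to a basis of $\mathbb{Z}_{p}^{N}$ by standard unit vectors. Concretely, compute the rank of the matrix whose rows are the $f_q$ followed by $e_1,\ldots,e_N$. Keep a vector iff the rank of the prefix ending at it exceeds the rank of the prefix just before it. All prefix ranks can be computed in parallel in $\textsf{NC}^{2}$ \cite{MulmuleyRank}.
\item Let $W_{0}$ be the span of the unit vectors kept in step~(2).
\item Form the invertible $N \times N$ matrix $P$ whose rows are the chosen basis, invert it in $\textsf{NC}^{2}$ \cite{BorodinParallelMatrix}, and obtain the projection $\pi_{0} : \mathbb{Z}_{p}^{N} \to W_{0}$ along $B^{2}(Q,\mathbb{Z}_{p})$. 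Explicitly, take coordinates $x \mapsto xP^{-1}$, zero out the coordinates belonging to $B^{2}$-basis vectors, and multiply back by $P$. This yields an $N \times N$ matrix $\Pi_{0}$.
\item Define $\pi(M) := M\Pi_{0}$, which applies $\pi_0$ to each row.
\end{enumerate}

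The map $\pi$ is $\mathbb{Z}_p$-linear with image $W$ and kernel $B^{2}(Q,A)$. It commutes with every $\alpha \in \text{GL}_{k}(\mathbb{Z}_{p})$ by associativity, since $\alpha(M\Pi_{0}) = (\alpha M)\Pi_{0}$. As output we return $\Pi_{0}$, together with a basis of $W$ given by the matrices $E_{ij}$ placing the $j$th basis vector of $W_{0}$ in row $i$. Writing these down is $\ACz$ given the basis of $W_0$.

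The main obstacle is step~(2). We must choose the complement deterministically in $\textsf{NC}^{2}$, not in sequential Gaussian-elimination fashion, and the matrix is of size $\poly(n)$ over $\mathbb{Z}_p$ with $p \le n$. The prefix-rank trick handles this using Mulmuley's rank algorithm, which works over arbitrary fields in $\textsf{NC}^{2}$. Field arithmetic in $\mathbb{Z}_p$ on $O(\log n)$-bit numbers is in $\ACz$, so it costs nothing extra. All remaining steps are matrix multiplications or inversions of $\poly(n)$-size matrices, so the whole procedure is in $\textsf{NC}^{2}$.
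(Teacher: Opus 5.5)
Your proposal is correct and takes essentially the same route as the paper: take a complement $W_{0}$ of $B^{2}(Q,\mathbb{Z}_{p})$ in the row space $\mathbb{Z}_{p}^{|Q|^{2}}$, copy it into each of the $k$ rows, and project row by row (your $M \mapsto M\Pi_{0}$ is exactly the paper's $\pi = (\pi_{1},\ldots,\pi_{k})$), so commuting with the left-multiplication action of $\text{GL}_{k}(\mathbb{Z}_{p})$ is immediate. Your prefix-rank basis extension and explicit inversion of $P$ just make concrete the paper's ``solve an appropriate system of linear equations in $\textsf{NC}^{2}$'' step, and treating the vectors from Proposition~\ref{prop:GQ6.8} as a spanning set rather than a basis is, if anything, slightly more careful.
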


\begin{proof}
We parallelize \cite[Proposition~6.10]{GQCoho}. By Proposition~\ref{prop:GQ6.8}, we can compute a basis $V_{0}$ of $B^{2}(Q, \mathbb{Z}_{p})$ in $\ACz$. Let $W_{0}$ be a linear complement of $V_{0}$ in $C^{2}(Q, \mathbb{Z}_{p})$ (which can be thought of as the space of row vectors of length $|Q|^2$). We can compute $W_{0}$ in $\textsf{NC}^{2}$ by solving an appropriate system of linear equations (see e.g., \cite{BorodinParallelMatrix, MulmuleyRank, BDHMParallelLinearAlgebra}). For each $i \in [k]$, let $R_{i}$ denote the subgroup of $C^{2}(Q,A)$ consisting of those matrices whose only non-zero entries are in the $i$th row, and let $B_{i}$ be the copy of $B^{2}(Q, \mathbb{Z}_{p})$ in $R_{i}$. Let $W_{i}$ be the copy of $W_{0}$ in $R_{i}$, and let $\pi_{i} : R_{i} \to W_{i}$ be the projection of $R_{i}$ to $W_{i}$ along $B_{i}$. (As Grochow and Qiao note, here we are identifying each $R_{i}$ with $C^{2}(Q, \mathbb{Z}_{p})$ as in \cite[Proposition~6.5]{GQCoho}). Now define $\pi : \oplus_{i=1}^{k} R_{i} \to \oplus_{i=1}^{k} W_{i}$ by $\pi(x_1, \ldots, x_k) = (\pi_{1}(x_1), \ldots, \pi_{k}(x_{k}))$ to be the projection along $\oplus_{i=1}^{k} B_{i}$. Note that $C^{2}(Q,A) = \oplus_{i=1}^{k} R_{i}$. Grochow and Qiao previously established that we can construct each $\pi_{i}$ ($i \in [k]$), and hence $\pi$, by solving an appropriate system of linear equations. We can compute such a solution in $\textsf{NC}^{2}$ (see e.g., \cite{BorodinParallelMatrix, MulmuleyRank, BDHMParallelLinearAlgebra}).

Grohow and Qiao previously established that $f, g \in Z^{2}(Q,A)$ are cohomologous if and only if $\pi(f) = \pi(g)$. Furthermore, Grochow and Qiao previously established that for any $\alpha \in \Aut(A)$, $\pi \alpha = \alpha \pi$. The result now follows.
\end{proof}

We will also need the following observation from Grochow and Qiao \cite{GQCoho}.

\begin{observation}[{\cite[Observation~5.2]{GQCoho}}] \label{obs:5.2}
Let $X$ be a subgroup of an extension $Y \xhookrightarrow{\iota} \overline{X} \overset{\pi}{\twoheadrightarrow} Z$. Suppose that  $\mathcal{Y} \subseteq \iota(Y)$ generates $\iota(Y) \cap X$, and $\mathcal{Z} \subseteq Z$ generates $\pi(X)$. Furthermore, suppose that for each $z \in \mathcal{Z}$, there exists $x_{z} \in X$ such that $\pi(x_{z}) = z$. Then $\mathcal{Y} \cup \{ x_{z} \in X : z \in \mathcal{Z} \}$ generates $X$.
\end{observation}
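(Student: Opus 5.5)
The statement is Observation~\ref{obs:5.2}, a purely group-theoretic fact. I will prove it by a direct membership argument. Let $Y' = \langle \mathcal{Y} \cup \{x_z : z \in \mathcal{Z}\}\rangle$. Since $\mathcal{Y} \subseteq \iota(Y) \cap X \subseteq X$ and each $x_z \in X$, we have $Y' \leq X$. It remains to show $X \leq Y'$.

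Take an arbitrary $x \in X$. Then $\pi(x) \in \pi(X) = \langle \mathcal{Z} \rangle$, so $\pi(x) = z_1^{\epsilon_1}\cdots z_r^{\epsilon_r}$ for some $z_j \in \mathcal{Z}$ and $\epsilon_j = \pm 1$. Put $w := x_{z_1}^{\epsilon_1}\cdots x_{z_r}^{\epsilon_r} \in Y'$. Because $\pi$ is a homomorphism and $\pi(x_{z_j}) = z_j$, we get $\pi(w) = \pi(x)$. Hence $w^{-1}x \in \ker(\pi) = \iota(Y)$.

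Moreover, $w^{-1}x \in X$, since both $w$ and $x$ lie in $X$. So $w^{-1}x \in \iota(Y) \cap X = \langle \mathcal{Y}\rangle \leq Y'$. Therefore $x = w \cdot (w^{-1}x) \in Y'$. This gives $X = Y'$, as required.

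There is no real obstacle here. The only point needing care is using exactness, $\mathrm{Im}(\iota) = \ker(\pi)$, to land the correction term $w^{-1}x$ in $\iota(Y)$, and then checking that it also lies in $X$, so that the hypothesis on $\mathcal{Y}$ applies.
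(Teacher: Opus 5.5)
Your argument is correct and complete. Lifting a word for $\pi(x)$ in $\mathcal{Z}$ to the corresponding word $w$ in the $x_z$, and then using $\iota(Y)=\ker(\pi)$ together with $w^{-1}x\in X$ to place $w^{-1}x$ in $\iota(Y)\cap X=\langle\mathcal{Y}\rangle$, is the standard proof. The paper itself cites this observation from Grochow and Qiao without giving a proof, so there is no separate argument in the paper to compare against.
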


We will use Observation~\ref{obs:5.2} when $X, Y, Z$ are automorphism groups of other groups.

\noindent We are now ready to prove Theorem~\ref{thm:GQCohoA}(a).

\begin{proof}[Proof of Theorem~\ref{thm:GQCohoA}(a)]
We follow the strategy of \cite[Theorem~6.1]{GQCoho}. We begin by listing $\Aut(Q)$ using an $\mathsf{AC}$ circuit of size $s(n)$ and depth $d(n)$. Choose arbitrary sections of $G, H$, to get a $2$-cocycle $f_{1}$ of $G$ and a $2$-cocycle $f_{2}$ of $H$. By Lemma~\ref{lem:GQMainLemma}, it is sufficient and necessary to test whether there exist $(\alpha, \beta) \in \Aut(A) \times \Aut(Q)$ such that $f_{1}$ and $f_{2}^{(\alpha, \beta)}$ are cohomologous. 

For each $\beta \in \Aut(Q)$, we get $f_{2}' := f_{2}^{(\text{id}, \beta)}$. We first use Proposition~\ref{prop:GQ6.8} to, in $\ACz$, get a basis $V$ for $B^{2}(Q, A)$. Let $M_{1}$ be the matrix representation of $f_{1}$, and $M_{2}$ be the matrix representation for $f_{2}'$. By Proposition~\ref{prop:GQ6.9}, it suffices to determine whether the $\mathbb{Z}$-span of the rows in $M_{1}$ with $V$, is the same as the $\mathbb{Z}$-linear span of the rows of $M_{2}$ with $V$. Checking this condition reduces to solving a system of linear equations over the Abelian group $A$, which is $\mathsf{NC}^{3}$-computable  \cite{McKenzieCook, MulmuleyRank}. The result now follows.
\end{proof}

%%%%%
\subsection{Proof of Theorem~\ref{thm:GQCohoA}(b)}

In this section, we will prove Theorem~\ref{thm:GQCohoA}(b). 

\begin{proof}[Proof of Theorem~\ref{thm:GQCohoA}(b)]
We parallelize the proof from Grochow and Qiao \cite[Section~6.1.2]{GQCoho}. Here, we will assume that the solvable radical is central and elementary Abelian.
\begin{itemize}
\item \textbf{Deciding isomorphism.} This proof is slightly different than Theorem~\ref{thm:GQCohoA}(a), in that it provides the key structure for computing the full coset of isomorphisms. Instead of including a basis of $B^{2}(Q,A)$ among the rows of an extended matrix $M_{f_{j}}$, we use an $\Aut(A)$-invariant projection from $C^{2}(Q,A) \to C^{2}(Q,A)/B^{2}(Q,A)$ (Proposition~\ref{prop:6.10}). We now proceed with the details. 

We list $\Aut(Q)$ using an $\textsf{AC}$ circuit of depth $d(n)$ and size $s(n)$. For $i = 1, 2$, fix an arbitrary section of $Q$ in $G_{i}$ to obtain a $2$-cocycle $f_{i}$. We use Proposition~\ref{prop:6.10} to obtain the following in $\textsf{NC}^{2}$: 
\begin{itemize}
\item a complement $W$ of $B^{2}(Q,A)$ in $C^{2}(Q,A)$ such that $W$ is $\Aut(A)$-invariant; and 
\item a projection $\pi : C^{2}(Q,A) \to W$, such that for every $\alpha \in \Aut(A)$, $\pi \alpha = \alpha \pi$. 
\end{itemize}

By Lemma~\ref{lem:GQMainLemma}, it is necessary and sufficient to test whether there exists an $(\alpha, \beta) \in \Aut(A) \times \Aut(Q)$ such that $\pi(f_{1}) = \pi(f_{2}^{(\alpha, \beta)})$. As every $\alpha \in \Aut(A)$ commutes with both $\pi$ and every $\beta \in \Aut(Q)$, this condition is equivalent to $\pi(f_{1}) = (\pi(f_{2}^{(\text{id}, \beta)}))^{(\alpha, \text{id})}$. That is, we may leave $\alpha$ unspecified until the final step.

For each $\beta \in \Aut(Q)$, compute $f_{1}' = \pi(f_{1})$ and $f_{2}' = \pi(f_{2}^{(\text{id}, \beta)})$. Note that $f_{1}', f_{2}' \in W \leq C^{2}(Q, A)$. It remains to find an $\alpha \in \Aut(A)$ such that $f_{1}' = \alpha^{-1}(f_{2}')$. There exists such an $\alpha \in \Aut(A)$ if and only if the row spans of $M_{f_{1}'}$ and $M_{f_{2}'}$ are the same in $\mathbb{Z}_{p}^{|Q|^{2}}$. This latter task reduces to solving an appropriate system of linear equations, which is $\textsf{NC}^{2}$-computable (see e.g., \cite{BorodinParallelMatrix, MulmuleyRank, BDHMParallelLinearAlgebra}). Lemma~\ref{lem:GQMainLemma} implies that $G_{1} \cong G_{2}$ if and only if the preceding test succeeds for some $\beta \in \Aut(Q)$.

In total, we utilized an $\textsf{AC}$ circuit of depth $d(n)$ and size $s(n) \cdot \poly(n)$, as desired.

\item \textbf{Computing the coset of isomorphisms.} Note that $\Aut(G)$ maps into $\Aut(A) \times \Aut(Q)$. Let $\rho : \Aut(G) \to \Aut(Q) \times \Aut(Q)$ be this homomorphism. We will apply Observation~\ref{obs:5.2} twice: once to split the computation into computing generators for $\text{Im}(\rho) \leq \Aut(A) \times \Aut(Q)$ and $\text{Ker}(\rho)$; and then once to split the computation of $\text{Im}(\rho)$ into its projections onto $\Aut(A)$ and $\Aut(Q)$ respectively. Following the strategy of Grochow and Qiao, we will handle this latter step first.

Using the notation of Observation~\ref{obs:5.2}, we will take $Y = \Aut(A)$ and $Z = \Aut(Q)$. As we can enumerate $\Aut(Q)$, we may in fact take $\mathcal{Z}$ to be the entirety of the projection of $\Aut(G)$ onto $\Aut(Q)$. In order to use Observation~\ref{obs:5.2}, for each automorphism of $Q$ that admits a compatible automorphism of $A$, we need to construct one such automorphism of $A$. This is handled in $\textsf{NC}^{2}$ using parallel linear algebra, as described in the procedure above to decide isomorphism. 

We then also need the automorphisms of $A$ that preserve the given cohomology class $f : Q \times Q \to A$; that is, the automorphisms of $A$ that send $f$ to itself modulo $B^{2}(Q,A)$. Equivalently, we want the stabilizer of the matrix $M_{\pi(f)}$ in $\Aut(A) \cong \text{GL}_{k}(\mathbb{Z}_{p})$. As $\Aut(A)$ acts on each column of a $2$-cochain independently, the stabilizer is the same as the pointwise stabilizer of the \emph{set} of columns of $M_{\pi(f)}$. If the columns of $M_{\pi(f)}$ span $A$, then this stabilizer is trivial, and we are done. More generally, let $B \leq A$ be the space spanned by the columns of $M_{\pi(f)}$. In $\textsf{AC}^{1}$, we may construct a basis $\{ e_{1}, \ldots, e_{k}\}$ for $A$ such that $\{ e_{1}, \ldots, e_{\text{dim}(B)} \}$ restricts to a basis for $B$ \cite{JLVWQuasigroups}. In this basis, the stabilizer of $M_{\pi(f)}$ in $\Aut(A) \cong \text{GL}_{k}(\mathbb{Z}_{p})$ consists of all block matrices of the form
\[
\begin{pmatrix} 
\text{id} & * \\
0 & \nu
\end{pmatrix},
\]
where $\nu \in \text{GL}_{k-\text{dim}(B)}(\mathbb{Z}_{p})$ and $*$ indicates any $(\text{dim}(B)) \times (k-\text{dim}(B))$ matrix. A generating set for this subgroup is $\ACz$-computable, by writing down a standard set of generating matrices for $\text{GL}$ (e.g., elementary matrices). Observation~\ref{obs:5.2} then yields a generating set for $\text{Im}(\rho) \leq \Aut(A) \times \Aut(Q)$. 

Finally, we apply Observation~\ref{obs:5.2}, with $Y = \text{Ker}(\rho)$ and $Z = \text{Im}(\rho)$. We must do the following:
\begin{itemize}
\item For each such generator of $\mathcal{Z}$, we compute its lift in $\Aut(G)$.
\item Compute a set of generators for $\text{Ker}(\rho)$.
\end{itemize}

We will first compute a lift of $z$ in $\Aut(G)$. Fix a section $s : Q \to G$. We will proceed in parallel, for each generator of $\mathcal{Z}$. For clarity, fix a generator $z \in \mathcal{Z}$. Given $(\alpha, \beta) \in \text{Im}(\rho) \leq \Aut(A) \times \Aut(Q)$, every element of $G$ is uniquiely represented as $a \cdot s(q)$ for some $a \in A, q \in Q$. The corresponding automorphism of $G$ then acts by sending $a \cdot s(q) \mapsto \alpha(a) \cdot s(\beta(q))$. It is readily verified that this is indeed an automorphism of $G$.

Now $\text{Ker}(\rho)$ consists of those automorphisms of $G$ that fix $Q$ pointwise and fix $A$ pointwise. In particular, the automorphisms in $\text{Ker}(\rho)$ can only move elements around within their respective cosets modulo $A$. Thus, any automorphism in $\text{Ker}(\rho)$ is determined by a map $\delta : Q \to A$ such that the automorphism is given by sending $a \cdot s(q) \mapsto a \cdot \delta(q) \cdot s(q)$. Grochow and Qiao previously established that this map is indeed a homomorphism. These $|Q|^{2}$ equations form a homogenous system of linear equations in $|Q| \cdot k$ variables over $\mathbb{Z}_{p}$. We can solve such a system in $\textsf{NC}^{2}$ (see e.g., \cite{BorodinParallelMatrix, MulmuleyRank, BDHMParallelLinearAlgebra}), to recover generators for $\text{Ker}(\rho)$. Observation~\ref{obs:5.2} now yields generators for $\Aut(G)$.

In total, we utilized an $\textsf{AC}$ circuit of depth $d(n)$ and size $s(n) \cdot \poly(n)$. The result now follows. \qedhere
\end{itemize}
\end{proof}

%%%%%%%
\section{Isomorphism Testing of Central-Radical Groups in Parallel: When $\Aut(Q)$ is too big}

In this section, we will establish the following theorem.

\begin{theorem}[cf. {\cite[Theorem~C]{GQCoho}}] \label{thm:MainCohoSec7}
Let $G_{1}, G_{2}$ be groups given by their multiplication tables. Suppose that $G_1, G_2$ have central, elementary Abelian radicals. Then isomorphism can be decided, and the coset of isomorphisms found, in $\textsf{AC}^{3}$, if either:
\begin{enumerate}[label=(\alph*)]
\item $G_{1}/\rad(G_1)$ is a direct product of non-Abelian simple groups; or
\item $G_{1}/\rad(G_1)$ is a direct product of perfect groups, each of order $O(1)$.
\end{enumerate}
\end{theorem}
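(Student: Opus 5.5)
We follow the strategy of Grochow and Qiao \cite[Section~7]{GQCoho}, parallelizing each step. First, for $i=1,2$ we compute $A_i := \rad(G_i) = Z(G_i)$ and $Q_i := G_i/A_i$. The center is $\ACz$-computable from the multiplication table. We then verify that $A_i$ is elementary Abelian and that $Q_i$ decomposes as required. For (a), we compute $\Soc(Q_i)$ and its non-Abelian simple factors $T_{i,1},\ldots,T_{i,k}$, using the $\textsf{AC}^{3}$ (in fact lower-depth) machinery for Fitting-free groups of Grochow, Johnson, and Levet \cite{GrochowJohnsonLevet}, and check $Q_i = \Soc(Q_i)$. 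For (b), we compute a fully refined direct product decomposition of $Q_i$ in $\textsf{AC}^{3}$ \cite{JLVWQuasigroups} and check that each factor is perfect of order $O(1)$. In either case we let $\pi_i : G_i \to Q_i$ be the projection and form the subgroups $\widetilde{T}_{i,j} := \pi_i^{-1}(T_{i,j})$, each a central extension of $A_i$ by $T_{i,j}$.

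Next we invoke the Remak--Krull--Schmidt-type reduction \cite[Lemma~7.4]{GQCoho} (Lemma~\ref{lem:7.4}). It says that $G_1 \cong G_2$ if and only if there are an isomorphism $\alpha : A_1 \to A_2$ and a bijection $\sigma \in \text{Sym}(k)$ such that, for every $j$, $\alpha$ extends to an isomorphism $\widetilde{T}_{1,j} \to \widetilde{T}_{2,\sigma(j)}$.

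For each pair $(j,j')$, in parallel, we compute the coset of isomorphisms $\widetilde{T}_{1,j} \to \widetilde{T}_{2,j'}$ and its projection to $\text{Iso}(A_1,A_2)$. We do this with Theorem~\ref{thm:GQCohoA}(b), applied with $\mathcal{S} = Z$. Its hypothesis (ii) holds because $\Aut(T)$ can be listed in polynomial size and low depth: a simple group is $2$-generated, so we can enumerate generating pairs, and in case (b) $|T| = O(1)$. This step therefore costs depth $O(\log^{2} n)$ and polynomial size. By the analysis in that proof, the set of $\alpha \in \Aut(A)\cong \text{GL}(A)$ compatible with a given cohomology class is determined by the row space of the projected cocycle matrix $M_{\pi(f)}$. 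Here $\pi$ is the $\Aut(A)$-equivariant projection of Proposition~\ref{prop:6.10}.

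The main obstacle is the global matching: we need a single $\alpha$ working simultaneously for all $j$ together with a permutation $\sigma$. Following Grochow and Qiao, we encode it as follows. For each $j$, choose the (equivariantly projected) cocycle matrix $M_{1,j}$ of the extension $\widetilde{T}_{1,j}$, canonical up to the $\Aut(T_{1,j})$-action, which we enumerate. Concatenating the columns from all $j$ gives a matrix over $\mathbb{F}_p$ with $\dim A \le \log n$ rows. The existence of $(\alpha,\sigma)$ then becomes an instance of \algprobm{Generalized Code Equivalence} for the code spanned by the rows of this matrix. After quotienting by repeated columns and column multiplicities, the code length is $O(\log |G|)$: the number of factors and the dimension are both $O(\log n)$, and in case (b) each factor contributes $O(1)$ columns. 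The permutation group $S$ is the one permuting blocks of isomorphic factor types, as in Grochow and Qiao. We solve this with Corollary~\ref{cor:GeneralizedCodeEquivalence} in depth $O(\log^2 n\,\poly(\log\log n))$. We handle the freedom of choosing $\beta_j \in \Aut(T_{1,j})$ per block using small \algprobm{Coset Intersection} instances (Lemma~\ref{PermutationGroupsNC}\ref{CosetInt}), exactly where \cite{GQCoho} uses them. We expect the delicate point to be controlling this code length so that it stays $O(\log n)$, and checking that every column-set construction is $\textsf{AC}^{1}$.

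Finally, we obtain the coset of isomorphisms by combining the coset of code equivalences (which yields the compatible $\alpha$ and $\sigma$) with the per-factor cosets from Theorem~\ref{thm:GQCohoA}(b). We lift through Observation~\ref{obs:5.2} as in the proof of Theorem~\ref{thm:GQCohoA}(b), and compute the kernel part by $\textsf{NC}^2$ linear algebra. Every stage has depth $O(\log^{3} n)$ and polynomial size, so the whole procedure is in $\textsf{AC}^{3}$.
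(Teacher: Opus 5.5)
Your outline follows the paper's architecture. You reduce via Lemma~\ref{lem:7.4} to a simultaneous matching under a single $\alpha$, encode that matching as a small code-equivalence instance constrained by block permutations, and solve it with Theorem~\ref{thm:LinearCodeEquivalence} plus \algprobm{Coset Intersection}. Your length bound in case (b) is also fine, since each factor contributes $|T_j|^2=O(1)$ columns.

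\textbf{The gap is in case (a), at the step you flag as delicate.} The facts you cite, $\ell=O(\log n)$ and $\dim A=O(\log n)$, say nothing about how many columns a single non-Abelian simple factor contributes. Its block is a priori indexed by $T_j\times T_j$, or by the coordinates of a complement of $B^2(T_j,A)$ in $C^2(T_j,A)$, and $|T_j|$ is unbounded.
\begin{itemize}
\item ``Quotienting by repeated columns'' is not an argument: you give no reason why the number of distinct columns per block is $O(1)$.
\item If equal columns from different factors are merged, the block structure on which your constraint group $S$ acts is destroyed. The resulting instance is then no longer equivalent to the matching problem.
\end{itemize}
The missing ingredient is the one the paper imports from Grochow and Qiao. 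Because $T_j$ is simple, its relevant cohomology with $\mathbb{F}_p$-coefficients is tiny (bounded Schur multipliers). Consequently, each factor's projected cocycle can be recorded $\Aut(A)$-equivariantly as a $k\times 17$ matrix, giving codes of length $17\ell=O(\log n)$. Without this, Theorem~\ref{thm:LinearCodeEquivalence} cannot be applied.

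\textbf{Your treatment of the per-factor automorphisms $\beta_j$ also does not work as stated.} The complements $W_i$ and projections $\pi_i$ of Proposition~\ref{prop:6.10} are only $\Aut(A)$-equivariant. So in the compressed coordinates, $\Aut(T_j)$ does not in general act by permuting columns. Hence the choice of $\beta_j$ cannot be absorbed into a \algprobm{Coset Intersection} inside $\text{Sym}(m)$.

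The paper handles this differently:
\begin{itemize}
\item It enumerates the polynomially many diagonals $\delta$ in $\textsf{FL}$ (Lemma~\ref{lem:EnumerateDiagonals}).
\item For each $\delta$, it chooses the complements so that isomorphic factors have identified $W_i$. This yields a projection $\pi_\delta$ that commutes with $\Aut(A)\times\prod_i\text{Sym}(\ell_i)$.
\item \algprobm{Coset Intersection} (Lemma~\ref{PermutationGroupsNC}\ref{CosetInt}) is used only to intersect $\text{CodeEq}(M_1,M_2)$ with $\prod_i\text{Sym}(\ell_i)$, followed by an \textsf{OR} over all $\delta$.
\end{itemize}
Once these two ingredients are supplied, your remaining steps agree with the paper: the per-factor use of Theorem~\ref{thm:GQCohoA}, the lifting through Observation~\ref{obs:5.2}, and the $\textsf{NC}^2$ linear algebra for the kernel and the stabilizer in $\Aut(A)$.
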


\subsection{Additional Preliminaries}

We will begin by recalling some additional preliminaries from \cite[Section~7]{GQCoho}. The following lemma from Grochow and Qiao establishes conditions in which the \emph{cohomology splits}, allowing us to restrict attention to the extensions corresponding to the individual direct factors of the quotient. Throughout this section, we will consider central extensions $A\xhookrightarrow{} G_{j} \twoheadrightarrow Q$ ($j = 1, 2$), with $A = Z(G_j)$, and $Q = \prod_{i=1}^{\ell} T_{i}$ with each $T_i$ perfect, centerless, and indecomposable.

\begin{lemma}[{\cite[Lemma~7.4]{GQCoho}}] \label{lem:7.4}
Given two central extensions $A\xhookrightarrow{} G_{j} \twoheadrightarrow Q$ ($j = 1, 2$), with $A = Z(G_j)$, and $Q = \prod_{i=1}^{\ell} T_{i}$ with each $T_i$ perfect, centerless, and indecomposable. Let $U_{j,i}$ be the preimage of $T_{i}$ under the natural projection map $G_{j} \to G_{j}/A$. The extensions $A\xhookrightarrow{} G_{j} \twoheadrightarrow Q$ ($j = 1, 2$) are equivalent if and only if for all $i \in [\ell]$, the extensions $A\xhookrightarrow{} U_{j,i} \twoheadrightarrow T_i$ ($j = 1, 2$) are equivalent.
\end{lemma}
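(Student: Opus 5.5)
We prove Lemma~\ref{lem:7.4} by working with cohomology classes and using that the decomposition $Q=\prod_i T_i$ makes central cohomology split. Fix a section $s_j$ of $G_j \twoheadrightarrow Q$ and let $f_j \in Z^2(Q,A)$ be the associated cocycle (trivial action, since the extensions are central). Restricting $s_j$ to $T_i$ gives a section of $U_{j,i}\twoheadrightarrow T_i$, whose cocycle is the restriction $f_j|_{T_i\times T_i}$. So the forward direction is immediate: if $f_1 - f_2 = b_u$, then restricting to $T_i \times T_i$ gives $f_1|_{T_i} - f_2|_{T_i} = b_{u|_{T_i}}$. The content is therefore the converse. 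We show that the restriction map
\[
H^2(Q,A) \to \prod_{i=1}^{\ell} H^2(T_i, A)
\]
is injective. In fact we aim to show it is an isomorphism.

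\textbf{Step 1 (Künneth-type decomposition).} For central extensions with trivial action, use the universal coefficient theorem:
\[
H^2(Q,A)\cong \mathrm{Hom}(H_2(Q,\mathbb{Z}),A)\oplus \mathrm{Ext}(H_1(Q,\mathbb{Z}),A).
\]
Since each $T_i$ is perfect, $Q$ is perfect, so $H_1(Q,\mathbb{Z})=Q^{\mathrm{ab}}=0$ and the $\mathrm{Ext}$ term vanishes. The same holds for each $T_i$. The Künneth formula for integral homology gives
\[
H_2(T_1\times T_2)\cong H_2(T_1)\oplus H_2(T_2)\oplus \bigl(H_1(T_1)\otimes H_1(T_2)\bigr),
\]
and the last term vanishes because the $T_i$ are perfect. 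By induction, $H_2(Q)\cong \bigoplus_i H_2(T_i)$, with the isomorphism induced by the inclusions $T_i\hookrightarrow Q$. Dualizing, restriction induces $H^2(Q,A)\cong\prod_i H^2(T_i,A)$. Naturality of the universal coefficient sequence ensures the map is exactly restriction of cocycles.

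\textbf{Step 2 (conclude).} If each $f_1|_{T_i}$ and $f_2|_{T_i}$ are cohomologous, then $[f_1]$ and $[f_2]$ have the same image under the injective restriction map, so they are cohomologous. Hence the extensions are equivalent.

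\textbf{Alternative elementary route.} To avoid homological machinery, one can argue with the groups directly. In a central extension, the commutator $[\tilde x,\tilde y]$ depends only on the images $x,y\in Q$. For $x\in T_i$ and $y\in T_{i'}$ with $i\neq i'$, the map $y\mapsto[\tilde x,\tilde y]\in A$ is a homomorphism from $T_{i'}$ to the abelian group $A$. It is therefore trivial because $T_{i'}$ is perfect. Consequently the commutator subgroups $[U_{j,i},U_{j,i}]$ commute pairwise and map onto the $T_i$. One then builds an equivalence $G_1\to G_2$ factor by factor, gluing the given equivalences $U_{1,i}\to U_{2,i}$, which are the identity on $A$.

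\textbf{Main obstacle.} The delicate point is well-definedness of the glued map. Products $\prod_i u_i$ with $u_i\in U_{1,i}$ have overlaps exactly in $A$, and since each equivalence fixes $A$ pointwise, the product map $\prod_i U_{1,i}/\!\sim \;\to G_2$ is a homomorphism. Its well-definedness and multiplicativity rely precisely on the pairwise commutation of the $U_{j,i}$, which is the perfectness argument above. Surjectivity of $\prod_i U_{1,i} \to G_1$ follows because $G_1 = A\cdot\prod_i U_{1,i}$.

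I would present the cohomological argument as the main proof and use the commutator computation as its concrete justification.
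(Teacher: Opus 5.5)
The paper gives no proof of this lemma (it is imported from Grochow--Qiao), so there is no in-text argument to compare against. Judged on its own, your proposal is correct.

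\textbf{The cohomological argument.} It is complete. The forward direction is restriction of cocycles along $T_i\le Q$. For the converse, two facts combine. The universal coefficient theorem, with $H_1(Q,\mathbb{Z})=0$, identifies $H^2(Q,A)$ naturally with $\mathrm{Hom}(H_2(Q,\mathbb{Z}),A)$. K\"unneth splits $H_2(Q,\mathbb{Z})$ as $\bigoplus_i H_2(T_i,\mathbb{Z})$: the degree-$2$ $\mathrm{Tor}$ terms vanish because $H_0=\mathbb{Z}$ is free, and the cross terms $H_1(T_i)\otimes H_1(T_{i'})$ vanish by perfectness. Together these make restriction $H^2(Q,A)\to\prod_i H^2(T_i,A)$ an isomorphism. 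Only perfectness of the $T_i$ and centrality are used. Centerlessness, indecomposability and $A=Z(G_j)$ matter elsewhere (e.g.\ for the wreath-product description of $\Aut(Q)$), not here.

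\textbf{The elementary route.} This is also correct, with one slip to fix. Your homomorphism argument shows $[\tilde x,\tilde y]=1$ for \emph{all} lifts of $x\in T_i$, $y\in T_{i'}$ ($i\neq i'$). So the full preimages $U_{j,i}$ and $U_{j,i'}$ commute elementwise, not merely their commutator subgroups, and the full statement is what your gluing uses; state it that way.

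With it, the multiplication map $\prod_i U_{1,i}\to G_1$ is a surjective homomorphism with kernel $\{(a_i)\in A^{\ell}:\sum_i a_i=0\}$. The map $(u_i)\mapsto\prod_i\phi_i(u_i)$ into $G_2$ is a homomorphism because the $U_{2,i}$ also commute. It kills that kernel because each $\phi_i$ fixes $A$ pointwise. The induced map is the identity on $A$ and on $Q$, hence an isomorphism.

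\textbf{What each route buys.} The elementary route is constructive and delivers directly what the paper uses downstream. Choose sections $s_i$ of $U_{j,i}\to T_i$ and set $s(t_1,\dots,t_\ell)=s_1(t_1)\cdots s_\ell(t_\ell)$; this gives a cocycle $f=\sum_i f_i$ that respects the direct factors. That is the real content behind the remark that $Z^2_{\mathrm{prod}}(Q,A)\neq\emptyset$, and it underlies working in $C^2_{\mathrm{prod}}(Q,A)$.

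The cohomological route additionally gives surjectivity of restriction, so every tuple of factor extensions glues. Extracting product representatives from it needs one extra step. For normalized $f$, the sum of its restrictions is a cocycle with the same restrictions, hence cohomologous to $f$ by injectivity.

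Finally, your closing sentence conflates the two routes. The commutator computation is an independent proof, not a justification of the UCT/K\"unneth step; present it as such.
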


\begin{lemma} \label{lem:DecomposeCentralRadical}
Let $G$ be a group of order $n$. We can decide if (i) $\rad(G) = Z(G)$ is elementary Abelian; and if so, (ii) compute the decomposition of $G/Z(G)$ into a direct product of either non-Abelian simple groups or $O(1)$-size perfect, indecomposable groups in $\mathsf{FL}$. Furthermore, we can group the direct factors by isomorphism type in $\mathsf{FL}$.
\end{lemma}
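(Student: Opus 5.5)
We would prove Lemma~\ref{lem:DecomposeCentralRadical} in three stages, each a logspace computation over the multiplication table. In the first stage we compute $Z(G)$ in $\mathsf{FL}$ (indeed in $\ACz$): an element $z$ is central iff $zg = gz$ for all $g$, a simple universal check. We then verify that $Z(G)$ is elementary Abelian: compute element orders (in $\mathsf{FL}$, \cite{BKLM}) and check that all non-identity central elements have the same prime order $p$.

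Next we must certify $\rad(G) = Z(G)$. Since $Z(G) \leq \rad(G)$, this is equivalent to $Q := G/Z(G)$ having trivial solvable radical, i.e.\ $Q$ being Fitting-free. We form $Q$ implicitly as cosets, testing $gh^{-1}\in Z(G)$ in $\ACz$ and choosing lexicographically least coset representatives in $\mathsf{FL}$. We then check that $Q$ has no non-trivial Abelian normal subgroup. It suffices to check that no minimal normal subgroup of $Q$ is Abelian. For each $x \in Q$, the normal closure $\langle x^{Q}\rangle$ is computable in $\mathsf{FL}$ by reachability in the Cayley graph on conjugates, which is undirected-reachability-style membership as in \cite{BKLM}. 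A minimal normal subgroup is the normal closure of any of its non-trivial elements, so we reject iff some $x \neq 1$ has $\langle x^{Q}\rangle$ Abelian and minimal, i.e.\ equal to the normal closure of each of its non-trivial elements. Each of these tests is a bounded composition of $\mathsf{FL}$ functions, so the whole stage stays in $\mathsf{FL}$.

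The main step is the direct decomposition of $Q$ in the two cases. For case (a), $Q$ is a product of non-Abelian simple groups, so $Q = \Soc(Q)$. The minimal normal subgroups are exactly the $T_i$: the normal closures $\langle x^Q\rangle$ that are minimal and non-Abelian, and for a direct product of non-Abelian simple groups these are simple factors. We list them in $\mathsf{FL}$ by iterating over $x$ and deduplicating. We then verify that $Q$ is their direct product by checking that they pairwise commute and that $\prod |T_i| = |Q|$. Order counting by logspace iteration suffices, since $k \le \log n$ factors.

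For case (b), the indecomposable factors $T_i$ are perfect and centerless; centerless follows because $\rad(Q) = 1$. In this setting the direct decomposition is unique, as the Remak--Krull--Schmidt automorphisms are trivial for centerless groups. Each indecomposable factor is then the normal closure of the minimal normal subgroups it contains. Concretely, we take the minimal normal subgroups $N$ (as in case (a)) and join two of them when they are not separable, i.e.\ when $C_Q(N)$ does not supply a complement. Rather than doing this abstractly, we use the constant size of the factors. For each subset $S\subseteq Q$ of size $O(1)$ (polynomially many), we compute $F = \langle S^{Q}\rangle$ and $C_Q(F)$ in $\mathsf{FL}$. We keep $F$ if $F\cap C_Q(F) = 1$, $F\,C_Q(F) = Q$, $|F| = O(1)$, and $F$ is minimal among such normal direct factors. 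Uniqueness of the decomposition of centerless groups guarantees that these minimal normal direct factors are exactly the $T_i$.

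I expect the main obstacle to be justifying that this minimal-direct-factor characterization is correct and computable within logspace. One concern is the $O(1)$-generation bound. Another is that "minimal among normal direct factors" must be checked against only polynomially many candidates, which holds because each factor has $O(1)$ size and so is generated by $O(1)$ elements. Finally, we group the factors by isomorphism type. In case (b), constant-size groups are compared by brute force over all bijections. In case (a), simple groups are $2$-generated, so we enumerate generator pairs and test the induced map, which is the generator-enumeration approach, in $\mathsf{FL}$ \cite{TangThesis}. All of this is done in parallel for each pair of factors.
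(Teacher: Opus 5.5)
Your proposal is correct. Its skeleton matches the paper's: compute $Z(G)$, check that it is elementary Abelian, certify that $Q=G/Z(G)$ is Fitting-free, split $Q$ in the two cases, then group the factors by brute force or generator enumeration \cite{TangThesis}.

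The difference lies in how the nontrivial steps are done. The paper outsources them:
\begin{itemize}
\item it tests Fitting-freeness in $\textsf{AC}^0$ via \cite[Lemma~7.11]{GrochowJohnsonLevet};
\item it gets the simple factors in case (a) from \cite[Lemma~6.11]{GrochowLevetWL};
\item in case (b) it enumerates $O(1)$-size subsets and checks that the resulting subgroups are perfect, normal, pairwise trivially intersecting, with orders multiplying to $|Q|$.
\end{itemize}
You make the first two steps self-contained using normal closures $\langle x^Q\rangle$ computed by undirected reachability. This suffices for the $\mathsf{FL}$ claim, though it gives up the paper's $\textsf{AC}^0$ bound for the Fitting-free test. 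In case (b) you use a genuinely different criterion: in centerless $Q$, a normal $F$ is a direct factor iff $F\cap C_Q(F)=1$ and $F\,C_Q(F)=Q$. Combined with uniqueness of the Remak decomposition of a centerless group, this makes the minimal such $F$ exactly the $T_i$. That argument explains why the selected subgroups are canonical and indecomposable. The paper's enumeration leaves this implicit: products of several small factors also pass its local tests, and it does not say how the candidate list is pruned.

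There are two small points to tighten.
\begin{itemize}
\item In case (a), ``pairwise commuting normal subgroups with $\prod|T_i|=|Q|$'' does not give a direct decomposition in general; Abelian examples fail. It works here because in a Fitting-free group the set of \emph{all} minimal normal subgroups generates their direct product $\Soc(Q)$. Your test therefore certifies $Q=\Soc(Q)$, which in turn forces each minimal normal subgroup to be simple. You should state this.
\item In case (b), to turn your search into a decision procedure you must also check two things. First, that each kept factor is perfect, since indecomposable Fitting-free groups such as $S_5$ need not be. Second, that the orders of the kept factors multiply to $|Q|$; otherwise some Remak factor exceeds the constant size bound and is never found.
\end{itemize}
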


\begin{proof}
We first compute $Z(G)$ in $\ACz$ by identifying those group elements $g$ which compute with every $x \in G$. As $G$ is given by its multiplication table we may, in $\ACz$, compute the prime divisors of $|Z(G)|$ \cite[Lemma~2.3]{CGLWISSAC}. If more than one prime divides $|Z(G)|$, we reject. So now, let $p$ be the unique prime divisor of $|Z(G)|$. Now $Z(G)$ is elementary Abelian if and only if for each $1 \neq x \in Z(G)$, $|x| = p$. We can verify this condition in $\textsf{L}$ \cite{BKLM, CGLWISSAC}. Next, note that $Z(G) = \text{Rad}(G)$ if and only if $G/\text{Rad}(G)$ has no Abelian normal subgroups. We may write down the multiplication table for $G/Z(G)$ in $\ACz$. We can then decide if $G/Z(G)$ has no Abelian normal subgroups in $\ACz$ using \cite[Lemma~7.11 in arXiv Version]{GrochowJohnsonLevet}. Thus, at this stage, we may now assume that $Z(G) = \text{Rad}(G)$. 
\begin{enumerate}[label=(\alph*)]
\item We now turn to checking whether $G/Z(G)$ decomposes as a direct product of non-Abelian simple groups. We may compute such a decomposition or decide that no such decomposition exists, in $\mathsf{FL}$ \cite[Lemma~6.11]{GrochowLevetWL}. 

\item We now turn to checking whether $G/Z(G)$ decomposes as a direct product of $O(1)$-size perfect groups. As $G/Z(G)$ has at most $\log |G/Z(G)| \leq \log n$ direct factors, we may in $\ACz$, enumerate the possible direct factors (by listing the elements of each proposed direct factor) and test whether each purported direct factor satisfies the group axioms. As each direct factor has bounded size, we may in $\ACz$ test whether the subgroups $T_1, \ldots, T_{\ell}$ we have enumerated are perfect, normal in $G/Z(G)$, and satisfy $T_{i} \cap T_{j} = \emptyset$ for all distinct $i, j$. Furthermore, we may test in $\textsf{L}$ whether $\prod_{i=1}^{\ell} |T_{i}| = |G/Z(G)|$, which ensures that $\prod_{i=1}^{\ell} T_{i} = G/Z(G)$. The total work in this case is computable in $\textsf{FL}$.
\end{enumerate}

Now given the direct factors, we may use the generator-enumeration strategy to decide when two such factors are isomorphic in $\mathsf{FL}$ \cite{TangThesis}. Thus, we may group the direct factors by isomorphism type in $\mathsf{FL}$. The result now follows.
\end{proof}

\begin{definition} \label{def:Diagonal}
Let $Q = \prod_{i=1}^{\ell} T_{i}$. Classify the $T_{i}$'s together and group them by their isomorphism types, identifying $Q = \prod_{i=1}^{r} Q_{i}^{\ell_{i}}$. Then $\Aut(Q) \cong \prod_{i=1}^{r} \Aut(Q_{i}) \wr \text{Sym}(\ell_{i})$. A \emph{diagonal} of $\Aut(Q)$ is an element in $\prod_{i=1}^{r} \Aut(Q_{i})$.
\end{definition}

\begin{lemma} \label{lem:EnumerateDiagonals}
Let $Q$ be as defined in Definition~\ref{def:Diagonal}, and suppose that each $T_{i}$ is $O(1)$-generated. Suppose that we are given the decomposition $Q = \prod_{i=1}^{r} T_{i}^{\ell}$. We can enumerate all diagonals of $Q$ in $\mathsf{FL}$.
\end{lemma}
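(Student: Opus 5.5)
We follow Definition~\ref{def:Diagonal} literally. With the decomposition $Q=\prod_{i=1}^{r}T_i^{\ell}$ given, the $T_i$ play the role of the pairwise non-isomorphic factors $Q_i$. A diagonal is therefore a tuple $(\alpha_1,\ldots,\alpha_r)\in\prod_{i=1}^{r}\Aut(T_i)$, viewed as an element of $\Aut(Q)$. The plan has three steps: first show there are only polynomially many such tuples, then enumerate each $\Aut(T_i)$ in $\mathsf{FL}$, and finally enumerate the product, writing each tuple out as an automorphism of $Q$.

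\textbf{Counting.} Let $d=O(1)$ bound the number of generators of each $T_i$. An automorphism of $T_i$ is determined by the images of a fixed $d$-element generating set, so $|\Aut(T_i)|\le |T_i|^{d}$. Since the $T_i$ ($i\in[r]$) are distinct factors of $Q$, we have $\prod_i |T_i|\le |Q|\le n$. Hence
\[
\Bigl|\prod_{i=1}^{r}\Aut(T_i)\Bigr|\le\Bigl(\prod_{i=1}^{r}|T_i|\Bigr)^{d}\le n^{d}.
\]
Consequently a diagonal can be indexed by an $O(\log n)$-bit counter.

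\textbf{Enumerating each $\Aut(T_i)$.} For each $i$ we work in logspace on one chosen copy of $T_i$ inside $Q$. First we find a generating $d$-tuple $(x_1,\ldots,x_d)$ of $T_i$: we scan the $|T_i|^{d}$ candidate tuples in lexicographic order and test generation in $\mathsf{L}$ \cite{BKLM, TangThesis}. Next, for each candidate image tuple $(y_1,\ldots,y_d)\in T_i^{d}$, we decide whether $x_j\mapsto y_j$ extends to an automorphism, and if so we compute that automorphism. This is the generator-enumeration step in logspace \cite{TangThesis}. One route is to express every element of $T_i$ as a short straight-line word in the $x_j$, for instance through a cube generating sequence of length $O(\log|T_i|)$. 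We then evaluate the same word on the $y_j$ and check in $\mathsf{L}$ that the resulting map is a bijective homomorphism. I expect this to be the main technical point, namely carrying out the extension and the verification in $\mathsf{FL}$ rather than in $\mathsf{FOLL}$ or higher. I would first cite Tang's logspace generator-enumeration directly, since each $T_i$ is $O(1)$-generated.

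\textbf{Enumerating the product.} We run a logspace counter over index vectors $(c_1,\ldots,c_r)$, where $c_i$ ranges over the valid image tuples found for $T_i$. The counter uses $O(\log n)$ bits in total by the counting step. For each index vector we output the map $Q\to Q$ that applies $\alpha_{i}$ (given by $c_i$) to every coordinate lying in a copy of $T_i$. Computing this requires the coordinates of $q\in Q$ with respect to the given decomposition; the coordinate projections are computable in $\mathsf{FL}$ from the decomposition, and the result is reassembled by at most $O(\log n)$ multiplications in $Q$. Every step is logspace, and $\mathsf{FL}$ is closed under composition, so the whole enumeration is in $\mathsf{FL}$.
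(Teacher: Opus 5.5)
Your proposal is correct and is the same generator-enumeration argument the paper uses. The paper's proof just cites Grochow--Qiao together with Tang's logspace generator enumeration, and you spell out the polynomial count and the logspace product enumeration explicitly. Note that the same bound $\prod_{j=1}^{\ell}|\Aut(T_j)|\le |Q|^{d}$, taken over all $\ell$ direct factors, lets your counter also enumerate the full base group $\prod_{i}\Aut(Q_i)^{\ell_i}$, which is how diagonals are actually used in the proof of Theorem~\ref{thm:MainCohoSec7}.
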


\begin{proof}
Grochow and Qiao \cite[Proof of Theorem~7.1]{GQCoho} established that we can enumerate all diagonals of $Q$ using the generator-enumeration technique, which is $\mathsf{FL}$-computable \cite{TangThesis}.
\end{proof}

\begin{lemma}[{\cite[Lemma~7.5]{GQCoho}}] \label{lem:7.5}
Let $A' \times A''  \xhookrightarrow{\iota} G \overset{\pi}{\twoheadrightarrow} Q$ be a central extension of $A' \times A''$ by $Q$. Let $p_{A'} : A' \times A'' \to A'$ be the projection onto $A'$ along $A''$. If there is a $2$-cocycle $f : Q \times Q \to A' \times A''$ such that $p_{A'} \circ f$ is a $2$-coboundary, then $G$ is isomorphic (even equivalent) to $A' \times (G/A')$. 

Furthermore, if $Z(G)$ is elementary Abelian, then $A'$ can be computed in $\textsf{NC}^{2}$ using linear algebra over Abelian groups.
\end{lemma}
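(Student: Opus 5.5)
The statement has two parts: a group-theoretic splitting and an algorithmic claim about computing $A'$. I would treat them separately.

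\textbf{Splitting.} Since the extension is central, $A'\times A''\leq Z(G)$, so $A'$ is normal in $G$. I would first show that $A'$ has a normal complement isomorphic to $G/A'$. Because $p_{A'}\circ f$ is a coboundary, there is $u:Q\to A'$ with
\[
p_{A'}(f(p,q)) = u(p)+u(q)-u(pq)
\]
(the action is trivial). Given a section $s$ with $f=f_s$, I would define the new section $s'(q) = u(q)^{-1}s(q)$. Then $f_{s'}=f_s-b_u$, whose $A'$-component vanishes, so $f_{s'}$ takes values in $A''$. Now set $K:=\{a''\, s'(q) : a''\in A'',\ q\in Q\}$. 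This set is closed under products, since $s'(p)s'(q)=f_{s'}(p,q)s'(pq)$ with $f_{s'}(p,q)\in A''$ and $A''$ central. It is also closed under inverses, since $G$ is finite. So $K$ is a subgroup.

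Next, $K\cap A'=1$ and $|K|=|A''||Q|$, hence $G=A'K$. The subgroup $K$ is normal because $A'$ is central and $K$ is closed under conjugation by $K$ itself. Therefore $G=A'\times K$, and $K\cong G/A'$ via the projection. Equivalence as extensions follows because the identification is the identity on $A'\times A''$ and compatible with $\pi$.

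\textbf{Computation.} Here $Z(G)=A$ is elementary Abelian, $A\cong\mathbb{Z}_p^k$, and I want a maximal $A'$ along which the cocycle is a coboundary. The object to use is the projection $\pi$ of Proposition~\ref{prop:6.10}. First compute $f$ from a section, in $\mathsf{AC}^0$. Then compute $\pi(f)\in W$ in $\mathsf{NC}^2$. Let $B\leq A$ be the span of the columns of $M_{\pi(f)}$ in $\mathbb{Z}_p^k$.

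Because $\pi$ commutes with $\Aut(A)$ (in particular with the linear projections $A\to A'$, which follows by the same argument), $p_{A'}\circ f$ is a coboundary exactly when $p_{A'}(\pi(f))=0$. This holds exactly when the columns lie in $\ker p_{A'}=A''$, that is, when $B\leq A''$.

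So I would proceed as follows:
\begin{itemize}
\item Take $A''=B$.
\item Extend a basis of $B$ to a basis of $A$ using \cite[Proposition~4.5]{JLVWQuasigroups}, or Gaussian elimination in $\mathsf{NC}^2$.
\item Let $A'$ be the span of the added vectors.
\end{itemize}
Each step is rank computation or linear-system solving over $\mathbb{Z}_p$, which is in $\mathsf{NC}^2$ \cite{BorodinParallelMatrix, MulmuleyRank}.

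\textbf{Main obstacle.} The delicate step is justifying that projection along $A''$ commutes with $\pi$. Proposition~\ref{prop:6.10} states commutation only with automorphisms. However, $\pi$ was built coordinatewise, acting on rows independently by the same projection $\pi_i$ on each row. So it commutes with every $\mathbb{Z}_p$-linear map $A\to A$ acting on the left. I would verify this directly from the construction $\pi=\oplus\pi_i$.
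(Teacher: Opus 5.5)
Your proposal is correct. It is more self-contained than the paper, whose argument is only a sketch. The paper cites Grochow--Qiao for the splitting statement. For the complexity bound it just says to substitute Proposition~\ref{prop:GQ6.8} (an $\mathsf{AC}^0$ basis of coboundaries) for \cite[Proposition~6.4]{GQCoho} in their polynomial-time procedure, and to solve the resulting linear systems over $\mathbb{Z}_p$ in $\mathsf{NC}^2$.

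You instead prove the splitting directly. You change the section by $u$ so that $f_{s'}$ lands in $A''$, and take $K=A''s'(Q)$ as a normal complement of $A'$. For the computation you go through the equivariant projection $\pi$ of Proposition~\ref{prop:6.10}: $A''$ is the column span $B$ of $M_{\pi(f)}$, and $A'$ is any complement. The ingredients are the same (coboundary basis plus $\mathsf{NC}^2$ linear algebra over $\mathbb{Z}_p$). What your version adds is that it says exactly which $A'$ is produced. It has maximal dimension, and since $\pi(f_{s'})=\pi(f)$, the residual cocycle matrix of $G/A'$ has full rank over $A''$. That is precisely what the reduction ``we may assume $M_1$ has rank $k$'' in the proof of Theorem~\ref{thm:MainCohoSec7} needs, and the paper's sketch leaves it implicit.

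The price is that you need $\pi$ to commute with the non-invertible idempotent of $A$ with image $A'$ and kernel $A''$. Proposition~\ref{prop:6.10} does not state this, so ``in particular'' is the wrong phrase. Your justification is nonetheless the right one: $\pi$ applies the same projection along $B^2(Q,\mathbb{Z}_p)$ to every row. Hence $\pi(M)=MP$ for a fixed matrix $P$, so $\pi(XM)=X\pi(M)$ for every $k\times k$ matrix $X$. You could also bypass $\pi$, which is closer to the paper's sketch: in a basis adapted to $A'\oplus A''$, $p_{A'}\circ f$ is a coboundary iff the $A'$-rows of $M_f$ lie in $B^2(Q,\mathbb{Z}_p)$.

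Two small points are worth writing out. First, $K\cap A'=1$ uses $s'(1)=f_{s'}(1,1)\in A''$. Second, assuming $f=f_s$ is harmless, because every cocycle cohomologous to $f_s$ equals $f_{s'}$ for some section $s'$.
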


\begin{proof}[Proof Sketch.]
Grochow and Qiao only claimed polynomial-time bounds for computing $A'$. The $\textsf{NC}^{2}$ bound follows from using Proposition~\ref{prop:GQ6.8} in place of \cite[Proposition~6.4]{GQCoho} and known $\textsf{NC}^{2}$ algorithms to solve systems of linear equations over $\mathbb{Z}_{p}$ (see e.g., \cite{BorodinParallelMatrix, MulmuleyRank, BDHMParallelLinearAlgebra}).
\end{proof}

\begin{definition}[{\cite[Proof of Theorem~7.1]{GQCoho}}]
Let $G$ be a group under consideration in Theorem~\ref{thm:MainCohoSec7}. We say that a $2$-cocycle $f_{j} : Q \times Q \to A$ respects the direct factors if there exist $f_{j,i} : T_{i} \times T_{i} \to A$ ($i \in [\ell]$) such that the following condition holds:
\[
f_{j}((p_{1}, \ldots, p_{\ell}), (q_{1}, \ldots, q_{\ell})) = \sum_{i \in [\ell]} f_{j,i}(p_{i}, q_{i}).
\]
Let $Z_{\text{prod}}^{2}(Q,A)$ denote the set of $2$-cocycles respecting the direct factors. Grochow and Qiao \cite[Lemma~7.4]{GQCoho} showed that $Z_{\text{prod}}^{2}(Q,A) \neq \emptyset$. Similarly, we say that a $2$-coboundary $b_{j} : Q \times Q \to A$ respects the direct factors if there exist $b_{j,i} : T_{i} \times T_{i} \to A$ ($i \in [\ell]$) such that the following condition holds:
\[
b_{j}((p_{1}, \ldots, p_{\ell}), (q_{1}, \ldots, q_{\ell})) = \sum_{i \in [\ell]} b_{j,i}(p_{i}, q_{i}).
\]
Let $B_{\text{prod}}^{2}(Q,A)$ be the set of $2$-coboundaries respecting the direct factors. The difference of two cohomologous $2$-coycles in $Z_{\text{prod}}^{2}(Q,A)$ belongs to $B_{\text{prod}}^{2}(Q,A)$. Define $C_{\text{prod}}^{2}(Q,A)$ as the set of $2$-cochains respecting the direct factors. We may view the elements of $C_{\text{prod}}^{2}(Q,A)$ as $k \times \left( \sum_{i \in [\ell]} |T_{i}|^{2} \right)$ matrices, whose rows are indexed by $[k]$ and whose columns are indexed by triples $(i; p, q)$ with $p, q \in T_{i}$. That is, $C_{\text{prod}}^{2}(Q,A) = \bigoplus_{i \in [k]} C^{2}(T_{i}, A)$.
\end{definition}

%%%%
\subsection{Proof of Theorem~\ref{thm:MainCohoSec7}}

We now turn to proving Theorem~\ref{thm:MainCohoSec7}. 

\begin{proof}[Proof of Theorem~\ref{thm:MainCohoSec7}]
We follow the strategy of Grochow and Qiao \cite[Section~7.3]{GQCoho}. We begin by decomposing $G_{j}$ ($j = 1, 2$) as an extension of $A = \rad(G_{i}) = \mathbb{Z}_{p}^{k}$ by $Q = \prod_{i=1}^{\ell} T_{i}$. Fix arbitrary sections $s_{j} : Q \to G_{j}$ ($j = 1, 2$), and let $f_{j}$ be the corresponding $2$-cocycles. We then classify the $T_{i}$'s according to their isomorphism types. Grouping them together by isomorphism type, we have $Q = \prod_{i=1}^{r} Q_{i}^{\ell_{i}}$. By Lemma~\ref{lem:DecomposeCentralRadical}, this step is $\textsf{FL}$-computable. We may now enumerate all diagonals of $\Aut(Q)$ in $\textsf{FL}$ using Lemma~\ref{lem:EnumerateDiagonals}. By Lemma~\ref{lem:GQMainLemma}, $G_{1} \cong G_{2}$ if and only if they are pseudo-congruent extensions of $A$ by $Q$. The extensions are pseudo-congruent if and only if there exists $(\alpha, \beta) \in \Aut(A) \times \Aut(Q)$ such that, after twisting by $(\alpha, \beta)$, the resulting extensions are equivalent. Once we fix such an $(\alpha, \beta) \in \Aut(A) \times \Aut(Q)$, Lemma~\ref{lem:7.4} provides that the problem is reduced to determining the equivalence of $G_{1}|_{T_{i}}$ and $G_{2}|_{T_{i}}$ ($i \in [\ell]$).
\begin{itemize}
\item \textbf{Deciding Isomorphism.} We first show how to decide isomorphism.
\begin{enumerate}[label=(\alph*)]
\item Consider first the case when each $T_{i}$ is non-Abelian simple. In this case, each $T_{i}$ is $2$-generated, we can list $\Aut(T_{i})$ in $\textsf{FL}$ \cite{TangThesis}. By Theorem~\ref{thm:GQCohoA}, we can determine the equivalence type of $G|_{T_{i}}$ in $\textsf{AC}^{2}$. 

Note that every $\beta \in \Aut(Q)$ can be represented as a pair ($\delta, \sigma) \in (\prod_{i=1}^{r}  \Aut(Q_{i})^{\ell_{i}}) \rtimes (\prod_{i=1}^{r} \text{Sym}(\ell_{i}))$. Thus, $Z_{\text{prod}}^{2}(Q,A)$ is an invariant subset of $Z^{2}(Q,A)$ under the actions of both $\Aut(A)$ and $\Aut(Q)$. Similarly, $B_{\text{prod}}^{2}(Q,A)$ is an invariant subset of $B^{2}(Q,A)$ under the actions of both $\Aut(A)$ and $\Aut(Q)$. By Proposition~\ref{prop:6.10}, we can in $\textsf{NC}^{2}$, compute for each $i \in [\ell]$ an $\Aut(A)$-invariant $W_{i}$ such that $C^{2}(T_{i},A) =   B^{2}(T_{i}, A) \oplus W_{i}$, along with a projection $\pi_{i} : C^{2}(T_{i}, A) \to W_{i}$ such that $\pi_{i}$ commutes with the action of $\Aut(A)$. 

Grochow and Qiao showed that each element of $W_{i}$ ($i \in [\ell]$) can be written as a $k \times 17$ matrix, in a manner that is still $\Aut(A)$-equivariant. For the remainder of this proof, we will denote $\pi_{i}$ as the composition of the previous $\pi_{i}$, followed by the mapping onto $k \times 17$ matrices. 

Furthermore, Grochow and Qiao established that for each choice of diagonal $\delta = \prod_{i=1}^{\ell} \Aut(T_{i})$, we may choose the complements $W_{i}, W_{h}$ such that whenever $T_{i} \cong T_{h}$, $\delta$ identifies $W_{i}$ and $W_{h}$. With this choice, we can construct 
\[
\pi_{\delta} = \bigoplus_{i=1}^{\ell} \pi_{i} : C^{2}(Q, A) \to W \leq \text{Mat}_{k \times 17\ell}(\mathbb{Z}_{p}),
\]
for some $\Aut(A)$-invariant $W \leq C_{\text{prod}}^{2}(Q,A)$ such that $C_{\text{prod}}^{2}(Q,A) = B_{\text{prod}}^{2}(Q,A) \oplus W$ and $\pi_{\delta}$ commutes with the action of $\Aut(A) \times \prod_{i=1}^{r} \text{Sym}(\ell_{i})$. For each diagonal $\delta$, it remains to decide whether there exists $(\alpha, \sigma) \in \Aut(A) \times \prod_{i=1}^{r} \text{Sym}(\ell_{i})$ such that $\pi(f_{1}) = \pi(f_{2}^{(\text{id}, \delta, \text{id})})^{(\alpha, \text{id}, \sigma)}$. Note that as $\alpha, \delta$ commute and $\sigma$ is already on the right, we have that $(\alpha, \delta, \sigma) = (\text{id}, \delta, \text{id})(\alpha, \text{id}, \sigma)$. 

Let $M_{1} := M_{\pi(f_{1})}$. We may assume that $M_{1}$ has rank $k$. Otherwise, we may in $\textsf{NC}^{2}$ (using Lemma~\ref{lem:7.5}), split out a direct factor of the center as $A' \times G_{j}/A'$ ($j = 1, 2$). Note that Lemma~\ref{lem:7.5} explicitly returns such an $A'$ as a direct factor of $Z(G_{j})$ ($j = 1, 2$). Grochow and Qiao note that while Lemma~\ref{lem:7.5} is concerned with $Z^{2}(Q,A)$ and $B^{2}(Q,A)$, it is readily adapted to the setting of $Z_{\text{prod}}^{2}(Q,A)$ and $B_{\text{prod}}^{2}(Q,A)$. By the Remak--Krull--Schmidt theorem, we then reduce to testing isomorphism between $G_{1}/A'$ and $G_{2}/A'$, where the desired rank condition holds.  

Now in parallel, for each diagonal $\delta$ of $\prod_{i=1}^{\ell} T_{i}$, we compute $\pi(f_{1})$ and $\pi(f_{2}^{(\text{id}, \delta, \text{id})})$. Let $M_{1} = M_{\pi(f_{1})}$ and $M_{2} = M_{\pi(f_{2}^{(\text{id}, \delta, \text{id})})}$. Again note that we can enumerate all such diagonals in $\textsf{FL}$ (Lemma~\ref{lem:EnumerateDiagonals}). Furthermore, we have established above that $\pi$ can be constructed in $\textsf{NC}^{2}$.

As the action of $\Aut(A)$ on $M_{1}, M_{2}$ is by left multiplication, and the action of $\prod_{i=1}^{r} \text{Sym}(\ell_{i})$ is on the blocks of columns, we treat $M_{1}, M_{2}$ as generators of two $\mathbb{Z}_{p}$-codes of dimension $k$ and length $17\ell$. As $\ell \leq \log n$, we compute the coset of equivalences $\text{CodeEq}(M_{1}, M_{2}) \subseteq \text{Sym}(17\ell)$ using an $\textsf{AC}$ circuit of depth $O((\log^{2} n)(\poly(\log \log n)))$ and size $\poly(n)$ (Theorem~\ref{thm:LinearCodeEquivalence}). We then compute $\text{CodeEq}(M_{1}, M_{2}) \cap \prod_{i=1}^{r} \text{Sym}(\ell_{i})$ in \FOPLL using Lemma~\ref{PermutationGroupsNC}\ref{CosetInt}. If this intersection is non-empty, then we report isomorphic. Using a single $\textsf{OR}$ gate, we return true (report isomorphic) if and only if at least one of these intersections is non-empty (where the \textsf{OR} is taken over all diagonals of $Q$). 

The total work is computable using an $\textsf{AC}$ circuit of dpeth $O((\log^{2} n)(\poly(\log \log n)))$ and size $\poly(n)$. The $\textsf{AC}^{3}$ bound now follows in this case.

\item Consider now the case in which each $T_{i}$ has size at most $D \in O(1)$. This case is handled almost identically as part (a). We will discuss the differences here. Note that in this case, we can determine the isomorphism type of a given $T_{i}$ in $\ACz$, by trying all of the at most $D! \in O(1)$ permutations. Grochow and Qiao establishd that the corresponding linear codes have length $\ell \cdot D^2$, rather than $17\ell$. As $\ell \leq \log n$, we compute the coset of equivalences $\text{CodeEq}(M_{1}, M_{2})$ using an $\textsf{AC}$ circuit of depth $O((\log^{2} n)(\poly(\log \log n)))$ and size $\poly(n)$ (Theorem~\ref{thm:LinearCodeEquivalence}). We then compute $\text{CodeEq}(M_{1}, M_{2}) \cap \prod_{i=1}^{\ell} \text{Sym}(\ell_{i})$ in \FOPLL using Lemma~\ref{PermutationGroupsNC}\ref{CosetInt}. If this intersection is non-empty, then we report isomorphic. Using a single $\textsf{OR}$ gate, we return true (report isomorphic) if and only if at least one of these intersections is non-empty. 

The total work is computable using an $\textsf{AC}$ circuit of dpeth $O((\log^{2} n)(\poly(\log \log n)))$ and size $\poly(n)$. The $\textsf{AC}^{3}$ bound now follows in this case.
\end{enumerate}

\item \textbf{Computing the coset of isomorphisms.} The algorithm is essentially the same for both cases (a) and (b). This is also identical to that of Grochow and Qiao \cite[Section~7.3]{GQCoho}. Here, we use our parallelization of \algprobm{Linear Code Equivalence} (Theorem~\ref{thm:LinearCodeEquivalence}), the \FOPLL implementation for \emph{small} instances of \algprobm{Coset Intersection} (Lemma~\ref{PermutationGroupsNC}\ref{CosetInt}), and parallel linear algebra (see e.g., \cite{BorodinParallelMatrix, MulmuleyRank, BDHMParallelLinearAlgebra}) to obtain improvements. We now proceed to parallelize the work of Grochow and Qiao.

Let $\delta$ be a diagonal such that $\text{CodeEq}(M_{1}, M_{2}) \cap \prod_{i=1}^{r} \text{Sym}(\ell_{i})$ is non-empty. By the above work on deciding isomorphism, we can find such a $\delta$ in $\textsf{AC}^{3}$. We pick a single $\sigma \in \text{CodeEq}(M_{1}, M_{2}) \cap \prod_{i=1}^{r} \text{Sym}(\ell_{i})$. Now we apply $(\delta, \sigma)$ to the columns of $M_{2}$, to obtain a matrix $M_{2}'$. Next, we use parallel linear algebra (see e.g., \cite{BorodinParallelMatrix, MulmuleyRank, BDHMParallelLinearAlgebra}) to, in $\textsf{NC}^{2}$, find a matrix $X$ in $\Aut(A) \cong \text{GL}_{k}(\mathbb{Z}_{p})$ $XM_{1} = M_{2}'$.

We now turn to constructing a generating set for $\Aut(G_1)$. As in the proof of Theorem~\ref{thm:GQCohoA}(b), we can compute the homomorphisms $Q \to A$ in $\textsf{NC}^{2}$ using parallel linear algebra. These constitute the kernel of the map $\Aut(G_1) \to \Aut(A) \times \Aut(Q)$. We now apply Observation~\ref{obs:5.2}, viewing $\Aut(G_1)$ as an extension of $\Aut(A) \times \prod_{i=1}^{r} \Aut(Q_i)^{\ell_{i}}$ by $\prod_{i=1}^{r} \text{Sym}(\ell_{i})$. For each diagonal $\delta \in \prod_{i=1}^{r} \Aut(Q_i)^{\ell_{i}}$, we compute $\text{CodeEq}(M_1, M_1) \cap \prod_{i=1}^{r} \text{Sym}(\ell_{i})$. We can compute $\text{CodeEq}(M_1, M_1)$ using an $\textsf{AC}$ circuit of depth $O((\log^2 n)(\poly(\log \log n)))$ and size $\poly(n)$ (Theorem~\ref{thm:LinearCodeEquivalence}), followed by the intersection in \FOPLL (Lemma~\ref{PermutationGroupsNC}\ref{CosetInt}). The union of these coset intersections generates the image of $\Aut(G_{1})$ in $\prod_{i=1}^{r} \text{Sym}(\ell_i)$. It remains to compute generators of the kernel of the map $\Aut(G_1) \to \prod_{i=1}^{r} \text{Sym}(\ell_i)$, as a subgroup of $\Aut(A) \times \prod_{i=1}^{r} \Aut(Q_i)^{\ell_i}$. 

We will now apply Observation~\ref{obs:5.2} again. Let $\Aut_{0}(G_1)$ denote the subgroup of $\Aut(G_1)$ whose projection onto $\prod_{i=1}^{r} \text{Sym}(\ell_i)$ is trivial. We can determine the projection of $\Aut_{0}(G_1)$ to $\prod_{i=1}^{r} \Aut(Q_{i})^{\ell_i}$ by enumerating over all diagonals, and including only those diagonals $\delta$ for which the matrix associated to $\pi(f_1)$ (which we call $M_{\pi(f_1)}$) and the matrix associated to $\pi(f_{1}^{(\text{id}, \delta, \text{id})})$ (which we call $M_{\pi(f_{1}^{(\text{id}, \delta, \text{id})})}$) have the same rowspan. We may enumerate all such diagonals in $\textsf{FL}$ using Lemma~\ref{lem:EnumerateDiagonals}, and then compute the rowspans in $\textsf{NC}^{2}$ (see e.g., \cite{BorodinParallelMatrix, MulmuleyRank, BDHMParallelLinearAlgebra}). For each such diagonal $\delta$, we obtain a corresponding lift to $\Aut_{0}(G_1)$ by using linear algebra to, in $\textsf{NC}^{2}$, find a matrix $X$ of $\Aut(A) \cong \text{GL}_{k}(\mathbb{Z}_{p})$ such that $XM_{\pi(f_1)} = M_{\pi(f_{1}^{(\text{id}, \delta, \text{id})})}$. 

We finally need to compute generators of $\Aut(A)$ that fixes $\pi(f_1)$. It was established in the proof of Theorem~\ref{thm:GQCohoA}(b) that this step is $\textsf{NC}^{2}$-computable.

The total work is computable in $\textsf{AC}^{3}$, as desired. The result now follows. \qedhere
\end{itemize}
\end{proof}

%%%%%%%
\section{Conclusion}

We exhibited an $\textsf{AC}^{3}$ isomorphism test for the class $\mathcal{H}(\mathcal{A}, \mathcal{E})$ of coprime extensions $H \ltimes N$, where $N$ is Abelian and $H$ is elementary Abelian. Additionally, we showed that in $\textsf{AC}^{3}$, we can decide isomorphism and even compute the coset of isomorphisms between two central, elementary-Abelian radical groups, where $G/\rad(G)$ is a direct product of either non-Abelian simple groups or $O(1)$-size perfect groups. In the process, we showed that isomorphism testing of arbirary central-radical groups is decidable using $\textsf{AC}$ circuits of depth $O(\log^3 n)$ and size $n^{O(\log \log n)}$. Our work leaves open several questions.

\begin{question}
Can the bounds for either Theorem~\ref{thm:MainCoprime} or Theorem~\ref{thm:MainCoho} be improved to $\textsf{L}$?
\end{question}

Isomorphism testing of many natural families of graphs, including graphs of bounded treewidth \cite{ElberfeldSchweitzer} and graphs of bounded genus \cite{ElberfeldKawarabayashiGenus}, are $\textsf{L}$-complete. While \algprobm{GpI} is not $\textsf{L}$-complete under $\textsf{AC}^{0}$-reductions \cite{ChattopadhyayToranWagner, CGLWISSAC}, establishing an upper bound of $\textsf{L}$ for isomorphism testing nonetheless remains a natural target. 

Kantor, Luks, and Mark \cite{KantorLuksMark} exhibited an $\textsf{NC}$ algorithm for \algprobm{Transversal} when $|G| \in m^{\mathrm{polylog}(m)}$. They claimed to have a proof, to appear in future work, that \algprobm{Transversal} was $\textsf{NC}$-computable without imposing restrictions on $|G|$. However, we were unable to find this stronger result in the literature. We thus ask the following.

\begin{question} \label{q:Transversal}
Show that \algprobm{Transversal} belongs to $\textsf{NC}$ for arbitrary $G$.
\end{question}

Resolving Question~\ref{q:Transversal} would yield an \FOPLL bound when our permutation domain has size $O(\log n)$. Consequently, this would decrease the circuit depth for our \algprobm{Linear Code Equivalence} procedure from $O((\log^2 n)(\poly(\log \log n)))$ to $O((\log n)(\poly(\log \log n)))$.

Another bottleneck for improving the circuit depth for isomorphism testing of $\mathcal{H}(\mathcal{A}, \mathcal{E})$ is in factoring polynomials over finite fields. This problem is solvable in $\textsf{NC}^{3}$ \cite{Berlekamp} (see \cite{Eberly}). We ask the following.

\begin{question}
Is it possible to factor polynomials over finite fields in $\textsf{TC}^{2}$? 
\end{question}

\section*{Acknowledgements}

I wish to thank Peter Brooksbank, Edinah Gnang, Joshua A. Grochow, Takunari Miyazaki, and James B. Wilson for helpful discussions, as well as the anonymous referees for helpful feedback. Parts of this work began at Tensors Algebra-Geometry-Applications (TAGA) 2024. I wish to thank Elina Robeva, Christopher Voll, and James B. Wilson for organizing this conference. I was partially supported by CRC 358 Integral Structures in Geometry and Number Theory at Bielefeld and Paderborn, Germany; the Department of Mathematics at Colorado State University; James B. Wilson's NSF grant DMS-2319370; and travel support from the Department of Computer Science at the College of Charleston.

\bibliographystyle{alphaurl}
\bibliography{references}

\end{document}